\newtheorem{theorem}{Theorem}[section]
\newtheorem{lemma}[theorem]{Lemma}
\theoremstyle{definition}
\newtheorem{definition}[theorem]{Definition}
\renewenvironment{proof} {\noindent\textbf{Proof:} } { \qed \medskip}
\newcommand{\reals}{\mathbb{R}}
\newcommand{\ignore}[1]{}
\newcommand{\queen}{\mathcal Q}
\newcommand{\ser}[1]{\mathcal S_{#1}}
\newcommand{\norm}[1]{\left\lVert#1\right\rVert}
\newcommand{\sinn}[1]{\sin \left({#1}\right)}
\newcommand{\coss}[1]{\cos \left({#1}\right)}
\newcommand{\ki}[1]{C_{#1}}
\newcommand{\ci}[1]{\mathcal{C}(#1)}
\newcommand{\li}[1]{\mathcal{L}(#1)}
\newcommand{\pe}[1]{\textsc{PE}$_{#1}$}
\newcommand{\interval}[1]{\mathbb{I}{(#1)}}
\newcommand{\sol}[1]{\mathbb{S}_{#1}}
\newcommand{\acoss}[1]{\cos^{-1} \left({#1}\right)}
\newcommand{\diff}[2]{\frac{d{#1}}{d{#2}}}
\newcommand{\eps}{\epsilon}
\newcommand{\includeFig}[4]{\begin{figure}[htb!] \begin{center} \includegraphics[{#1}]{#3}\caption{\label{#2}#4} \end{center} \end{figure}}
\newcommand{\ID}{\mathcal{D}}
\newcommand{\UC}{\mathcal{U}}
\newcommand{\IT}{\mathcal{T}}
\newcommand{\pair}[2]{\left({#1},\  {#2}\right)}
\begin{document}
\title{\bf 
God Save the Queen
\footnote{This is the full version of the paper with the same title which will appear in the proceedings of the 
9th International Conference on Fun with Algorithms, (FUN'18), June 13--15, 2018, La Maddalena, Maddalena Islands, Italy.
}
}

\author{
Jurek Czyzowicz\footnotemark[1] ~\footnotemark[5] 
\and
Konstantinos Georgiou\footnotemark[2] ~\footnotemark[5] 
\and
Ryan Killick\footnotemark[4] ~\footnotemark[6]
\and 
Evangelos Kranakis\footnotemark[4] ~\footnotemark[5]
\and
Danny Krizanc\footnotemark[7]
\and
Lata Narayanan\footnotemark[8] ~\footnotemark[5] 
\and
Jaroslav Opatrny\footnotemark[8]
\and
Sunil Shende\footnotemark[9]
}

\def\thefootnote{\fnsymbol{footnote}}
\footnotetext[5]{Research supported in part by NSERC of Canada.}
\footnotetext[6]{Research supported in part by the Ontario Graduate Scholarship (OGS) Program.}
\footnotetext[1]{
Universite du Québec en Outaouais, Gatineau, Qu\'{e}bec, Canada, \texttt{jurek.czyzowicz@uqo.ca}
}
\footnotetext[2]{
Dept. of Mathematics, 
Ryerson University, 
Toronto, ON, Canada, \texttt{konstantinos@ryerson.ca}
}
\footnotetext[4]{
School of Computer Science, Carleton University, Ottawa ON, Canada, \texttt{ryankillick,kranakis@scs.carleton.ca}
}
\footnotetext[7]{
Department of Mathematics \& Comp. Sci., Wesleyan University, Middletown, CT, USA, \texttt{dkrizanc@wesleyan.edu}
}
\footnotetext[8]{
Department of Comp. Sci. and Software Eng., Concordia University, Montreal, Qu\'{e}bec,  Canada, \texttt{lata,opatrny@encs.concordia.ca}
}
\footnotetext[9]{
Department of Computer Science, Rutgers University, Camden, USA,
\texttt{shende@camden.rutgers.edu}
}
\maketitle

\begin{abstract}
Queen Daniela of Sardinia is asleep at the center of  a round room at the top of the tower in her castle. She is accompanied by her faithful servant, Eva. Suddenly, they are
awakened by cries of ``Fire''. The room is pitch black and they are disoriented. There is exactly one exit from the room somewhere along its boundary. They
must find it as quickly as possible in order to save the life of the queen. It is known that with two people searching while moving at maximum speed 1 anywhere in the room, the room can be evacuated (i.e., with
both people exiting) in $1 + \frac{2\pi}{3} + \sqrt{3} \approx 4.8264$ time units and this is optimal
%~[Czyzowicz et al., DISC'14],
\cite{CGGKMP}, 
assuming that the first person to find the exit can directly guide the other
person to the exit using her voice. 
Somewhat surprisingly, in this paper we show that if the goal is to save the queen (possibly leaving Eva behind to die in
the fire) there is a slightly better strategy. We prove that this ``priority'' version of evacuation can be solved in time at most $4.81854$. Furthermore, we show that any
strategy for saving the queen requires time at least $3 + \pi/6 + \sqrt{3}/2 \approx 4.3896$ in the worst case. If one or both of the queen's other servants (Biddy and/or Lili) are with her, we show
that the time bounds
can be improved to $3.8327$ for two servants, and $3.3738$ for three servants. Finally we show lower bounds for these cases of $3.6307$  (two servants) and $3.2017$  (three servants). 
%The case of $n\geq4$ servants is discussed in a separate paper. 
The case of $n\geq 4$ is the subject of an independent study by Queen Daniela's Royal Scientific Team~\cite{asymtotic18}. 
\end{abstract}

\section{Introduction}

In traditional search, a group of searchers (modeled as mobile autonomous agents or robots) may collaboratively search for an exit (or target) placed within a given search domain~\cite{ahlswede1987search,alpern2002theory,stone1975theory}. Although the searchers may have differing capabilities (communication, perception, mobility, memory) search algorithms, previously employed, generally make no distinction between them as they usually play identical roles throughout the execution of the search algorithm and with respect to the termination time (with the exception of faulty robots, which also do not contribute to searching). 
In this work we are motivated by real-life safeguarding-type situations where a number of agents have the exclusive role to facilitate the execution of the task by a distinguished entity. More particularly, we introduce and study \textit{Priority Evacuation}, a new form of search , under the wireless communication model, in which the search time of the algorithm is measured by the time it takes a special searcher, called the queen, to reach the exit. The remaining searchers in the group, called servants, are participating in the search but are not required to exit. 
%Instead their goal is to help the queen and as such improve the time it takes her to locate the exit. 

%\textbf{Problem Definition of \textit{Priority Evacuation} (\pe{n}):}
\subsection{Problem Definition of \textit{Priority Evacuation} (\pe{n})}
A target (exit) is hidden in an unknown location on the unit circle. 
The exit can be located by any of the $n+1$ robots (searchers) that walks over it ($n=1,2,3$). Robots share the same coordinate system, start from the center of the circle, and have maximum speed 1. Among them there is a distinguished robot, called the \textit{queen}, and the remaining $n$ robots are referred to as \textit{servants}. 
All servants are known to the queen by their identities. Robots may run asymmetric algorithms, and can communicate their findings wirelessly and instantaneously (each message is composed by an identity and a location). Only the queen is required to be able to receive messages. 
Feasible solutions to this problem are \textit{evacuation algorithms}, i.e. robots' movements (trajectories) that guarantee the finding of the hidden exit. The cost of an evacuation algorithm is the \textit{evacuation time} of the queen, i.e., the worst case total time until the queen reaches the exit. None of the $n$ servants needs to evacuate.

\subsection{Related work}

Related to our work is linear search which refers to search in an infinite line. There have been several interesting studies attempting to optimize the search time which were initiated with the influential works of Bellman~\cite{bellman1963optimal} and Beck~\cite{beck1964linear}. % In these studies the authors are interested to minimize the competitive ratio in a stochastic setting. %, thus proving that time $9d$ is necessary and sufficient to ensure that an exit (or target) situated at an unknown) distance $d$ from the origin is found by the searcher. 
A long list of results followed for numerous variants of the problem, citing which is outside the scope of this work. For a comprehensive study of seminal search-type problems see~\cite{alpern2002theory, Alpern2013}. 

%Moreover, several other works on linear search followed (e.g. see \cite{alpern2002theory,beck1964linear,beck1965more,beck1984linear,beck1970yet,beck1973return,bellman1963optimal}) and more recently the search by a single searcher was studied for different models, e.g., when the turn cost was considered \cite{demaine2006online}, when the bounds on the distance to the target are known in advance \cite{Bose13}, and when the target was moving or for more general linear cost functions \cite{Bose16}.

%ADD PAPERS ON 2D EVACUATION
%To the best of our knowledge, the evacuation problem proposed in this paper is new and has never been investigated in the past. In the sequel we mention related research on search and evacuation in the unit disk model, linear search, as well as search in the plane.

The problem of searching in the 
%two-dimensional 
plane by one or more searchers, has been considered by \cite{baezayates1993searching,BS95}. The unit disk model considered in our present paper is a form of two-dimensional search that was initiated in the work of~\cite{CGGKMP}. In this paper the authors obtained evacuation algorithms in the wireless and face-to-face communication models both for a small number of robots as well optimal asymptotic results for a large number of robots. Additional evacuation algorithms in the face-to-face communication model were subsequently analyzed for two robots in \cite{DBLP:conf/ciac/CzyzowiczGKNOV15} and later in \cite{Watten2017}. 
Other variations of the problem include the case of more than one exit, see \cite{DBLP:conf/icdcn/CzyzowiczDGKM16} and \cite{pattanayak2017evacuating}, triangular and square domains in \cite{DBLP:conf/adhoc-now/CzyzowiczKKNOS15}, robots with different moving speeds \cite{lamprou2016fast}, and evacuation in the presence of crash or byzantine faulty robots \cite{georgioudiskfaulty2017}.

Notably, all relevant previous work in search-type problems considered the objective of minimizing the time it takes either by the first or the last agent to reach the hidden target. In contrast, this paper considers an evacuation (search-type) problem where the completion time is defined with respect to a distinguished mobile agent, the \textit{queen}, while the remaining $n$ servants are not required to evacuate. 
%, which is part of the search robot team (we refer to the remaining robots as \textit{servants}). We call our problem \pe{n}, to emphasize that the role of the $n$ servants is to prioritize the evacuation of the queen. 
Our current focus is to design efficient algorithms for $n=1,2,3$ servants, as well as give strong lower bounds. Notably, the algorithms we propose significantly improve upon evacuation costs induced by naive trajectories, and in fact the trajectories we propose are non-trivial. Our main contribution concerns priority evacuation for each of the cases of $n=1,2,3$ servants, all of which require special treatment. Moreover, all our algorithms are characterized by the fact that the queen does contribute effectively to the search of the hidden item. In sharp contrast, the independent and concurrent work of~\cite{asymtotic18} studies the same problem for $n\geq4$ servants where the queen never contributes to the search. More importantly, the proposed algorithms of~\cite{asymtotic18} admit a unified description and analysis that does not intersect with the current work.

\subsection{Our Results \& Paper Organization}

Section~\ref{sec2} introduces necessary notation and terminology and discusses preliminaries.  Section~\ref{ssec3} is devoted to upper bounds for \pe{n} for $n=1,2,3$ servants (see Subsections~\ref{sec3},~\ref{sec4},~and~\ref{sec5}, respectively). 
All our upper bounds are achieved by fixing optimal parameters for families of parameterized algorithms. 
In Section~\ref{seclbounds} we derive lower bounds for \pe{n}, $n=1,2,3$.
An interesting corollary of our positive results is that priority evacuation with $n=1,2,3$ servants (i.e. with $n+1$ searchers) can be performed strictly faster than ordinary evacuation with $n+1$ robots where all robots have to evacuate. Indeed, an argument found in \cite{CGGKMP} can be adjusted to show that the evacuation problem with $n+1$ robots cannot be solved faster than $ 1+\frac{4\pi}{3(n+1)} + \sqrt{3}$.
 %(even though the bounds are not believed to be tight). 
 Surprisingly, when one needs to evacuate only one designated robot, the task can provably (due to our upper bounds) be executed faster. 
All our results, together with the comparison to the lower bounds of \cite{CGGKMP}, are summarized in Table~\ref{defaultresults}. 
We conclude the paper in Section~\ref{secconclusion} with a discussion of open problems. 
%Upper and lower bounds as well as asymptoyic results for $n\geq 4$ servants can be found in the forthcoming \cite{asymtotic18}. 
%All proofs missing from the main text can be found in the Appendix.
\begin{table}[htp]
\caption{Upper bounds (UB) and lower bounds (LB) for priority evacuation.}
\begin{center}
\begin{tabular}{| c | l|l|l|}
\hline
\# of Servants & UB for \pe{n} & LB for \pe{n} 
& LB for Ordinary Evacuation 
\\
\hline
$n=1$ & $4.8185$ (Theorem~\ref{thm: 1 servant}) & $4.3896$  (Theorem~\ref{thm:lb2})
& 4.826445 (see \cite{CGGKMP})
\\
\hline
$n=2$ & $3.8327$ (Theorem~\ref{thm: 2 servants})& $3.6307$ (Theorem~\ref{thm:lb_n2})
& 4.128314 (see \cite{CGGKMP})\\
\hline
$n=3$ & $3.3738$ (Theorem~\ref{thm: 3 servants better})& $3.2017$ (Theorem~\ref{thm:lb_n3})
& 3.779248 (see \cite{CGGKMP})\\
 \hline
\end{tabular}
\end{center}
\label{defaultresults}
\end{table}%

\section{Notation and Preliminaries}
\label{sec2}

We use $n$ to denote the number of servants, and we set $[n]=\{1,\ldots,n\}$. Queen and servant $i$ will be denoted by $\queen$ and 
$\ser{i}$, respectively, where $i\in [n]$. 
We assume that all robots start from the origin 
$O=(0,0)$ of a unit circle in $\reals^2$. 
As usual, points in $A \in \reals^2$ will be treated, when it is convenient, as vectors from $O$ to $A$, and $\norm{A}$ will denote the euclidean norm of that vector.

\subsection{Problem Reformulation \& Solutions' Description}

Robots' trajectories will be defined by parametric functions $\mathcal{F}(t)=(f(t), g(t))$, where $f,g:\reals\mapsto \reals$ are continuous and piecewise differentiable. 
In particular, search algorithms for all robots will be given by trajectories 
$$
\sol{n}:=\left\{
\queen(t)
, \{\ser{i}(t)\}_{i\in [n]}
\right\},$$
where $\queen(t), \ser{i}(t)$ will denote the position of $\queen$ and $\ser{i}$, respectively, at time $t\geq 0$. 

\begin{definition}[Feasible Trajectories]
We say that trajectories 
$\sol{n}$
are \textit{feasible} for \pe{n} if: 
\begin{enumerate}[(a)]
\item $\queen(0)=\ser{i}(0)=O$, for all $i \in [n]$, 
\item $\queen(t), \{\ser{i}(t)\}_{i\in [n]}$ induce speed-1 trajectories for $\queen, \{\ser{i}\}_{i\in [n]}$ respectively, and
\item there is some time $t_0\geq 1$, such that each point of the unit circle is visited (searched) by at least one robot in the time window $[0,t_0]$. We refer to the smallest such $t_0$ as the \textit{search time} of the circle. 
\end{enumerate}
\end{definition}

Note that feasible trajectories do indeed correspond to robots' movements for \pe{n} in which, eventually the entire circle is searched, and hence the search time is bounded. We will describe all our search/evacuation algorithms as feasible trajectories, and we will assume that once the target is reported, $\queen$ will go directly to the location of the exit. 

For feasible trajectories $\sol{n}$ with search time $t_0$, and for any trajectory $\mathcal{F}(t)$ (either of the queen or of a servant), we denote by $\interval{\mathcal{F}}$ the subinterval of $[0,t_0]$ that contains all $x \in [0,t_0]$ such that $\norm{\mathcal{F}(x)}=1$ (i.e. the robot is on the the circle) and no other robot has been to $\mathcal{F}(x)$ before. Since robots start from the origin, it is immediate that $\interval{\mathcal{F}} \subseteq [1,t_0]$. With this notation in mind, note that the exit can be discovered by some robot $\mathcal{F}$, say at time $x$, only if $x \in \interval{\mathcal{F}}$. In this case, the finding is instantaneously reported, so $\queen$ goes directly to the exit, moving along the corresponding line segment between her current position $\queen(x)$ and the reported position of the exit $\mathcal{F}(x)$. Hence, the total time that $\queen$ needs to evacuate equals 
$$
x+\norm{\queen(x)-\mathcal{F}(x)}. 
$$
Therefore, the \textit{evacuation time} of feasible trajectories $\sol{n}$ to \pe{n} is given by expression 
$$
\max_{\mathcal{F} \in \sol{n}}
\sup_{x \in \interval{\mathcal{F}}}
\left\{
x+\norm{\queen(x)-\mathcal{F}(x)}
\right\}
.$$
Notice that for ``non-degenerate'' search algorithms for which the last point on the circle is not searched by $\queen$ alone, the previous maximum can be simply computed over the servants, i.e the evacuation cost will be 
\begin{equation}\label{equa: total cost}
\max_{i\in [n]}
\sup_{x \in \interval{\ser{i}}}
\left\{
x+\norm{\queen(x)-\ser{i}(x)}
\right\}.
\end{equation}
In other words, we can restate \pe{n}\ as the problem of determining feasible trajectories $\sol{n}$ so as to minimize~\eqref{equa: total cost}.

\subsection{Useful Trajectories' Components}

Feasible trajectories induce, by definition, robots that are moving at (maximum) speed 1. 
The speed restriction will be ensured by the next condition. 

\begin{lemma}\label{lem: unit speed}
An object following trajectory $\mathcal{F}(t)=(f(t), g(t))$ has unit speed if and only if
$$
\left(f'(t)\right)^2 + \left(g'(t)\right)^2 =1, ~~\forall t\geq0.
$$
\end{lemma}

\begin{proof}%[Proof of Lemma~\ref{lem: unit speed}]
For any $t\geq 0$, the velocity of $\mathcal F$ is given by $\mathcal{F}'(t)=(d f(t)/dt, d g(t)/dt)$, and its speed is calculated as $\norm{\mathcal{F}'(t)}$. 
\ignore{
the total distance traversed by the object between time 0 and $t$ equals 
$$
D(t):=\int_0^t  \sqrt{\left( \frac{d f}{dx} \right)^2 +  \left(\frac{d g}{dx} \right)^2 } dx. 
$$
Clearly, the speed of the object is 1 iff for each $t$ we have $t=D(t)$. By the fundamental theorem of calculus, this implies that $1=\sqrt{\left(f'(t)\right)^2 + \left(g'(t)\right)^2}$. Finally, if the latter is true, we conclude that $D(t)=t+c$ for some constant $c$. However $D(0)=0$, hence $c=0$.  
}
\end{proof}

Robots' trajectories will be composed by piecewise smooth parametric functions. In order to describe them, we introduce some further notation. For any $\theta\in \reals$, we introduce abbreviation $\ki{\theta}$ for point $\{\coss{\theta}, \sinn{\theta}\}$. Next we introduce parametric equations for moving along the perimeter of a unit circle (Lemma~\ref{lem: move circle}), and along a line segment (Lemma~\ref{lem: move line}). 

\begin{lemma}\label{lem: move circle}
Let $b \in [0,2\pi)$ and $\sigma \in \{-1,1\}$. The trajectory of an object moving at speed 1 on the perimeter of a unit circle with initial location $\ki{b}$ is given by the parametric equation 
$$ 
\ci{b,\sigma t}:=(\coss{\sigma t+b}, \sinn{\sigma t+b}).
$$
%\ki{\sigma t+b}.$$ 
If $\sigma=1$ the movement is counter-clockwise (ccw), and clockwise (cw) otherwise. 
\end{lemma}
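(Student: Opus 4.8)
The plan is to verify directly that the proposed curve $\ci{b,\sigma t}=(\coss{\sigma t+b},\sinn{\sigma t+b})$ meets every requirement in the statement, invoking Lemma~\ref{lem: unit speed} for the speed condition. Writing $f(t)=\coss{\sigma t+b}$ and $g(t)=\sinn{\sigma t+b}$ for its two coordinate functions, I would first confirm the initial location by evaluating at $t=0$: this gives $(f(0),g(0))=(\coss{b},\sinn{b})=\ki{b}$, exactly the prescribed starting point. I would then confirm that the object remains on the perimeter for all $t$, since by the Pythagorean identity $f(t)^2+g(t)^2=\cossq{\sigma t+b}+\sinsq{\sigma t+b}=1$, so that $\norm{\ci{b,\sigma t}}=1$ throughout.

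For the speed, I would differentiate to get $f'(t)=-\sigma\sinn{\sigma t+b}$ and $g'(t)=\sigma\coss{\sigma t+b}$, whence
$$
\left(f'(t)\right)^2+\left(g'(t)\right)^2=\sigma^2\left(\sinsq{\sigma t+b}+\cossq{\sigma t+b}\right)=\sigma^2=1,
$$
where the last equality uses $\sigma\in\{-1,1\}$. By Lemma~\ref{lem: unit speed}, this establishes that the object moves at unit speed, so the trajectory is admissible and of the claimed form.

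Finally, for the orientation claim I would observe that the angular coordinate of the moving point is $\sigma t+b$, so its angular velocity is exactly $\sigma$: when $\sigma=1$ the angle increases with $t$, i.e.\ the motion is counter-clockwise, whereas for $\sigma=-1$ the angle decreases, giving clockwise motion. There is no genuine obstacle in this argument; the only point requiring a little care is stating the direction claim precisely, as it is the single part of the lemma not captured by the algebraic identities above and instead rests on interpreting the sign of the angular velocity.
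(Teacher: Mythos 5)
Your proof is correct and follows essentially the same route as the paper's: verify the initial position at $t=0$, verify the point stays on the unit circle, and verify unit speed by differentiating and invoking Lemma~\ref{lem: unit speed}. The only difference is that you also spell out the orientation claim via the sign of the angular velocity, which the paper leaves implicit.
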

\begin{proof}%[Proof of Lemma~\ref{lem: move circle}]
Clearly, $\ci{b,0}=\ki{b}$. Also, it is easy to see that $\norm{\ci{b,t}}=1$, i.e. the object is moving on the perimeter of the unit circle. Lastly, 
$$
\left(\frac{d}{dt} \coss{\sigma t+b}\right)^2 + 
\left(\frac{d}{dt} \sinn{\sigma t+b}\right)^2 
=
\sigma^2 \left(- \sinn{\sigma t+b}\right)^2 + 
\sigma^2 \left( \coss{\sigma t+b}\right)^2 =1, 
$$
so the claim follows by Lemma~\ref{lem: unit speed}. 
\end{proof}

\begin{lemma}\label{lem: move line}
Consider distinct points $A=(a_1,a_2), B=(b_1,b_2)$ in $\reals^2$. The trajectory of a speed 1 object moving along the line passing through $A,B$ and with initial position $A$ is given by the parametric equation 
$$
\li{A,B,t}:=\left(
\frac{b_1-a_1}{\norm{A-B}}t+a_1,
\frac{b_2-a_2}{\norm{A-B}}t+a_2
\right).
$$
\end{lemma}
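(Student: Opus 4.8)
The plan is to verify the two defining properties of a unit-speed parametric trajectory, exactly as was done for the circular motion in Lemma~\ref{lem: move circle}. The statement to prove is that $\li{A,B,t}$ describes a speed-$1$ object moving along the line through $A$ and $B$ starting at $A$, so I must check three things: that the initial position is $A$, that every point $\li{A,B,t}$ lies on the line through $A$ and $B$, and that the speed is identically $1$.

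First I would substitute $t=0$ into the parametric equation. The first coordinate becomes $\frac{b_1-a_1}{\norm{A-B}}\cdot 0 + a_1 = a_1$, and similarly the second coordinate is $a_2$, so $\li{A,B,0}=(a_1,a_2)=A$, confirming the initial position. Next, to see that the trajectory stays on the line through $A$ and $B$, I would observe that $\li{A,B,t}-A = \frac{t}{\norm{A-B}}\,(B-A)$, which is a scalar multiple of the direction vector $B-A$; hence every point of the trajectory is of the form $A + \lambda (B-A)$ and therefore lies on the desired line. (As a sanity check, at $t=\norm{A-B}$ the scalar is $1$ and the trajectory reaches $B$.)

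The final and only computational step is the speed condition of Lemma~\ref{lem: unit speed}. Writing $f(t)=\frac{b_1-a_1}{\norm{A-B}}t+a_1$ and $g(t)=\frac{b_2-a_2}{\norm{A-B}}t+a_2$, the derivatives are the constants $f'(t)=\frac{b_1-a_1}{\norm{A-B}}$ and $g'(t)=\frac{b_2-a_2}{\norm{A-B}}$, so
$$
\left(f'(t)\right)^2+\left(g'(t)\right)^2
=\frac{(b_1-a_1)^2+(b_2-a_2)^2}{\norm{A-B}^2}
=\frac{\norm{A-B}^2}{\norm{A-B}^2}=1,
$$
using the definition of the Euclidean norm in the numerator. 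By Lemma~\ref{lem: unit speed} the object moves at unit speed, which completes the argument.

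There is essentially no obstacle here: the result is a routine verification, and the only point requiring the hypothesis $A\neq B$ is that $\norm{A-B}\neq 0$, so that the normalization by $\norm{A-B}$ is well defined. I would simply note that distinctness of $A$ and $B$ guarantees this, and the three checks above then go through immediately.
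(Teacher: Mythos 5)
Your proposal is correct and takes essentially the same approach as the paper: verify the initial position, observe the trajectory lies on the line through $A$ and $B$ (with $B$ reached at $t=\norm{A-B}$), and confirm unit speed by computing the constant derivatives and invoking Lemma~\ref{lem: unit speed}. The only difference is that you spell out steps the paper dismisses as immediate, including the role of the hypothesis $A\neq B$ in making the normalization well defined.
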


\begin{proof}%[Proof of Lemma~\ref{lem: move line}]
It is immediate that the parametric equation corresponds to a line. Also, it is easy to see that $\li{A,B,0}=A$ and $\li{A,B,\norm{A-B}}=B$, i.e. the object starts from $A$, and eventually visits $B$. As for the object's speed, we calculate
$$
\left(\frac{d}{dt} \left(\frac{b_1-a_1}{\norm{A-B}}t+a_1\right)\right)^2 + 
\left(\frac{d}{dt} \left(
\frac{b_2-a_2}{\norm{A-B}}t+a_2
\right)
\right)^2 
=
\left(
\frac{b_1-a_1}{\norm{A-B}}
\right)^2 + 
\left(
\frac{b_2-a_2}{\norm{A-B}}
\right)^2 =1
$$
so, by Lemma~\ref{lem: unit speed}, the speed is indeed 1. 
\end{proof}

Robots trajectories will be described in phases. 
In each phase, robot, say $\mathcal F$, will be moving between two explicit points, and the corresponding trajectory $\mathcal{F}(t)$ will be implied by the previous description, using most of the times Lemma~\ref{lem: move circle} and Lemma~\ref{lem: move line}.
We will summarize the details in tables of the following format. 
\begin{center}
\begin{tabular}{l || lllc}
\textit{Robot}		 	&	\# & \textit{Description} & \textit{Trajectory} & \textit{Duration}\\
\hline 
$\mathcal F$	&	0			&		&	$\mathcal{F}(t)$ & $t_0$ \\
				&	1			&		&	$\mathcal{F}(t)$ &  $t_1$ \\
				&	$\vdots$		&		&	&  $\vdots$	\\
\hline			
\end{tabular}
\end{center}

Phase 0 will usually correspond to the deployment of $\mathcal F$ from the origin to some point of the circle. 
Also, for each phase we will summarize it's duration. With that in mind, trajectory $\mathcal{F}(t)$ during phase $i$, with duration $t_i$, will be valid for all $t\geq 0$ with $|t-(t_0+t_1+\ldots t_{i-1})|\leq t_i$.

Lastly, the following abbreviation will be useful for the exposition of the trajectories. For any $\rho \in [0,1]$ and $\theta \in [0,2\pi)$, we introduce notation 
$$
K(\theta, \rho):= (1-\rho) \ki{\pi - \theta} + \rho \ki{-\theta}.
$$
In other words, $K(\theta, \rho)$ is a convex combination of antipodal points $\ki{\pi - \theta}, \ki{- \theta}$ of the unit circle, i.e. it lies on the diameter of the unit circle passing through these two points. Moreover, it is easy to see that $\norm{\ki{\pi-\theta}- 
K(\theta, \rho)
}=2\rho
$, and hence
$$\norm{K(\theta, \rho)-\ki{-\theta}}=2-2\rho.$$ 
As it will be handy later, we also introduce abbreviation 
$$
AK(\theta,\rho):=\norm{\ki{\pi}- K(\theta,\rho)}.
$$
The choice of the abbreviation is clear, if the reader denotes $\ki{\pi}=(-1,0)$ by $A$.

\subsection{Critical Angles}

The following definition introduces a key concept. In what follows, abstract trajectories will be assumed to be continuous and differentiable, which in particular implies that corresponding velocities are continuous. 

\begin{figure}[h!]
%\begin{wrapfigure}{h!}{0.5\textwidth}
  \centering
  \includegraphics[width=.5\linewidth]{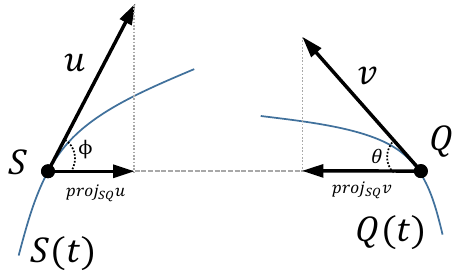}
\caption{An illustration of trajectories $\ser{}(t), \queen(t)$, and their critical angles at some fixed time $\tau$, with $\ser{}(\tau)=S, \queen(\tau)=Q,\ser{}'(\tau)=u, \queen'(\tau)=v$.}
\label{fig:CriticalAngles}
%\end{wrapfigure}
\end{figure}

\begin{definition}[Critical Angle]
Let $\ser{}(t) \in \reals^2$ denote the trajectory of a speed-1 object, where $t\geq 0$. For some point $Q \in \reals^2$, we define the $(\ser{},Q)$-critical angle at time $t=\tau$ to be the angle between the velocity vector $\ser{}'(\tau)$ and vector 
$\overrightarrow{\ser{}(\tau)Q}$, i.e. the vector from $\ser{}(\tau)$ to $Q$. 
\end{definition}

We make the following critical observation, see also Figure~\ref{fig:CriticalAngles}.
\begin{theorem}\label{thm: critical angles and cost}
Consider trajectories $\ser{}(t), \queen(t)$ of two speed-1 objects $\ser{}, \queen$, where $t\geq 0$.
Let also $\phi, \theta$ denote 
the $(\ser{},\queen(t))$-critical angle
and the $(\queen, \ser{}(t))$-critical angle at time $t$, respectively. 
Then $t+\norm{\queen(t) - \ser{}(t)}$ is strictly increasing if $\coss{\phi}+\coss{\theta}< 1$, strictly decreasing if 
$\coss{\phi}+\coss{\theta}> 1$, and constant otherwise. 
\end{theorem}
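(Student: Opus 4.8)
The plan is to show directly that the derivative of $D(t) := t + \norm{\queen(t) - \ser{}(t)}$ equals $1 - \coss{\phi} - \coss{\theta}$, after which all three claims follow by inspecting its sign. First I would set $w(t) := \queen(t) - \ser{}(t)$ and write $D(t) = t + \norm{w(t)}$. At any time $\tau$ where $\queen(\tau) \neq \ser{}(\tau)$ the Euclidean norm is smooth, so the chain rule gives
$$
D'(\tau) = 1 + \frac{w(\tau) \cdot w'(\tau)}{\norm{w(\tau)}} = 1 + \frac{(\queen(\tau)-\ser{}(\tau)) \cdot (\queen'(\tau) - \ser{}'(\tau))}{\norm{\queen(\tau)-\ser{}(\tau)}}.
$$

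Next I would split the inner product into its two pieces and match each with a critical angle. Since both objects move at speed $1$, Lemma~\ref{lem: unit speed} gives $\norm{\queen'(\tau)} = \norm{\ser{}'(\tau)} = 1$. By definition the $(\queen, \ser{}(t))$-critical angle $\theta$ is the angle between $\queen'(\tau)$ and $\overrightarrow{\queen(\tau)\ser{}(\tau)} = \ser{}(\tau) - \queen(\tau)$, so $\queen'(\tau)\cdot(\ser{}(\tau)-\queen(\tau)) = \norm{\ser{}(\tau)-\queen(\tau)}\coss{\theta}$; equivalently $\frac{(\queen-\ser{})\cdot \queen'}{\norm{\queen - \ser{}}} = -\coss{\theta}$. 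Symmetrically, the $(\ser{}, \queen(t))$-critical angle $\phi$ is the angle between $\ser{}'(\tau)$ and $\overrightarrow{\ser{}(\tau)\queen(\tau)} = \queen(\tau)-\ser{}(\tau)$, so $\frac{(\queen - \ser{})\cdot \ser{}'}{\norm{\queen - \ser{}}} = \coss{\phi}$. Substituting these two identities into the displayed expression yields the closed form $D'(\tau) = 1 - \coss{\theta} - \coss{\phi}$.

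With this in hand the theorem is immediate: $D'(\tau) > 0$ exactly when $\coss{\phi} + \coss{\theta} < 1$, $D'(\tau) < 0$ exactly when $\coss{\phi} + \coss{\theta} > 1$, and $D'(\tau) = 0$ otherwise. Because the velocities $\queen'$ and $\ser{}'$ are assumed continuous, $D'$ is continuous, so a constant sign of $D'$ on an interval upgrades the pointwise derivative information to the stated strict monotonicity (or constancy) of $D$ there.

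The main point requiring care is the orientation bookkeeping: the two critical angles are measured against oppositely directed vectors ($\overrightarrow{\ser{}\queen}$ versus $\overrightarrow{\queen\ser{}}$), and it is precisely this that produces a single minus sign in front of each cosine and the clean form $1 - \coss{\phi} - \coss{\theta}$; getting either sign backwards would spoil the statement. One should also note the degenerate instants where $\queen(\tau) = \ser{}(\tau)$, at which $\norm{w}$ is not differentiable and the critical angles are undefined, so these isolated times fall outside the scope of the statement, consistent with the claim being about points where $\phi$ and $\theta$ are defined.
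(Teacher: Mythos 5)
Your proof is correct, and at its core it rests on the same idea as the paper's: reduce the theorem to computing the rate of change of the inter-robot distance, which is the component of the relative velocity along the segment joining the two robots. The execution differs, though, in a way worth noting. The paper (via Lemma~\ref{thm: cost preservation}) argues geometrically: it projects the two velocity vectors onto the segment $SQ$, introduces auxiliary objects moving with those projected velocities, and reads off the rate of change as the magnitude $\left|\coss{\phi}+\coss{\theta}\right|$, handling orientation only through a verbal discussion of when the projections are antiparallel. You instead differentiate $\norm{\queen(t)-\ser{}(t)}$ directly by the chain rule and convert each inner product into a cosine of a critical angle, explicitly tracking that the two angles are measured against oppositely directed vectors $\overrightarrow{\ser{}\queen}$ and $\overrightarrow{\queen\ser{}}$. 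This care buys you something concrete: your computation yields the signed identity $D'(\tau)=1-\coss{\phi}-\coss{\theta}$, which is exactly what the theorem's trichotomy requires, whereas the paper's lemma as printed asserts $\frac{d}{dt}\norm{\queen(t)-\ser{}(t)}=\coss{\phi}+\coss{\theta}$, which is off by a sign (the derivative should be $-\left(\coss{\phi}+\coss{\theta}\right)$ for the theorem to follow); the magnitude-based projection argument glosses over this. Your closing remarks, on the degenerate instants where $\queen(\tau)=\ser{}(\tau)$ and on upgrading the pointwise sign of $D'$ to strict monotonicity on intervals via continuity of the velocities, address points the paper leaves implicit.
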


%An immediate corollary of Lemma~\ref{thm: cost preservation} is that for trajectories $\ser{}(t), \queen(t)$ as in the lemma, we have that $t+\norm{\queen(t) - \ser{}(t)}$ is strictly increasing if $\coss{\phi}+\coss{\theta}< 1$, strictly decreasing if $\coss{\phi}+\coss{\theta}> 1$, and constant otherwise. 

Theorem~\ref{thm: critical angles and cost} is an immediate corollary of the following lemma. 
\begin{lemma}
\label{thm: cost preservation}
Consider trajectories $\ser{}(t), \queen(t)$ and their critical angles $\pi, \theta$, as in the statement of Theorem~\ref{thm: critical angles and cost}. Then 
$$
\frac{d}{dt} \norm{\queen(t) - \ser{}(t)} = \coss{\phi}+\coss{\theta}. 
$$
\end{lemma}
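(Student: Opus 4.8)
The plan is to prove the identity by differentiating the Euclidean distance directly and then reading off the two resulting inner products as cosines of the critical angles. First I would set $w(t):=\queen(t)-\ser{}(t)$ and write $\norm{\queen(t)-\ser{}(t)}=\sqrt{w(t)\cdot w(t)}$. On the interior of each phase, where both trajectories are differentiable, the chain rule gives
\[
\frac{d}{dt}\norm{\queen(t)-\ser{}(t)}
=\frac{w(t)\cdot w'(t)}{\norm{w(t)}}
=\frac{\big(\queen(t)-\ser{}(t)\big)\cdot\big(\queen'(t)-\ser{}'(t)\big)}{\norm{\queen(t)-\ser{}(t)}},
\]
valid whenever $\queen(t)\neq\ser{}(t)$ so the denominator does not vanish.

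Next I would split the numerator into the two pieces $\big(\queen-\ser{}\big)\cdot\queen'$ and $-\big(\queen-\ser{}\big)\cdot\ser{}'$ and interpret each geometrically. The unit vector $\big(\queen(t)-\ser{}(t)\big)/\norm{\queen(t)-\ser{}(t)}$ points from $\ser{}(t)$ toward $\queen(t)$, i.e. it is the unit vector along $\overrightarrow{\ser{}(t)\queen(t)}$, and it is exactly the negative of the unit vector along $\overrightarrow{\queen(t)\ser{}(t)}$. Because both objects move at unit speed (Lemma~\ref{lem: unit speed}), $\norm{\queen'}=\norm{\ser{}'}=1$, so each normalized inner product equals the cosine of the angle between the relevant velocity and the relevant direction vector. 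Matching against the definition of the critical angles, the $\ser{}'$-term is controlled by the angle between $\ser{}'$ and $\overrightarrow{\ser{}\queen}$ (that is $\phi$), and the $\queen'$-term by the angle between $\queen'$ and $\overrightarrow{\queen\ser{}}$ (that is $\theta$), so the derivative collapses to a combination of $\coss{\phi}$ and $\coss{\theta}$.

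The one genuinely delicate step, and what I expect to be the main obstacle, is the sign bookkeeping in the previous paragraph: $\queen-\ser{}$ is parallel to $\overrightarrow{\ser{}\queen}$ but anti-parallel to $\overrightarrow{\queen\ser{}}$, and there is an additional minus sign coming from differentiating $-\ser{}(t)$, so the two terms appear with competing signs before the definitions are applied, and the two minus signs must be tracked carefully. I would pin down the global sign with a quick sanity check: if both robots head straight at each other then $\phi=\theta=0$ while the separation shrinks, so each ``moving-toward'' term must decrease the distance at unit rate; this check fixes the orientation conventions and, crucially, is exactly what makes the monotonicity dichotomy of Theorem~\ref{thm: critical angles and cost} come out with threshold $1$ for $\coss{\phi}+\coss{\theta}$ in $\frac{d}{dt}\big(t+\norm{\queen-\ser{}}\big)$. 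Once the orientation is settled, the identity is immediate and holds on every differentiable sub-arc of the two trajectories.
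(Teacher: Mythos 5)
Your proposal is correct, and the computation at its heart is the same one the paper performs: the normalized inner products produced by your chain-rule step, $\bigl((\queen-\ser{})\cdot\ser{}'\bigr)/\norm{\queen-\ser{}}$ and $\bigl((\queen-\ser{})\cdot\queen'\bigr)/\norm{\queen-\ser{}}$, are exactly the signed lengths of the projections $\textrm{proj}_{SQ}\,u$ and $\textrm{proj}_{SQ}\,v$ of the velocities $u=\ser{}'(t)$, $v=\queen'(t)$ that the paper manipulates, so both arguments amount to decomposing the two unit velocities along the segment joining the robots. Where you differ is in rigor, and your extra care pays off twice. First, the paper's device of ``imaginary objects'' $\overline{\ser{}},\overline{\queen}$ moving with the projected velocities, together with the claim $\norm{\queen(t)-\ser{}(t)}=\norm{\overline{\queen}(t)-\overline{\ser{}}(t)}$, is not literally correct (only the instantaneous rate of change of the distance is governed by the projections); your direct differentiation of $\sqrt{w\cdot w}$ sidesteps this entirely, and it also makes explicit the hypothesis $\queen(t)\neq\ser{}(t)$ that the paper leaves tacit. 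Second, and more substantively, your sign bookkeeping --- in particular the sanity check in which both robots head straight at each other, so $\phi=\theta=0$ while the separation shrinks at rate $2$ --- shows that the identity as displayed in the lemma is off by a sign: the correct statement is $\frac{d}{dt}\norm{\queen(t)-\ser{}(t)}=-\coss{\phi}-\coss{\theta}$, whence $\frac{d}{dt}\bigl(t+\norm{\queen(t)-\ser{}(t)}\bigr)=1-\coss{\phi}-\coss{\theta}$, which is precisely the threshold-$1$ dichotomy asserted in Theorem~\ref{thm: critical angles and cost}. The paper's own proof tacitly concedes this point: its closing sentence says that when $\coss{\phi},\coss{\theta}>0$ the two objects ``come closer to each other,'' i.e.\ a positive sum means the distance is decreasing, but it never reconciles that remark with the equation in the lemma's statement. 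So your route is essentially the paper's, executed more carefully; the version of the lemma you end up proving is the sign-corrected one, which is the one the rest of the paper actually uses.
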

\begin{proof}%[Proof of Theorem~\ref{thm: critical angles and cost}]
For any fixed $t$, let $d$ denote $D(t)$, and $S,Q$ denote points $\ser{}(t), \queen(t)$, respectively. Denote 
also by $u,v$ the velocities of $\ser{}, \queen$ at time $t$, respectively, i.e. $u=\ser{}'(t), v=\queen'(t)$. See also Figure~\ref{fig:CriticalAngles}.

With that notation, observe that $\norm{\overrightarrow{SQ}}=d$. Since $\norm{u}=\norm{v}=1$, we see that 
%\begin{align*}
$$
\textrm{proj}_{SQ} u = \frac{\coss{\phi}}d \overrightarrow{SQ}
$$
and 
$$
\textrm{proj}_{SQ} v = \frac{\coss{\theta}}d \overrightarrow{QS}.
$$
%\end{align*}
Now consider two imaginary objects $\overline{\ser{}}, \overline{\queen}$, with corresponding velocities $\overline{\ser{}}'(t)=\textrm{proj}_{SQ} u$ and $\overline{\queen}'(t)=\textrm{proj}_{SQ} v$. 
It is immediate that $\norm{\queen(t) - \ser{}(t)}= \norm{\overline{\queen}(t) - \overline{\ser{}}(t)}$. 

In particular, $\textrm{proj}_{SQ} u-\textrm{proj}_{SQ} v$ is the projection of the relative velocities of $\ser{}, \queen$ on the line segment connecting $\ser{}(t), \queen(t)$. 
As such, the distance between $\ser{}, \queen$ changes at a rate determined by velocity
$$
\textrm{proj}_{SQ} u-\textrm{proj}_{SQ} v
=\frac{\coss{\phi}+\coss{\theta}}d \overrightarrow{SQ},
$$
where $\norm{\textrm{proj}_{SQ} u-\textrm{proj}_{SQ} v} = \left| \coss{\phi}+\coss{\theta} \right|$. 
Moreover, $\textrm{proj}_{SQ} u,\textrm{proj}_{SQ} v$ are antiparallel iff and only if $\coss{\phi}, \coss{\theta}> 0$, in which case the two objects come closer to each other. 
\end{proof}

\section{Upper Bounds}
%\section{Saving the Queen With $n=1$ Servant}
\label{ssec3}

%In this section we give upper bounds for the cases of $n=1,2,3$ servants. 

\subsection{Evacuation Algorithm for \pe{1}}
\label{sec3}

This subsection is devoted in proving the following. 

\begin{theorem}\label{thm: 1 servant}
Consider the real function $f(x)=x+\sinn{x}$, and denote by $\alpha_0>0$ the solution to equation
$$
f(f(\alpha-\sinn{\alpha}))=\sinn{\alpha},
$$
with $\alpha_0 \approx 1.14193$. Then 
\pe{1} can be solved in time $1+ \pi-\alpha_0 +2\sinn{\alpha_0} \approx 4.81854$. 
\end{theorem}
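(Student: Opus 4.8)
The plan is to prove this upper bound constructively: I will exhibit explicit feasible trajectories $\sol{1}=\{\queen(t),\ser{1}(t)\}$ and show that the evacuation cost \eqref{equa: total cost} they induce equals $1+\pi-\alpha_0+2\sinn{\alpha_0}$. Both robots spend Phase~$0$ (duration $1$) travelling along a radius from $O$ to a common point of the circle, say $\ki{0}=(1,0)$; thereafter $\ser{1}$ sweeps the perimeter at unit speed in one direction (using $\ci{0,t}$ from Lemma~\ref{lem: move circle}), while $\queen$ moves in the opposite sense. The reason for departing from the fully symmetric schedule --- in which both robots mirror each other on the circle and the exit is reached via the chord joining their antipodally-swept positions, giving worst-case cost $1+\tfrac{2\pi}{3}+\sqrt3\approx4.826$ at servant-angle $2\pi/3$ --- is that the queen need not end on the circle. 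Exploiting this extra freedom to flatten the cost near its peak is exactly what buys the improvement to $4.81854$.

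The design principle is Theorem~\ref{thm: critical angles and cost}. Writing $\phi,\theta$ for the $(\ser{1},\queen)$- and $(\queen,\ser{1})$-critical angles, the instantaneous evacuation cost $x+\norm{\queen(x)-\ser{1}(x)}$ is non-increasing exactly while $\coss{\phi}+\coss{\theta}\ge 1$ and constant on the \emph{cost-preserving} locus $\coss{\phi}+\coss{\theta}=1$. I therefore steer $\queen$ so that, once $\ser{1}$ has swept past the critical angle, the pair $(\queen,\ser{1})$ rides this cost-preserving locus: the condition, together with the speed-$1$ constraint of Lemma~\ref{lem: unit speed}, turns the queen's motion into a determined curve (pieced from the segments and arcs of Lemmas~\ref{lem: move line} and~\ref{lem: move circle}), along which $x+\norm{\queen(x)-\ser{1}(x)}$ holds steady at the target value. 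One still checks that $\ser{1}$ alone finishes covering the perimeter by the search time $t_0$, so that the supremum in \eqref{equa: total cost} is governed by $\interval{\ser{1}}$ and the queen's own searched arc contributes nothing larger (the non-degeneracy remark preceding \eqref{equa: total cost}).

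To pin down the switch parameter $\alpha_0$ I integrate the cost-preserving relation along the servant's continued sweep. Because a chord subtending an arc of angle $2\beta$ has length $2\sinn{\beta}$, while the arc itself contributes its length linearly in time, each such stage advances the bookkeeping by the map $f(x)=x+\sinn{x}$; the queen's trajectory chains two of these stages before the final chord onto the exit, which is precisely why the closing (matching) condition takes the composed form $f(f(\alpha-\sinn{\alpha}))=\sinn{\alpha}$. Its positive root $\alpha_0\approx1.14193$ fixes the geometry, and reading off the total time --- deployment ($1$), the servant's residual sweep of arc $\pi-\alpha_0$, and the queen's terminal chord of length $2\sinn{\alpha_0}$ --- yields $1+\pi-\alpha_0+2\sinn{\alpha_0}\approx4.81854$.

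The main obstacle is twofold. First, one must verify that the queen's cost-preserving trajectory is actually executable at speed at most $1$ throughout: the locus $\coss{\phi}+\coss{\theta}=1$ prescribes only a direction, so feasibility requires the induced speed to satisfy Lemma~\ref{lem: unit speed}, and this is the delicate constraint that dictates when the queen is permitted to join the locus. Second, one must prove that the claimed value is the true supremum --- that is, that $x+\norm{\queen(x)-\ser{1}(x)}$ is genuinely monotone off the flat window (via the sign of $\coss{\phi}+\coss{\theta}-1$ in Theorem~\ref{thm: critical angles and cost}), so the cost neither spikes at an interior instant nor during the queen's own portion of the search. Deriving the transcendental matching equation rigorously and certifying that its root $\alpha_0$ produces a fully feasible, perimeter-covering schedule is the crux; the remaining cost computation is then routine.
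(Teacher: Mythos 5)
Your construction cannot achieve the stated bound, and the obstruction is quantitative, not a matter of missing detail. Write $T^*=1+\pi-\alpha_0+2\sinn{\alpha_0}\approx 4.8185$. While both robots sweep the circle in opposite directions from the common deployment point, an exit found by $\ser{1}$ at time $1+x$ costs $1+x+2\sinn{x}$, which is increasing up to $x=2\pi/3$, where it reaches $1+\tfrac{2\pi}{3}+\sqrt{3}\approx 4.8264>T^*$. Hence $\queen$ must quit the perimeter no later than $x^*=\pi-\alpha_0\approx 2.00$ (the root of $1+x+2\sinn{x}=T^*$ below $2\pi/3$), having searched an arc of length at most $\pi-\alpha_0$. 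In your schedule she never searches again: she coasts on the locus $\coss{\phi}+\coss{\theta}=1$ and ``$\ser{1}$ alone finishes covering the perimeter.'' But then $\ser{1}$ must cover an arc of length at least $2\pi-(\pi-\alpha_0)=\pi+\alpha_0$, so the search time is at least $1+\pi+\alpha_0\approx 5.28$, and an exit placed at the last searched point forces evacuation time at least $5.28$, far above $T^*$. Cost preservation cannot bridge this gap: it holds the cost constant, never lowers it, and it cannot even be sustained past time $T^*$ since $t+\norm{\queen(t)-\ser{1}(t)}\ge t$; moreover it leaves the queen with zero slack, so she could not resume searching either (when both robots move along the circle in the same rotational sense their distance is constant and the cost grows at rate $1$).

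What your plan is missing is the paper's actual key idea: the queen must \emph{rejoin the search}, and to afford this the cost must strictly \emph{decrease} after its first peak rather than stay flat. In $\textsc{Search}_1(\alpha,\beta)$ the queen, on reaching $\ki{-\alpha}$ at cost level $E_1=1+\pi-\alpha+2\sinn{\alpha}$, takes the chord to $\ki{-\alpha+\beta}$ --- along which the cost strictly decreases, by Theorem~\ref{thm: critical angles and cost} (Lemma~\ref{lem: monotonicity search 1}) --- and then searches the skipped arc of length $\beta$ backwards; her total searched arc is $(\pi-\alpha)+\beta\approx 2.93$, the servant's only $\pi+\alpha-\beta\approx 3.36$, and the search ends at $\approx 4.36<T^*$. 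The transcendental equation then comes from balancing \emph{three} worst cases of this two-parameter family, namely $E_0=1+\pi-\alpha+2\sinn{\beta/2}+\beta$ (queen finds the exit last), $E_1$ above, and $E_2=1+\pi+\alpha-\beta+2\sinn{\alpha-\beta/2-\sinn{\beta/2}}$ (servant finishes last): $E_0=E_1$ gives $f(\beta/2)=\sinn{\alpha}$, and $E_1=E_2$ gives $\beta/2=f(\alpha-\sinn{\alpha})$, whence $f(f(\alpha-\sinn{\alpha}))=\sinn{\alpha}$ --- not from ``chaining two cost-preserving stages'' as you assert. Two further inaccuracies: the cost-preserving curve is not ``pieced from the segments and arcs'' of the lemmas you cite; it is the solution of the ODE system of Theorem~\ref{thm: cost preservation stationary point}, which the paper needs only for its improved $n=3$ algorithm. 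Finally, your closing paragraph itself concedes that both the feasibility of the locus trajectory and the derivation of the matching equation are left open, so even on its own terms the proposal is an outline rather than a proof.
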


The value of $\alpha_0$ is well defined in the statement of Theorem~\ref{thm: 1 servant}. Indeed, by letting $g(x)=f(f(x-\sinn{x}))-\sinn{x}$, we observe that $g$ is continuous, while $g(1)\approx -0.213934$ and $g(\pi/2)\approx 1.00729$, hence there exists $\alpha_0\in (1,\pi/2)$ with $g(\alpha_0)=0$. 

In order to prove Theorem~\ref{thm: 1 servant}, 
and given parameters $\alpha, \beta$, we introduce the family of trajectories $\textsc{Search}_1(\alpha,\beta)$, see also Figure~\ref{fig:Search1NEW-S1}.
\begin{figure}[h!]
\centering
  \includegraphics[width=.61\linewidth]{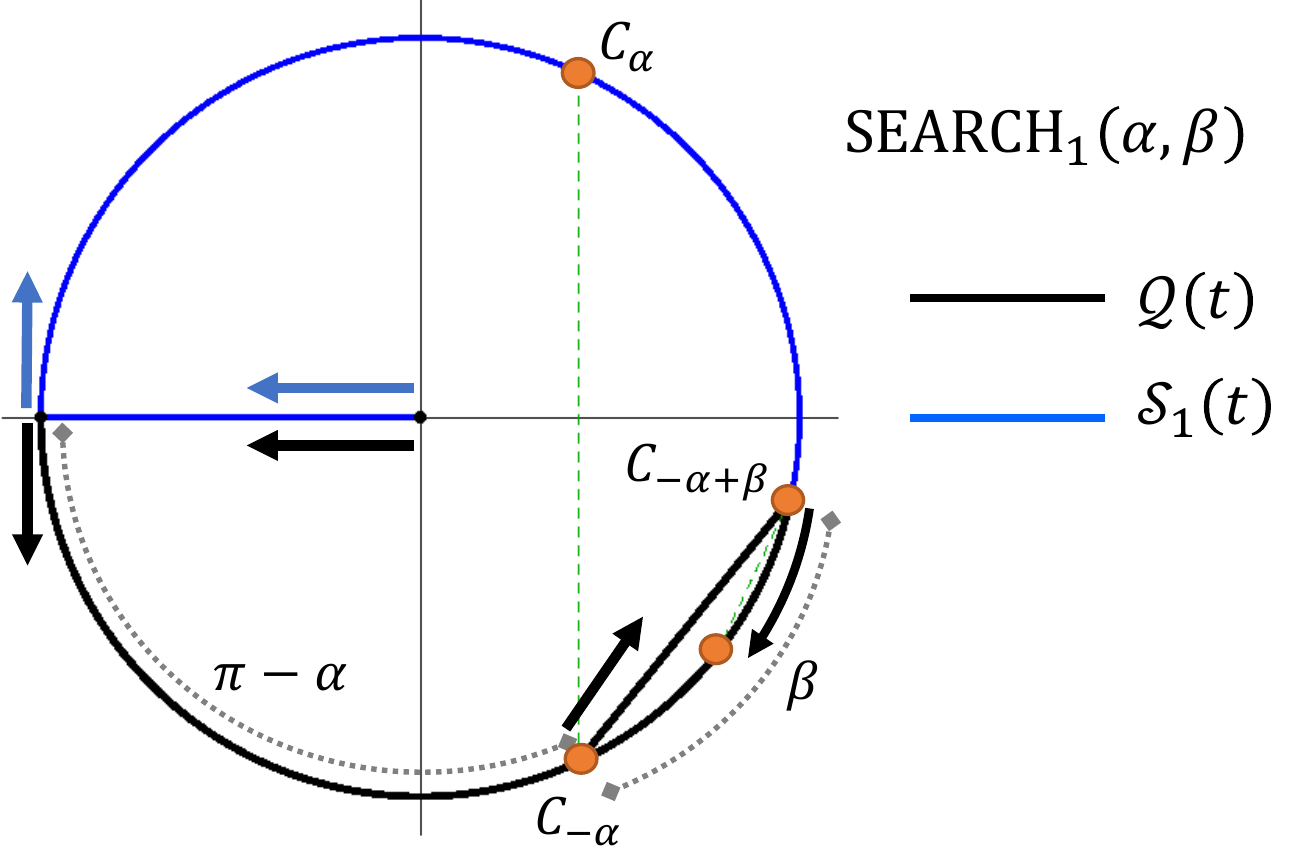}
\caption{
Algorithm $\textsc{Search}_1(\alpha,\beta)$ depicted for the optimal parameters of the algorithm. 
%Robot's trajectories are also clarified by some velocity vectors (matching the aforementioned colours) at special points.
In all subsequent figures, as well as here, the orange points on the perimeter of the disc correspond to the worst adversarial placements of the treasure, which due to our optimality conditions induce the same evacuation cost.
The orange points in $\queen$'s trajectories correspond to the $\queen$'s positioning when the treasures are reported, in  the worst cost induced cases.
The green dashed line depict $\queen$'s trajectory after $\queen$ abandons her trajectory and moves toward the reported exit following a straight line. }
\label{fig:Search1NEW-S1}
\end{figure}

\small{
%\vspace{-0.5cm}
\begin{center}
\begin{tabular}{|l|}
\hline
\textbf{Algorithm $\textsc{Search}_1(\alpha,\beta)$} \\
\hline \hline
\begin{tabular}{l || llll}
\textit{Robot}		 	&	\# & \textit{Description} & \textit{Trajectory} & \textit{Duration}\\
\hline 
$\queen$	&	0		&	Move to point $\ki{\pi}$	&	$\li{O,\ki{\pi}, t}$										&	$1$					\\
			&	1		&	Search circle ccw till point $\ki{-\alpha}$			&$\ci{\pi,t-1}$												&	$\pi-\alpha$	\\
			& 2		&Move to point $\ki{-\alpha+\beta}$, 						&$\li{\ki{-\alpha}, \ki{-\alpha+\beta},t-(1+\pi-\alpha)}$ &$2\sinn{\beta/2}$ \\
			& 3	&	Search circle cw till point $\ki{-\alpha}$			&$\ci{\beta-\alpha,1+\pi-\alpha+2\sinn{\beta/2}-t}$												&	$\beta$	\\
\hline 
$\ser{1}$	&	0		&	Move to point $\ki{\pi}$							&	$\li{O,\ki{\pi}, t}$										&	$1$					\\
			&	1		&	Search circle cw till point $\ki{\beta-\alpha}$			&$\ci{\pi,-t+1}$												&	$\pi+\alpha-\beta$					\\
\hline 
\end{tabular}
\end{tabular}
\end{center}
}

Partitioning  the circle clockwise, we see that the arc with endpoints $\ki{\pi }, \ki{\pi+\alpha-\beta}$ is searched by $\ser{1}$, while the remaining of the circle is searched by $\queen$. Therefore, robots' trajectories in $\textsc{Search}_1(\alpha,\beta)$ are feasible, and it is also easy to see that they are continuous as well. 
The search time equals $1+\pi + \max\{\alpha-\beta, 2\sinn{\beta/2}+\beta-\alpha\}$, as well as 
$$
\interval{\queen}=[1,1+\pi - \alpha] \cup [1+\pi-\alpha+2\sinn{\beta/2}, 1+\pi-\alpha+2\sinn{\beta/2}+\beta], 
\interval{\ser{1}}=[1, 1+\pi+\alpha-\beta]
.$$
An illustration of the above trajectories for certain values of $\alpha,\beta$ can be seen in Figure~\ref{fig:Search1NEW-S1}.
\ignore{
\begin{figure}[h!]
\centering
  \includegraphics[width=.31\linewidth]{newSearch1-2D.pdf}
~~~ \includegraphics[width=.31\linewidth]{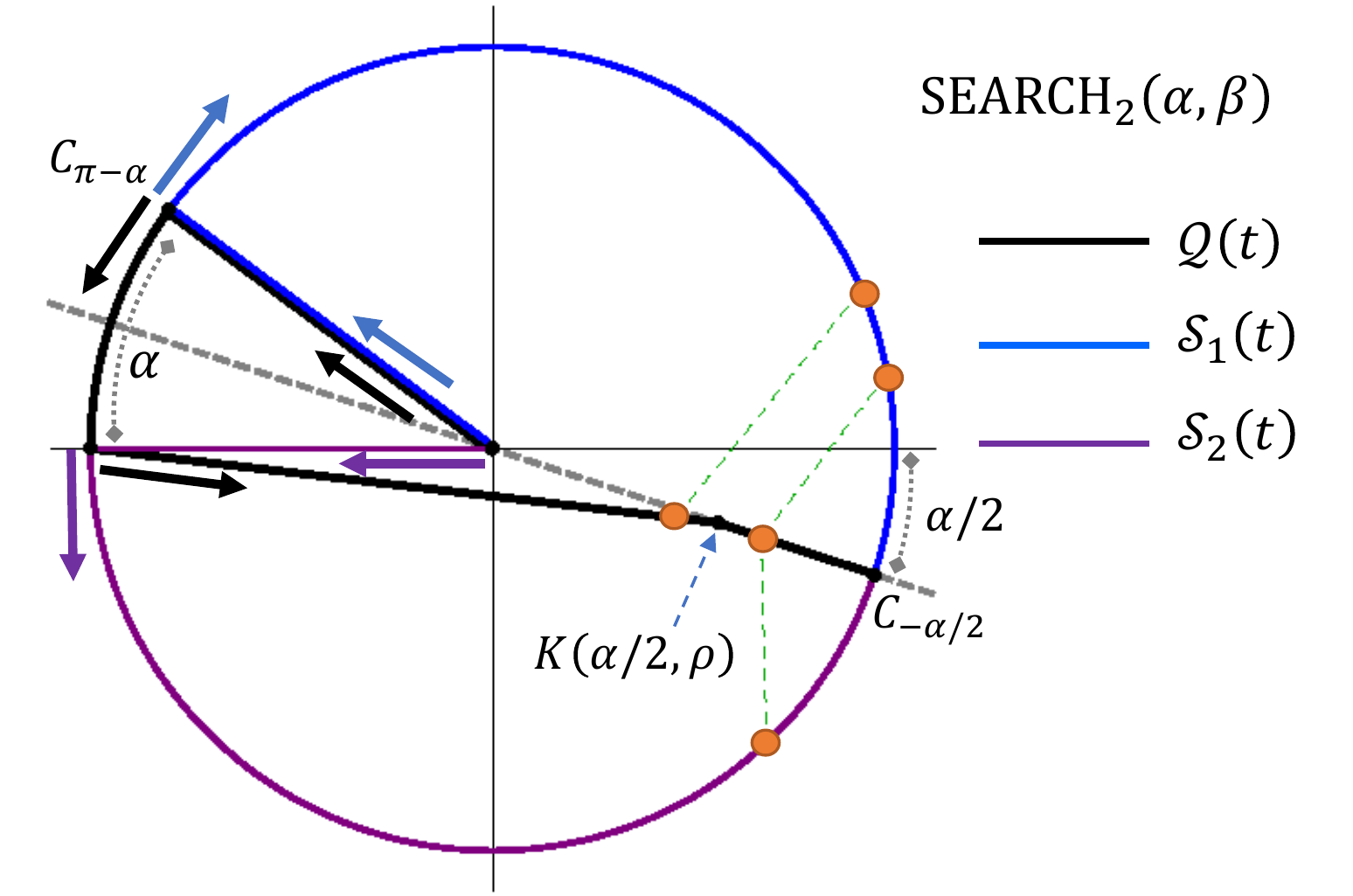}  
~~~~  \includegraphics[width=.33\linewidth]{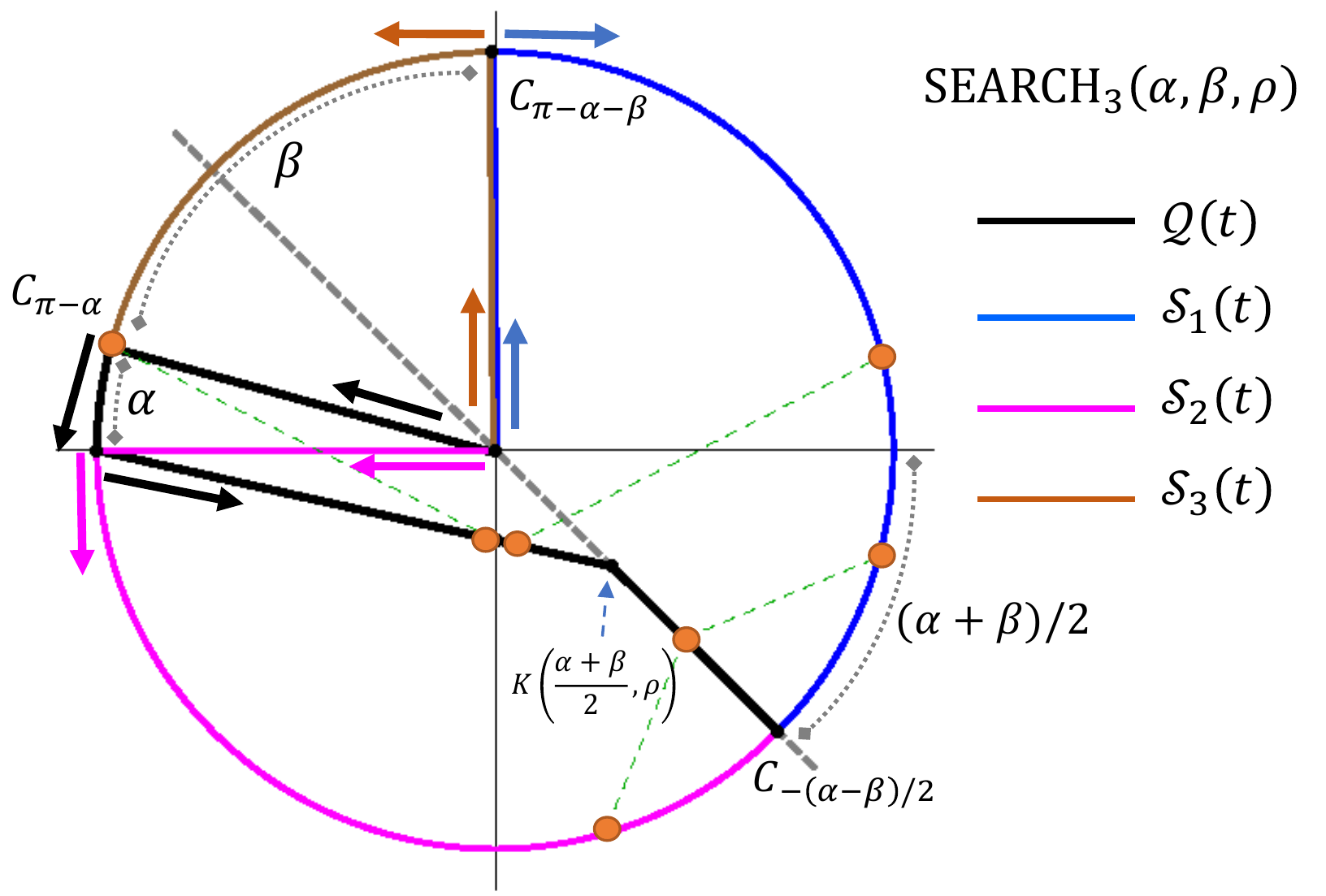}
~~~ \includegraphics[width=.33\linewidth]{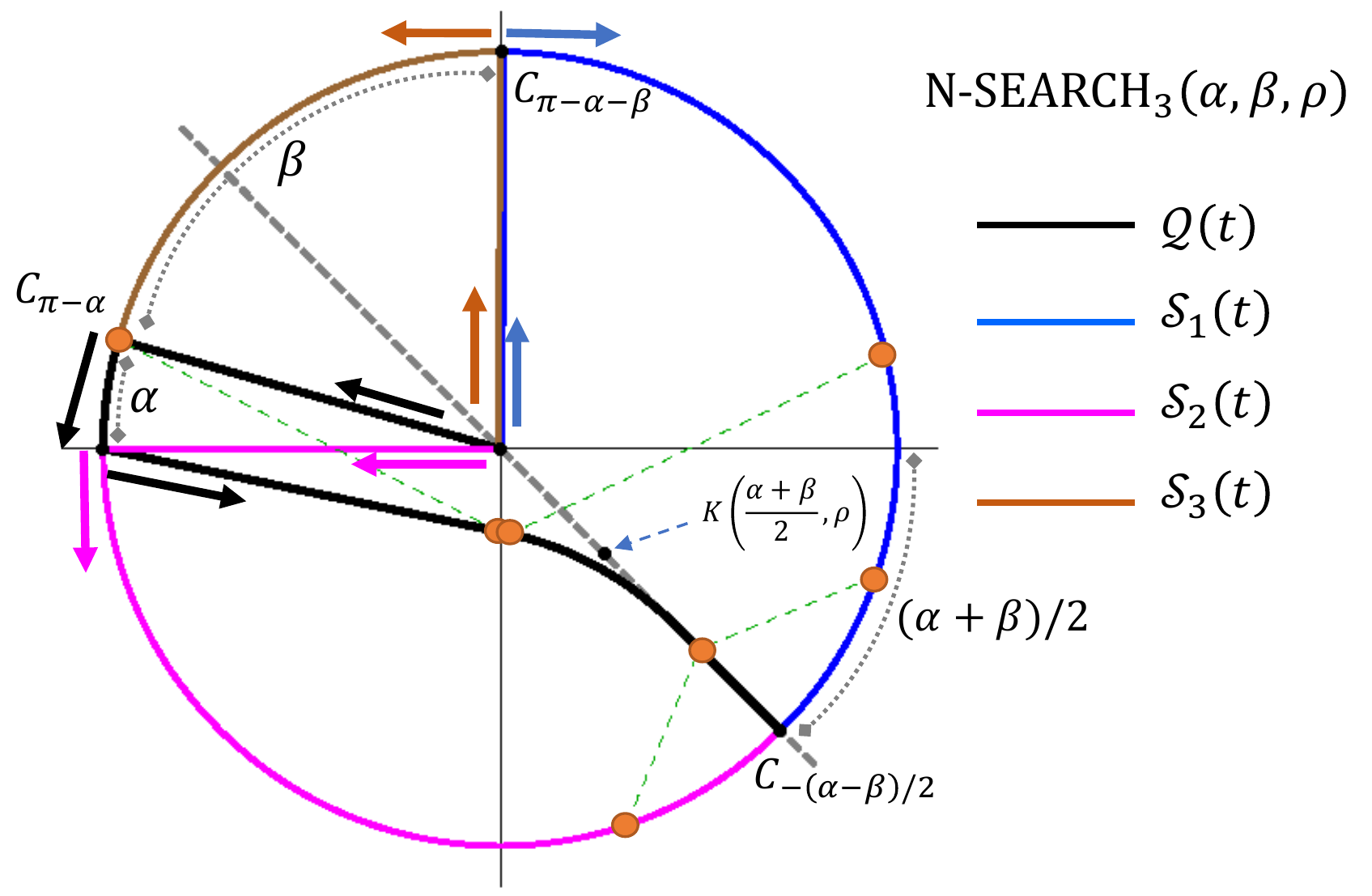}  
\caption{
All trajectories are depicted for the optimal parameters of our algorithms. 
%Robot's trajectories are also clarified by some velocity vectors (matching the aforementioned colours) at special points.
The orange points on the perimeter of the disc correspond to the worst adversarial placements of the treasure, which due to our optimality conditions induce the same evacuation cost.
The orange points in $\queen$'s trajectories correspond to the $\queen$'s positioning when the treasures are reported, in  the worst cost induced cases.
The green dashed line depict $\queen$'s trajectory after $\queen$ abandons her trajectory and moves toward the reported exit following a straight line. }
\label{fig:Search1NEW}
\end{figure}
}

First we make some observations pertaining to the monotonicity of the evacuation cost. 

\begin{lemma}\label{lem: monotonicity search 1}
Assuming that $\alpha>\pi/3$ and that $\coss{\alpha} + \coss{\alpha-\beta/2} >1$, the evacuation cost of 
$\textsc{Search}_1(\alpha,\beta)$ is monotonically increasing if the exit is found by $\ser{1}$ during $\queen$'s phase 1 and monotonically decreasing if the exit is found by $\ser{1}$ during $\queen$'s phase 2.
\end{lemma}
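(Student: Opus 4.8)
The plan is to apply Theorem~\ref{thm: critical angles and cost} directly. By the reformulation in~\eqref{equa: total cost}, if $\ser{1}$ discovers the exit at time $x$ then the queen evacuates at cost $x + \norm{\queen(x) - \ser{1}(x)}$, which is precisely the quantity whose monotonicity is controlled by Theorem~\ref{thm: critical angles and cost}. Writing $\phi$ for the $(\ser{1}, \queen(t))$-critical angle and $\theta$ for the $(\queen, \ser{1}(t))$-critical angle, it therefore suffices to show $\coss{\phi} + \coss{\theta} < 1$ for every $t$ in the interior of $\queen$'s phase~1, and $\coss{\phi} + \coss{\theta} > 1$ for every $t$ in the interior of $\queen$'s phase~2. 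Whenever $\ser{1}$ reports the exit we have $x\in\interval{\ser{1}}$, so $\ser{1}$ is on the circle (in its phase~1) following a smooth arc, while $\queen$ follows a smooth arc in her phase~1 and a line segment in her phase~2; hence the theorem applies on the interior of each phase (where velocities are continuous). We also record that $\ser{1}$ keeps searching throughout $\queen$'s phase~1 since $\beta\le 2\alpha$.

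For $\queen$'s phase~1 both robots move on the circle: by Lemma~\ref{lem: move circle}, at time $t$ the servant sits at angle $\pi-(t-1)$ and the queen at the mirror angle $\pi+(t-1)$, so their positions are reflections across the diameter through $\ki{\pi}$ and $\ki{0}$. The connecting chord is thus vertical, and by this symmetry the two critical angles coincide; computing the velocities gives $\coss{\phi}=\coss{\theta}=-\coss{t-1}$, so $\coss{\phi}+\coss{\theta}=-2\coss{t-1}$. Since $t-1$ ranges over $[0,\pi-\alpha]$ and $\alpha>\pi/3$ forces $\pi-\alpha<2\pi/3$, we have $\coss{t-1}>-1/2$ and hence $\coss{\phi}+\coss{\theta}<1$. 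By Theorem~\ref{thm: critical angles and cost} the cost is strictly increasing, as claimed.

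For $\queen$'s phase~2 the queen traverses the chord at the constant unit velocity $(\sinn{\alpha-\beta/2},\coss{\alpha-\beta/2})$, obtained from Lemma~\ref{lem: move line} after a sum-to-product simplification, while $\ser{1}$ continues on the circle. The pivotal computation is at the start of the phase: there $\queen$ is at $\ki{-\alpha}$ and $\ser{1}$ is at $\ki{\alpha}$, again reflections across the horizontal diameter, and substituting the two velocities into the definition of the critical angles yields exactly $\coss{\phi}+\coss{\theta}=\coss{\alpha}+\coss{\alpha-\beta/2}$, which exceeds $1$ by hypothesis. It then remains to show $\coss{\phi}+\coss{\theta}$ does not fall back to $1$ later in the phase. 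I would do this by proving that $\coss{\phi}+\coss{\theta}$ is nondecreasing in the elapsed time $w\in[0,2\sinn{\beta/2}]$ of phase~2 --- equivalently, by Lemma~\ref{thm: cost preservation}, that the queen--servant distance $\norm{\queen(t)-\ser{1}(t)}$ is concave there, so that its rate of decrease only grows. Granting this, $\coss{\phi}+\coss{\theta}$ is minimized at $w=0$, where it equals $\coss{\alpha}+\coss{\alpha-\beta/2}>1$, so the sum stays above $1$ and Theorem~\ref{thm: critical angles and cost} yields a strictly decreasing cost.

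The main obstacle is exactly this last monotonicity claim in phase~2: unlike phase~1 there is no symmetry to exploit, and after substituting the parametrizations one is left with a trigonometric expression divided by the varying distance $\norm{\queen(t)-\ser{1}(t)}$, which must be shown to remain $\geq 1$ over the whole interval rather than merely at its left endpoint. I expect the cleanest route is to bound this ratio by establishing concavity of $w\mapsto\norm{\queen(t)-\ser{1}(t)}$ (so the approach rate $\coss{\phi}+\coss{\theta}$ is increasing); the hypotheses $\alpha>\pi/3$ and $\coss{\alpha}+\coss{\alpha-\beta/2}>1$ then serve only to pin down the signs at the two phase endpoints.
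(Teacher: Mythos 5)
Your overall structure is the paper's: invoke Theorem~\ref{thm: critical angles and cost} separately on the interior of $\queen$'s phases 1 and 2, showing $\coss{\phi}+\coss{\theta}<1$ on the first and $\coss{\phi}+\coss{\theta}>1$ on the second. Your phase-1 computation is complete and correct: by the reflection symmetry both critical angles equal $\pi-(t-1)$, so $\coss{\phi}+\coss{\theta}=-2\coss{t-1}$, and $t-1\leq\pi-\alpha<2\pi/3$ forces this below $1$. (Your write-up is in fact cleaner than the paper's, whose printed chain $2\coss{\pi-x}\geq 2\coss{\alpha}>2\coss{\pi/3}=1$ and conclusion ``decreasing'' are evident typos; the intended chain is $2\coss{\pi-x}\leq 2\coss{\alpha}<1$, giving ``increasing'' as the lemma asserts.) Your computation at the start of phase 2, namely $\coss{\phi_0}+\coss{\theta_0}=\coss{\alpha}+\coss{\alpha-\beta/2}>1$, also matches the paper exactly.

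The genuine gap is the remainder of phase 2, and you flag it yourself as ``the main obstacle'': you never prove that $\coss{\phi_x}+\coss{\theta_x}$ stays $\geq 1$ for all elapsed times $x\in[0,2\sinn{\beta/2}]$ of the phase. The hypothesis $\coss{\alpha}+\coss{\alpha-\beta/2}>1$ controls only the left endpoint, so what you have actually established is that the evacuation cost decreases in a neighborhood of the start of phase 2 --- not the lemma's claim that it decreases monotonically throughout the phase. Your proposed repair (show the approach rate $\coss{\phi_x}+\coss{\theta_x}$ is nondecreasing, equivalently that $\norm{\queen(t)-\ser{1}(t)}$ is concave on the phase) is a plan, not an argument: no concavity computation is attempted, and the lemma's hypotheses do not hand it to you. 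The paper closes exactly this step with a short geometric claim: as $\queen$ moves along the chord toward $\ki{-\alpha+\beta}$ while $\ser{1}$ moves clockwise toward that same point, both critical angles $\phi_x,\theta_x$ decrease in $x$, hence both cosines increase and the sum never falls below its initial value. For what it is worth, your ``sum'' formulation is the more robust target: for the parameters of Theorem~\ref{thm: 1 servant}, $\theta_x$ is not actually monotone over the whole phase (its cosine dips near the end), yet the sum $\coss{\phi_x}+\coss{\theta_x}$ does increase because the growth of $\coss{\phi_x}$ dominates; so the step you left open is real, it is where the content of the lemma lies, and even the paper's own one-line justification of it is imprecise --- but the paper does commit to an argument there, while your proposal does not.
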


\begin{proof}%[Proof of Lemma~\ref{lem: monotonicity search 1}]
Suppose that the exit is found by $\ser{1}$ during $\queen$'s phase 1, i.e. at time $x$ after robots start searching for the first time, where $0\leq x \leq \pi - \alpha$. It is easy to see that the critical angles between $\queen, \ser{1}$ are both equal to $\pi-x$. 
But then $2\coss{\pi-x} \geq 2\coss{\alpha} > 2\coss{\pi/3} =1$. Hence, by Theorem~\ref{thm: critical angles and cost}, the evacuation cost is decreasing in this case. 

Now suppose that the exit is found by $\ser{1}$ during $\queen$'s phase 2, i.e. at time $x$ after $\queen$ starts moving along the chord with endpoints $\ki{-\alpha}, \ki{-\alpha+\beta}$, where $0\leq x \leq 2\sinn{\beta/2}$. 
If $\phi_x, \theta_x$ denote the $\ser{1}, \queen$ critical angles, then it is easy to see that 
$\phi_0 = \coss{\alpha}$ and that $\theta_0 = \alpha-\beta/2$. 
Since $\coss{\phi_0} + \coss{\theta_0} >1$, Theorem~\ref{thm: critical angles and cost} implies that the evacuation cost is initially  decreasing in this phase. For the remaining of $\queen$'s phase 2, it is easy to see that both $\phi_x, \theta_x$ are decreasing in $x$, hence $\coss{\phi_x} + \coss{\theta_x}$ is increasing in $x$, hence, the evacuation cost will remain decreasing in this phase. 
\end{proof}

Now we can prove Theorem~\ref{thm: 1 servant} by fixing certain values for parameters $\alpha, \beta$ of $\textsc{Search}_1(\alpha,\beta)$. In particular, we set $\alpha_0$ as in the statement of Theorem~\ref{thm: 1 servant}, and $\beta_0 = 2f(\alpha_0-\sinn{\alpha_0}) \approx 0.925793$. The trajectories of the robots, for the exact same values of the parameters, can be seen in Figure~\ref{fig:Search1NEW-S1}. 

\begin{proof}[Proof of Theorem~\ref{thm: 1 servant}] \\
Let $f,\alpha_0$ be as in the statement of Theorem, and set $\beta_0 = 2f(\alpha_0-\sinn{\alpha_0}) \approx 0.925793$. We argue that the worst evacuation time of $\textsc{Search}_1(\alpha_0,\beta_0)$ is $1+\pi-\alpha_0+2\sinn{\alpha_0}$. 
Note that for the given values of the parameters, we have that $\alpha_0>\pi/3$, that 
$\alpha_0-\sinn{\beta_0/2}\leq \beta_0$, and that $\coss{\alpha_0} + \coss{\alpha_0-\beta_0/2} >1$. 

First we observe that if the exit if found by $\queen$, then the worst case evacuation time $E_0(\alpha_0, \beta_0) $ is incurred when the exit is found just before $\queen$ stops searching, that is 
$$E_0(\alpha_0, \beta_0) =1+\pi-\alpha_0+2\sinn{\beta_0/2}+\beta_0.$$

Next we examine some cases as to when the exit is found by $\ser{1}$. 
If the exit is found by $\ser{1}$ during the 1st phase of $\queen$, then the evacuation time is, due to Lemma~\ref{lem: monotonicity search 1}, given as 
$$
E_1(\alpha_0,\beta_0) = \sup_{1\leq x \leq 1+\pi-\alpha_0}
\left\{
x+\norm{\queen(x)-\ser{1}(x)}
\right\} = 1+\pi-\alpha_0+2\sinn{\alpha_0}.
$$

Recall that $\coss{\alpha_0} + \coss{\alpha_0-\beta_0/2} >1$, and so, again by Lemma~\ref{lem: monotonicity search 1} we may omit the case that the exit is found by $\ser{1}$ while $\queen$ is at phase 2. The end of $\queen$'s phase 2 happens at time $\tau:=1+\pi-\alpha_0+2\sinn{\beta_0/2}$, when have that $\queen(\tau)=\ki{-\alpha+\beta}$, and $\ser{1}(\tau)=\ki{\alpha - 2 \sinn{\beta_0/2}}$, and both robots are intending to search ccw. 
Condition $\alpha_0-\sinn{\beta_0/2}\leq \beta_0$ says that $\ser{1}$ will finish searching prior to $\queen$, and this happens when $\ser{1}$ reaches point $\ki{-\alpha+\beta}$. During this phase, the distance between $\queen, \ser{1}$ stays invariant and equal to $2\alpha_0-\beta_0-2\sinn{\beta_0/2}$. We conclude that the cost in this case would be 
$$
E_2(\alpha_0,\beta_0) = 
 1+\pi+\alpha_0-\beta_0 + 2\sinn{\alpha_0-\beta_0/2-\sinn{\beta_0/2}}.
$$
Then, we argue that that the choice of $\alpha_0, \beta_0$ guarantees that $E_0(\alpha_0,\beta_0)=E_1(\alpha_0,\beta_0)=E_2(\alpha_0,\beta_0)$, as wanted.

Indeed, $E_0(\alpha_0,\beta_0)=E_1(\alpha_0,\beta_0)$ implies that 
$\sinn{\beta_0/2}+\beta_0/2 =\sinn{\alpha_0}$. But then, we can rewrite $E_2(\alpha_0,\beta_0)$ as
$$
E_2(\alpha_0,\beta_0) = 
 1+\pi+\alpha_0-\beta_0 + 2\sinn{\alpha_0-\sinn{\alpha_0}}.
$$
Equating the last expression with $E_1(\alpha_0,\beta_0)$ implies that 
$$
\beta_0/2=\alpha_0-\sinn{\alpha_0}+\sinn{\alpha_0-\sinn{\alpha_0}} = f(\alpha_0-\sinn{\alpha_0}). 
$$
Substituting twice $\beta_0/2$ in the already derived condition $\sinn{\beta_0/2}+\beta_0/2 =\sinn{\alpha_0}$ implies that $$f(f(\alpha-\sinn{\alpha_0}))=\sinn{\alpha_0}.$$ 
Figure~\ref{fig:Search1NEW-S1} depicts the worst placements of the exit, along with the trajectories of the queen (in dashed green lines) after the exit is reported. 
\end{proof}

%It should be stressed that Phases 2 and 3 of $\queen$ are essential in achieving a bound that is better than that of the evacuation algorithm for two robots of~\cite{CGGKMP} (which is $\textsc{Search}_1(0,0)$, inducing cost $1+2\pi/3+\sqrt{3} \approx 4.82645$). 

%\ignore{
It should be stressed that $\queen$'s Phases 2,3 are essential for achieving the promised bound. Indeed, had we chosen $\alpha=\beta=0$, the worst case evacuation time would have been
$$
\sup_{1\leq x \leq 1+\pi}
\left\{
x+\norm{\queen(x)-\ser{1}(x)}
\right\} 
=
\sup_{0 \leq x \leq \pi}
\left\{
1+x+2\sinn{x}
\right\}.
$$
The maximum is attained at $x_0=2\pi/3$ (and indeed, both critical angles in this case are $\pi/3$ and in particular $2\coss{\pi/3}=1$), inducing cost $1+2\pi/3+\sqrt{3} \approx 4.82645$. The latter is the cost of the evacuation algorithm for two robots without priority of~\cite{CGGKMP}.
%}

\subsection{Evacuation Algorithm for \pe{2}}
\label{sec4}

%\marginpar{We should prove a lower bound for $n=2$ servants.}
In this subsection we prove the following theorem. 

\begin{theorem}
\label{thm: 2 servants}
\pe{2} can be solved in time 3.8327. 
\end{theorem}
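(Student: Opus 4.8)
The plan is to reproduce, at one higher level of complexity, the argument used for Theorem~\ref{thm: 1 servant}: exhibit an explicit parameterized family of feasible trajectories $\textsc{Search}_2$, reduce its worst-case evacuation time to a handful of closed-form expressions via the critical-angle criterion of Theorem~\ref{thm: critical angles and cost}, and then choose the parameters so that all candidate worst cases are balanced. The natural design is to make the configuration symmetric about the diameter through $\queen$'s deployment point. Concretely, I would send $\queen$ to $\ki{\pi}$ and let the two servants $\ser{1},\ser{2}$ play mirror-image roles about the $x$-axis: each servant sweeps one of two symmetric arcs covering the right portion of the circle, while $\queen$ sweeps a (possibly empty) arc around $\ki{\pi}$ and then, crucially, repositions herself \emph{along the axis of symmetry}. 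That repositioning is precisely a motion of the form $K(0,\rho)=(2\rho-1,0)$, which is exactly why the abbreviations $K(\theta,\rho)$ and $AK(\theta,\rho)$ were set up; because $\queen$ moves along the diameter she keeps equal distance to the two symmetric servant arcs and so never has to commit to one servant. Feasibility — full coverage of the circle, continuity, and unit speed — then follows phase by phase from Lemmas~\ref{lem: move circle} and~\ref{lem: move line}, just as for $\textsc{Search}_1$, with the split points of the three arcs and the repositioning fraction $\rho$ becoming the free parameters.

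The symmetry is the main simplifying device: whichever servant reports the exit, reflection across the $x$-axis maps one situation to the other, so it suffices to analyze the exit being found by $\ser{1}$, plus the easy case of $\queen$ finding it herself at the end of her own sweep. For each phase of $\queen$'s motion I would compute the $(\ser{1},\queen)$- and $(\queen,\ser{1})$-critical angles $\phi,\theta$ and invoke Theorem~\ref{thm: critical angles and cost}: on a phase where $\coss{\phi}+\coss{\theta}>1$ throughout, the cost $x+\norm{\queen(x)-\ser{1}(x)}$ is monotonically decreasing, and where it is $<1$ it is increasing, so on each phase the supremum is attained at an endpoint. This is the exact analogue of Lemma~\ref{lem: monotonicity search 1}, and it collapses the continuum of exit positions to finitely many explicit costs $E_0,E_1,\dots$ — the queen finding the exit at the end of her own sweep, a servant's report reaching the queen while she is still sweeping, and a report reaching her during or after the diagonal repositioning move.

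Finally I would impose the balancing system $E_0=E_1=\cdots$ and solve the resulting transcendental equations for the optimal split angles and $\rho$, obtaining the claimed value $3.8327$; a short numerical check (verifying that each monotonicity hypothesis $\coss{\phi}+\coss{\theta}\gtrless1$ genuinely holds on the relevant phase at the optimal parameters) certifies that the balanced configuration is the true global worst case. I expect the main obstacle to be exactly this bookkeeping: with three arcs and a two-segment queen trajectory there are noticeably more phase-transition cases than in the $n=1$ proof, and one must check both that no interior maximum is overlooked (i.e.\ that each critical-angle sum keeps a constant sign on its phase) and that the chosen arc partition is self-consistent (the servants finish their sweeps in the intended order relative to $\queen$). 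Designing the family so that the balancing system actually admits a solution realizing $3.8327$ — rather than merely verifying feasibility of one fixed trajectory — is where the real work lies.
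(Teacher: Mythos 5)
Your trajectory family is, up to a rotation by $\alpha/2$, exactly the paper's $\textsc{Search}_2(\alpha,\rho)$: two mirror-image servant sweeps, a short queen sweep whose endpoints are the servants' deployment points, and a queen repositioning through the axis point $K(\alpha/2,\rho)$ (your $K(0,\rho)=(2\rho-1,0)$ in the rotated frame) toward $\ki{-\alpha/2}$; your symmetry reduction to the single servant $\ser{1}$ is also precisely the content of the paper's Lemma~\ref{lem: cost analysis for Search2}. The genuine gap is in your analysis plan. You assume that on each phase of the queen's motion the critical-angle sum $\coss{\phi}+\coss{\theta}$ keeps a constant sign relative to $1$, so that each phase's supremum of $t+\norm{\queen(t)-\ser{1}(t)}$ is attained at a phase endpoint, and that the optimum is then found by solving a closed-form balancing system $E_0=E_1=\cdots$. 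That is exactly what fails for $n=2$, and it is why the paper's proof in Section~\ref{sec4} has no analogue of Lemma~\ref{lem: monotonicity search 1}. At the optimal parameters $\alpha=0.6361$, $\rho=0.7944$, the cost along \emph{both} repositioning phases is non-monotone --- it rises and then falls --- with interior maximizers $\tau_2\approx 3.10$ (queen moving from $\ki{\pi}$ to $K(\alpha/2,\rho)$) and $\tau_3\approx 3.32$ (queen moving from $K(\alpha/2,\rho)$ to $\ki{-\alpha/2}$), each giving cost $\approx 3.8327$. Every endpoint cost is strictly smaller: the end of the queen's own sweep gives $1+\alpha+2\sinn{\alpha}\approx 2.824$ and the terminal wait gives $1+\pi-\alpha/2\approx 3.8235$. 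So a system built by balancing endpoint costs both underestimates the true worst case and selects wrong parameters; your own certification step (checking the sign of $\coss{\phi}+\coss{\theta}$ on each phase) would expose this, but then your proof collapses, since the whole reduction to ``finitely many explicit costs'' is gone.

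What the paper does instead, and what you would need to do: Lemma~\ref{lem: cost analysis for Search2} establishes only (i) the symmetry reduction to $\ser{1}$, (ii) a timing condition ensuring the queen reaches $\ki{-\alpha/2}$ before the servants, and (iii) continuity and differentiability of $t+\norm{\queen(t)-\ser{1}(t)}$ on each of the two repositioning intervals $I_2,I_3$; the two suprema over $I_2$ and $I_3$ are then located \emph{numerically} as interior critical points, and the parameters $(\alpha,\rho)$ are chosen so that these two interior maxima are balanced at $3.8327$, with the remaining candidates slack. There is no closed-form transcendental system of the kind you describe: the stationarity condition $\coss{\phi}+\coss{\theta}=1$ at the interior maximizers, coupled with equality of the two maxima, is solved numerically. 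Your proposal becomes correct once ``endpoint suprema plus closed-form balancing'' is replaced by ``interior suprema computed numerically, balanced against each other.''
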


Given parameters $\alpha, \rho$, we introduce the family of trajectories $\textsc{Search}_2(\alpha,\rho)$, see also Figure~\ref{fig:Search1NEW-S2}.
\begin{figure}[h!]
\centering
  \includegraphics[width=.61\linewidth]{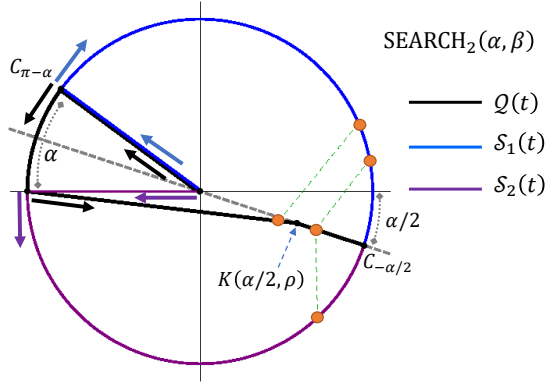}
\caption{
Algorithm $\textsc{Search}_2(\alpha,\beta)$ depicted for the optimal parameters of the algorithm. 
%Robot's trajectories are also clarified by some velocity vectors (matching the aforementioned colours) at special points.
}
\label{fig:Search1NEW-S2}
\end{figure}

%, corresponding to robots $\queen, \ser{1}, \ser{2}$.%, whose evacuation time is denoted by $E_2$.\\
%All robots start from the origin O, and they follow the specified trajectories till the exit is reported, in which case $\queen$ abandons her trajectory and goes directly to the exit (moving along the corresponding line segment between her current position and the reported position of the exit). 
%\vspace{-0.5cm}
\small{
\begin{center}
\begin{tabular}{|l|}
\hline
\textbf{Algorithm $\textsc{Search}_2(\alpha,\rho)$} \\
\hline \hline
\begin{tabular}{l || lllc}
\textit{Robot}		 	&	\# & \textit{Description} & \textit{Trajectory} & \textit{Duration}\\
\hline 
$\queen$	&	0		&	Move to point $\ki{\pi-\alpha}$										&$\li{O,\ki{\pi-\alpha}, t}$										&	$1$					\\
			&	1		&	Search the circle ccw till point $\ki{\pi}$			&$\ci{\pi-\alpha,t-1}$												&	$\alpha$					\\
			&	2		&	Move to point $K(\alpha/2, \rho)$										&$\li{\ki{\pi}, K(\alpha/2,\rho),t-(1+\alpha)}$				&	$AK(\alpha/2,\rho)$					\\
			&	3		&	Move to point $\ki{-\alpha/2}$										&$\li{K(\alpha/2,\rho),\ki{-\alpha/2}}$							&	$2-2\rho$					\\
\hline 
$\ser{1}$	&	0		&	Move to point $\ki{\pi-\alpha}$								&	$\li{O,\ki{\pi-\alpha}}$			&	1					\\
			&	1			& Search the circle cw till point $\ki{-\alpha/2}$				&	$\ci{\pi-\alpha,-t+1}$			&	$\pi-\alpha/2$					\\
\hline 
$\ser{2}$	&	0		& Move to point $\ki{\pi}$										&	$\li{O,\ki{\pi}}$			&	1					\\
			&	1			& Search the circle cw till point $\ki{-\alpha/2}$				&	$\ci{\pi,t-1}$			&	$\pi-\alpha/2$					\\
\hline			
\end{tabular}
\end{tabular}
\end{center}
}

Notice that, by definition of $\textsc{Search}_2(\alpha,\rho)$, robots' trajectories are continuous and feasible, meaning that the entire circle is eventually searched. Indeed, partitioning  the circle clockwise, we see that: 
the arc with endpoints $\ki{\pi }, \ki{\pi-\alpha}$ is searched by $\queen$, 
the arc with endpoints $\ki{\pi-\alpha }, \ki{-\alpha/2}$ is searched by $\ser{1}$, 
and the arc with endpoints $\ki{-\alpha/2 }, \ki{\pi}$ is searched by $\ser{2}$. 

It is immediate from the description of the trajectories that the search time is $1+\pi-\alpha/2$. Moreover 
$$
\interval{\queen}=[1,1+\alpha],~ 
\interval{\ser{1}}=
\interval{\ser{2}}=[1, 1+\pi-\alpha/2]
.$$
An illustration of the above trajectories for certain values of $\alpha,\rho$ can be seen in 
Figure~\ref{fig:Search1NEW-S2}.
%Figure~\ref{fig:Search2}.
%\begin{figure}[h!]
%\centering
%  \includegraphics[width=.3\linewidth]{Search2-2D.pdf}
%\caption{The trajectories of $\queen$ in black, $\ser{1}$ in blue and $\ser{2}$ in magenta. }
%\label{fig:Search2}
%\end{figure}
Now we make some observations, in order to calculate the worst case evacuation time. %as described in Formula~\eqref{equa: total cost}.

\begin{lemma}
\label{lem: cost analysis for Search2}
Suppose that $\pi-\alpha/2 \geq \alpha+AK(\alpha/2,\rho)+2-2\rho$. 
Then $\norm{\queen(x)-\ser{1}(t)}$ is continuous and differentiable in the time intervals 
$I_1,I_2,I_3$ of $\queen$'s phases 1,2,3, respectively. 
Moreover, the worst case evacuation time of $\textsc{Search}_2(\alpha,\rho)$ can be computed as 
$$
\max
\left\{
\begin{array}{l}
1+\alpha+2\sinn{\alpha}, \\
\sup_{t \in I_2 }
\left\{ t+\norm{\queen(t)-\ser{1}(t)} \right\} \\
\sup_{t \in I_3 }
\left\{ t+\norm{\queen(t)-\ser{1}(t)} \right\} \\
1+\pi-\alpha/2
\end{array}
\right\}
$$
where
$$ I_2=[1+\alpha, 1+\alpha+AK(\alpha/2,\rho)], I_3=[1+\alpha+AK(\alpha/2,\rho),3-2\rho+\alpha+AK(\alpha/2,\rho) ].$$
\end{lemma}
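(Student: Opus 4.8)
The plan is to evaluate the worst-case cost given by~\eqref{equa: total cost} over the two servants, after first checking that the stated hypothesis guarantees $\queen$ completes her three phases no later than the search time. First I would record that $AK(\alpha/2,\rho)\geq 0$ and $2-2\rho\geq 0$, so the hypothesis forces $\pi-\alpha/2\geq \alpha$, i.e. $\alpha\leq 2\pi/3$, and also that the closed intervals $I_1:=[1,1+\alpha]$, $I_2$, $I_3$ together with $I_4:=[1+\alpha+AK(\alpha/2,\rho)+2-2\rho,\ 1+\pi-\alpha/2]$ cover $\interval{\ser{1}}=\interval{\ser{2}}=[1,1+\pi-\alpha/2]$, where on $I_4$ the queen sits at $\ki{-\alpha/2}$. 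Since the last circle point is searched by the servants and not by $\queen$ alone, the cost reduces to $\max_{i\in\{1,2\}}\sup_{x\in\interval{\ser{i}}}\{x+\norm{\queen(x)-\ser{i}(x)}\}$; on each of $I_1,I_2,I_3$ both $\queen$ and $\ser{1}$ follow a single circular or linear arc (by Lemma~\ref{lem: move circle} and Lemma~\ref{lem: move line}), so $\norm{\queen(t)-\ser{1}(t)}$ is the norm of a smooth, nonvanishing vector and is therefore continuous and differentiable there.

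Next I would compute $\ser{1}$'s contribution phase by phase. On $I_1$ the two robots leave $\ki{\pi-\alpha}$ in opposite directions, so their angular separation is $2(t-1)$ and $x+\norm{\queen(t)-\ser{1}(t)}=t+2\sinn{t-1}$; its derivative $1+2\coss{t-1}$ is nonnegative for $t-1\leq 2\pi/3$, hence (using $\alpha\leq 2\pi/3$) the supremum over $I_1$ is the endpoint value $1+\alpha+2\sinn{\alpha}$. On $I_4$ the queen is fixed at $\ki{-\alpha/2}$ while $\ser{1}$ still travels at unit speed toward $\ki{-\alpha/2}$, so the chord $\norm{\queen(t)-\ser{1}(t)}$ is at most the remaining arc length $(1+\pi-\alpha/2)-t$; thus $x+\norm{\queen(t)-\ser{1}(t)}\leq 1+\pi-\alpha/2$, with equality at the right endpoint. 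Leaving the two middle phases as the suprema over $I_2$ and $I_3$, this shows that $\ser{1}$'s worst case equals the displayed maximum.

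The remaining, and main, step is to prove that $\ser{2}$ never dominates. Here I would exploit the reflection $R$ across the diameter through the antipodal points $\ki{\pi-\alpha/2}$ and $\ki{-\alpha/2}$, i.e. $R:\ki{\theta}\mapsto\ki{-\alpha-\theta}$. A direct check of the angles shows $\ser{2}(t)=R(\ser{1}(t))$ throughout $[1,1+\pi-\alpha/2]$, and that $R$ fixes the queen's phase-3 segment $\overline{K(\alpha/2,\rho)\,\ki{-\alpha/2}}$ pointwise; consequently $\norm{\queen(t)-\ser{2}(t)}=\norm{\queen(t)-\ser{1}(t)}$ on $I_3$. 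On $I_2$ the queen moves along $\overline{\ki{\pi}\,K(\alpha/2,\rho)}$, which lies in the closed half-plane bounded by the axis that contains $\ki{\pi}$; since $\ser{2}(t)$ lies in that same half-plane while $\ser{1}(t)=R(\ser{2}(t))$ lies in the opposite one, $\queen(t)$ is at least as close to $\ser{2}(t)$ as to $\ser{1}(t)$, giving $\norm{\queen(t)-\ser{2}(t)}\leq\norm{\queen(t)-\ser{1}(t)}$ pointwise on $I_2$. On $I_1$ the queen and $\ser{2}$ keep a constant angular gap $\alpha$, so $x+\norm{\queen(t)-\ser{2}(t)}=t+2\sinn{\alpha/2}\leq 1+\alpha+2\sinn{\alpha/2}\leq 1+\alpha+2\sinn{\alpha}$ (using $\sinn{\alpha/2}\leq\sinn{\alpha}$, valid for $0<\alpha\leq 2\pi/3$), and on $I_4$ the chord-versus-arc bound again yields $\leq 1+\pi-\alpha/2$.

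Combining, $\ser{2}$'s supremum is bounded above by the same four quantities; since the queen's own term contributes only $1+\alpha\leq 1+\alpha+2\sinn{\alpha}$, the worst-case evacuation time equals the claimed maximum. The delicate point to get right is the half-plane (same-side) comparison on $I_2$, which is where the geometry of the queen's inward dive and the identity $\ser{2}=R(\ser{1})$ must be reconciled carefully; everything else reduces to the monotonicity computation on $I_1$ and the elementary chord-versus-arc estimate on $I_4$.
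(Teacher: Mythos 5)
Your proof is correct and follows essentially the same route as the paper: the heart of both arguments is that $\ser{1}(t)$ and $\ser{2}(t)$ are mirror images across the diameter through $\ki{-\alpha/2}$, that $\queen$ lies on $\ser{2}$'s side of this line during phase 2 and on the line itself from the end of phase 2 onward, so that $\ser{1}$ dominates the evacuation cost. Your explicit reflection map $R$, the comparison $\sinn{\alpha/2}\leq\sinn{\alpha}$ handling $\ser{2}$ on $I_1$, and the chord-versus-arc bound on $I_4$ simply make rigorous the steps that the paper's proof leaves implicit.
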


\begin{proof}%[Proof of Lemma~\ref{lem: cost analysis for Search2}]
Note that the line passing through $O$ and $\ki{-\alpha/2}$, call it $\epsilon$, has the property that each point of it, including $K(\alpha/2, \rho)$ is equidistant from $\ser{1}, \ser{2}$.
Moreover, in the time window $[1+\alpha, 1+\alpha+AK(\alpha/2, \rho)]$ that only $\ser{1},\ser{2}$ are searching, $\queen$ stays below line $\epsilon$.
At time $1+\alpha+AK(\alpha/2, \rho)$, $\queen$ is, by construction, equidistant from $\ser{1}, \ser{2}$, a property that is preserved for the remaining of the execution of the algorithm. 
As a result, the evacuation time of $\textsc{Search}_2(\alpha,\rho)$ is given by $\sup_{1\leq t \leq 1+\pi-\alpha/2 }\{
t+\norm{\queen(t)-\ser{1}(t)}
\}$.

Now note that condition $\pi-\alpha/2 \geq \alpha+AK(\alpha/2,\rho)+2-2\rho$ guarantees that $\queen$ reaches point $\ki{-\alpha/2}$ no later than $\ser{1}$. Moreover, in each time interval $I_1,I_2,I_3$, $\queen$'s trajectory is differentiable (and so is $\ser{1}$'s trajectory). 
\end{proof}

Now Theorem~\ref{thm: 2 servants} can be proven by fixing parameters $\alpha,\rho$ for $\textsc{Search}_2(\alpha,\rho)$, in particular, $\alpha=0.6361, \rho=0.7944$. Notably, the performance of $\textsc{Search}_2(\alpha,\rho)$ is provably improvable (slightly) using a technique we will describe in the next section. 

\begin{proof}[Proof of Theorem~\ref{thm: 2 servants}]
We choose $\alpha=0.6361, \rho=0.7944$. The trajectories of Figure~\ref{fig:Search1NEW-S2} correspond exactly to those values. 
The time that $\queen$ needs to reach $\ki{-\alpha/2}$ equals $1+\alpha+AK(\alpha/2,\rho)+2-2\rho=3.6174$, while the time that $\ser{1}, \ser{2}$ reach the same point is $1+\pi-\alpha/2 =3.82354$. Hence, Lemma~\ref{lem: cost analysis for Search2} applies.

The worst case evacuation time during phase 1 is $1+a+2\sinn{\alpha}=2.82423$. 
The worst case evacuation time after $\queen$ reaches $\ki{-\alpha/2}$, equals $1+\pi-\alpha/2 =3.82354$.
Hence, it remains to compute the maxima of $t+\norm{\queen(t)-\ser{1}(t)}$ in the two intervals $I_2, I_3$, which can be done numerically using the trajectories of $\textsc{Search}_2(\alpha,\rho)$, since expression is differentiable in each of the intervals. 

To that end, when $t \in I_2=[1.6361, 3.2062]$ we have that 
\begin{align*}
\queen(t) &=\left(0.9931 t-2.62481,0.191866\, -0.11727 t \right) \\
\ser{1}(t) &=\left(\coss{3.50549\, -t},\sinn{3.50549\, -t}\right), 
\end{align*}
so that $t+\norm{\queen(t)-\ser{1}(t)}$ becomes
$$
t+
\sqrt{(-\sinn{3.50549\, -t}-0.11727 t+0.191866)^2+(-\coss{3.50549\, -t)+0.9931 t-2.62481}^2}
$$
When $t \in I_3=[3.2062, 3.6174]$ we have that
\begin{align*}
\queen(t) &= \left(0.949847 t-2.48613,0.818501\, -0.312715 t\right) 
\end{align*}
while $\ser{1}$'s trajectory equation remains unchanged, 
so that $t+\norm{\queen(t)-\ser{1}(t)}$ becomes
$$
t+
\sqrt{(-\sinn{3.50549\, -t)-0.312715 t+0.818501}^2+(-\coss{3.50549\, -t)+0.949847 t-2.48613}^2}
$$

In particular, it follows that 
$$
\sup_{t \in I_2 }
\left\{ t+\norm{\queen(t)-\ser{1}(t)} \right\}
\approx 
\sup_{t \in I_3 }
\left\{ t+\norm{\queen(t)-\ser{1}(t)} \right\}
\approx 3.8327\\
$$
with corresponding maximizers (with approximate values) $\tau_2=3.10066$ and $\tau_3=3.32114$, respectively.
Figure~\ref{fig:Search1NEW-S2} also depicts the locations of the optimizers, i.e the worst case locations on the circle for the exit to be found, along with the corresponding evacuation trajectory in dashed green colour.
\end{proof}

%\newpage

% 3 servants------------------------------------------------
% 3 servants------------------------------------------------
% 3 servants------------------------------------------------
% 3 servants------------------------------------------------
% 3 servants------------------------------------------------

\subsection{Evacuation Algorithm for \pe{3}}
\label{sec5}

%\marginpar{We should prove a lower bound for $n=3$ servants.}

%In this section we analyze the case of three servants.

\subsubsection{A Simple Algorithm}
\label{sec: 3 servant simple}

In this section we prove the following preliminary theorem (to be improved in Section~\ref{sec: 3 servant cost pres}).

\begin{theorem}
\label{thm: 3 servants}
\pe{3} can be solved in time 3.37882. 
%\todo{put here best known value}
\end{theorem}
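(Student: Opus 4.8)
The plan is to exhibit a parameterized family of feasible trajectories $\textsc{Search}_3(\cdot)$, in the same spirit as $\textsc{Search}_1$ and $\textsc{Search}_2$, and then fix the parameters so as to balance the worst-case evacuation cost at $\approx 3.37882$. First I would deploy $\queen$ and the three servants $\ser{1},\ser{2},\ser{3}$ from the origin $O$ to points on the unit circle in phase $0$, and then have each of the four robots sweep a dedicated arc so that, partitioning the circle clockwise, the four arcs exactly cover it. This immediately yields feasibility (every point of the circle is searched), continuity of all trajectories, and explicit search intervals $\interval{\queen},\interval{\ser{i}}$ computed from the phase durations, exactly as was done for $\textsc{Search}_2(\alpha,\rho)$. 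As in the earlier cases, after sweeping her own arc $\queen$ does not stop but retreats along line segments toward a carefully chosen interior point, so as to shrink her worst-case distance to whichever servant eventually reports the exit; a reflective symmetry of the deployment about the diameter through $\ki{\pi}$ would then let me cut down the number of essentially distinct servant cases.

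Next I would compute the worst-case evacuation cost via \eqref{equa: total cost} as a maximum over the events ``exit found by $\queen$'' and ``exit found by $\ser{i}$.'' For each such event, and each phase of $\queen$'s motion, I would apply Theorem~\ref{thm: critical angles and cost}: evaluating the two critical angles $\phi,\theta$ and checking the sign of $\coss{\phi}+\coss{\theta}-1$ determines whether $x+\norm{\queen(x)-\ser{i}(x)}$ is increasing, decreasing, or constant on that phase. This is precisely the mechanism behind Lemma~\ref{lem: monotonicity search 1} and Lemma~\ref{lem: cost analysis for Search2}, and it collapses each supremum either to a phase endpoint (giving a closed-form cost) or to a single differentiable one-variable optimization to be handled numerically. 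I would also record the geometric timing condition---the analogue of $\pi-\alpha/2 \geq \alpha+AK(\alpha/2,\rho)+2-2\rho$ from Lemma~\ref{lem: cost analysis for Search2}---ensuring $\queen$ reaches the last-searched point no later than the servants, so that the cost is genuinely governed by the servants and the algorithm is ``non-degenerate.''

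Finally I would fix the numerical parameter values and verify that the several candidate costs (queen-found at the end of her sweep, servant-found during each segment of the retreat, and servant-found at the common terminal point) all balance at $\approx 3.37882$, while the sign conditions of Theorem~\ref{thm: critical angles and cost} and the timing condition hold at those values. The hardest part is the design and analysis of $\queen$'s retreat: choosing a target point (and the number of segments leading to it) from which $\queen$ is simultaneously well-positioned against all three servants, and then certifying, segment by segment, the monotonicity of the cost through the critical-angle criterion. Once the retreat is fixed, the remaining balancing of the closed-form and numerically optimized cost expressions is routine, mirroring the proofs of Theorems~\ref{thm: 1 servant} and~\ref{thm: 2 servants}.
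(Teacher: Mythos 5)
Your overall methodology is exactly the paper's: the proof of Theorem~\ref{thm: 3 servants} introduces a parameterized family $\textsc{Search}_3(\alpha,\beta,\rho)$ in which the four robots sweep arcs covering the circle, $\queen$ retreats after her sweep (first toward an interior point $K(\frac{\alpha+\beta}{2},\rho)$, then to the common terminal point $\ki{-\frac{\alpha+\beta}2}$), one of the two long-arc servants is eliminated from the analysis by a mirror-symmetry argument, monotonicity and differentiability localize the worst cases to a few intervals (Lemma~\ref{lem: cost analysis for Search3}), a timing condition guarantees $\queen$ reaches the terminal point before $\ser{1},\ser{2}$ finish, and the parameters are then fixed numerically so that the surviving worst cases balance at $3.37882$. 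Up to this point your plan and the paper's proof coincide.

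There is, however, one concrete design commitment in your proposal that would make the final balancing step fail: the ``reflective symmetry of the deployment about the diameter through $\ki{\pi}$.'' The construction that achieves $3.37882$ is essentially \emph{asymmetric}: only $\ser{1}$ and $\ser{2}$ are mirror images of one another, and their mirror axis is the diameter through the terminal point $\ki{-\frac{\alpha+\beta}2}$, not through $\ki{\pi}$; meanwhile $\queen$'s arc (length $\alpha\approx 0.267$) and $\ser{3}$'s arc (length $\beta\approx 1.295$) have very different lengths and are \emph{adjacent}, swept in the same rotational direction, so that $\ser{3}$ finishes its sweep at time $1+\beta$ at the point $\ki{\pi-\alpha}$, i.e.\ right next to the region $\queen$ just searched and close to her retreat path --- this is precisely what pins the worst case against $\ser{3}$ at $\approx 3.379$. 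If instead the deployment is genuinely mirror-symmetric about a diameter, then (since the queen is unique) either (a) $\queen$'s and $\ser{3}$'s arcs are each centered on the axis, hence lie on opposite sides of the circle, so when $\ser{3}$ finds the exit near the end of its sweep the queen is roughly a diameter away, and one can check that the constraints coming from the cost against $\ser{3}$, the terminal search time, and the queen's distance to the \emph{two} mirror-image last-searched points of $\ser{1},\ser{2}$ are jointly infeasible below roughly $3.6$; or (b) $\queen$'s arc is mirror-paired with $\ser{3}$'s (equal lengths $c$, meeting at $\ki{\pi}$), in which case the terminal point is the antipode $\ki{0}$, the queen must cross essentially the whole disk after her sweep, and the cost is at least $\min_{c}\max\left\{1+\pi-c,\ 1+c+2\cos(c/2)\right\}\approx 3.52$. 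Either way $3.37882$ is out of reach. So the missing idea is the asymmetric placement of $\ser{3}$'s long arc immediately behind $\queen$'s short arc, with $\alpha\neq\beta$ kept as independent parameters and the symmetry imposed only on the $\ser{1},\ser{2}$ pair; with that single correction, the rest of your plan is the paper's proof.
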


Given parameters $\alpha,\beta, \rho$, we introduce the family of trajectories $\textsc{Search}_3(\alpha,\beta,\rho)$, corresponding to robots $\queen, \ser{1}, \ser{2},\ser{3}$, see also Figure~\ref{fig:Search1NEW-S3}.
\begin{figure}[h!]
\centering
  \includegraphics[width=.61\linewidth]{newSearch3-2D.pdf}
\caption{
Algorithm $\textsc{Search}_3(\alpha,\beta,\rho)$ depicted for the optimal parameters of the algorithm. 
%Robot's trajectories are also clarified by some velocity vectors (matching the aforementioned colours) at special points.
}
\label{fig:Search1NEW-S3}
\end{figure}
%All robots start from the origin O, and they follow the specified trajectories till the exit is reported, in which case $\queen$ abandons her trajectory and goes directly to the exit (moving along the corresponding line segment between her current position and the reported position of the exit). 
%\vspace{-0.5cm}
\small{
\begin{center}
\begin{tabular}{|l|}
\hline
\textbf{Algorithm $\textsc{Search}_3(\alpha,\beta, \rho)$} \\
\hline \hline
\begin{tabular}{l || lllc}
\textit{Robot}		 	&	\# & \textit{Description} & \textit{Trajectory} & \textit{Duration}\\
\hline 
$\queen$	&	0		&	Move to point $\ki{\pi-\alpha}$										&$\li{O,\ki{\pi-\alpha}, t}$										&	$1$					\\
			&	1		&	Search the circle ccw till point $\ki{\pi}$			&$\ci{\pi-\alpha,t-1}$												&	$\alpha$					\\
			&	2		&	Move to point $K(\frac{\alpha+\beta}2, \rho)$										&$\li{\ki{\pi}, K(\frac{\alpha+\beta}2,\rho),t-(1+\alpha)}$				&	$AK(\frac{\alpha+\beta}2,\rho)$					\\
			&	3		&	Move to point $\ki{-\frac{\alpha+\beta}2}$										&$\li{K(\frac{\alpha+\beta}2,\rho),\ki{-\frac{\alpha+\beta}2}}$							&	$2-2\rho$					\\
\hline 
$\ser{1}$	&	0		&	Move to point $\ki{\pi-\alpha-\beta}$								&	$\li{O,\ki{\pi-\alpha-\beta}}$			&	1					\\
			&	1			& Search the circle cw till point $\ki{-\frac{\alpha+\beta}2}$				&	$\ci{\pi-\alpha-\beta,-t+1}$			&	$\pi-\frac{\alpha+\beta}2$					\\
\hline 
$\ser{2}$	&	0		&	Move to point $\ki{\pi}$								&	$\li{O,\ki{\pi}}$			&	1					\\
			&	1			& Search the circle ccw till point $\ki{-\frac{\alpha+\beta}2}$				&	$\ci{\pi,t-1}$			&	$\pi-\frac{\alpha+\beta}2$					\\
\hline 
$\ser{3}$	&	0		&	Move to point $\ki{\pi-\alpha-\beta}$								&	$\li{O,\ki{\pi-\alpha-\beta}}$			&	1					\\
			&	1			& Search the circle ccw till point $\ki{-\alpha}$				&	$\ci{\pi-\alpha-\beta,-t+1}$			&	$\beta$					\\
\hline 

\end{tabular}
\end{tabular}
\end{center}
}

As before, it is immediate that, in $\textsc{Search}_3(\alpha,\beta,\rho)$, robots' trajectories are continuous and feasible, meaning that the entire circle is eventually searched.
In particular, 
the arc with endpoints $\ki{\pi }, \ki{\pi-\alpha}$ is searched by $\queen$, 
the arc with endpoints $\ki{\pi-\alpha-\beta }, \ki{-\frac{\alpha+\beta}2}$ is searched by $\ser{1}$, 
the arc with endpoints $\ki{-\pi}, \ki{-\frac{\alpha+\beta}2}$ is searched by $\ser{2}$, 
and the arc with endpoints $\ki{\pi-\alpha }, \ki{\pi -\alpha-\beta}$ is searched by $\ser{3}$.
Also, the search time is $1+\pi-\frac{\alpha+\beta}2$, and
$$
\interval{\queen}=[1,1+\alpha],~ 
\interval{\ser{1}}=
\interval{\ser{2}}=[1, 1+\pi-\frac{\alpha+\beta}2],~
\interval{\ser{3}}=[1, 1+\beta]
.$$
An illustration of the above trajectories for certain values of $\alpha,\beta,\rho$ can be seen in Figure~\ref{fig:Search1NEW-S3}.

\ignore{
\begin{figure}[h!]
\centering
  \includegraphics[width=.33\linewidth]{newSearch3-2D.pdf}
~~~
  \includegraphics[width=.33\linewidth]{newSearch3-2Dbetter.pdf}  
\caption{(Left: Simple Algorithm) The trajectories of $\queen$ in black, $\ser{1}$ in blue, $\ser{2}$
 in magenta and $\ser{3}$ in brown. (Right: Improved Algorithm) The trajectories of $\queen$ in black, $\ser{1}$ in blue, $\ser{2}$ in magenta and $\ser{3}$ in brown. 
Robot's trajectories are also clarified by some velocity vectors (matching the aforementioned colours) at special points.
The parameters of the algorithms are depicted in both figures.
Finally the orange points on the perimeter of the disc correspond to the worst adversarial placements of the treasure, which due to our optimality conditions induce the same evacuation cost.
The orange points in queen's trajectories correspond to the queen's positioning when the treasures are reported, in  the worst cost induced cases.
The green dashed line depict queens' trajectory after queen abandones her trajectory and moves toward the reported exit following a straight line.
}
\label{fig:Search3}
\end{figure}
}

Before we prove Theorem~\ref{thm: 3 servants}, we need to make some observation, in order to calculate the worst case evacuation time. %as described in Formula~\eqref{equa: total cost}.

\begin{lemma}
\label{lem: cost analysis for Search3}
Suppose that 
$
\alpha\leq \beta
$,
$\alpha+AK(\frac{\alpha+\beta}2,\rho)\geq \beta$,
and
$\pi-\frac{\alpha+\beta}2 \geq \alpha+AK(\frac{\alpha+\beta}2,\rho)+2-2\rho$.
Then the following functions are continuous and differentiable in each associated time intervals:
$\norm{\queen(x)-\ser{3}(t)}$ 
in $I_1=\{t\geq 0:~\alpha\leq t-1\leq \beta \}$,
$\norm{\queen(x)-\ser{1}(t)}$ 
in $I_2=\{t \geq 0:~ |t-1-\alpha| \leq AK(\frac{\alpha+\beta}2,\rho)\}$ and in $I_3=\{t\geq 0:~|t-1-\alpha-AK(\frac{\alpha+\beta}2,\rho)| \leq 2-2\rho\}$.
Moreover, the worst case evacuation time of $\textsc{Search}_3(\alpha,\beta,\rho)$ can be computed as 
$$
\max
\left\{
\begin{array}{l}
\sup_{t \in I_1 }
\left\{ t+\norm{\queen(t)-\ser{3}(t)} \right\} \\
\sup_{t \in I_2 }
\left\{ t+\norm{\queen(t)-\ser{1}(t)} \right\} \\
\sup_{t \in I_3 }
\left\{ t+\norm{\queen(t)-\ser{1}(t)} \right\} \\
1+\pi-\frac{\alpha+\beta}2
\end{array}
\right\}
$$
\end{lemma}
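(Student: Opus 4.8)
The plan is to reproduce, for four robots, the reduction carried out for $\textsc{Search}_2$ in Lemma~\ref{lem: cost analysis for Search2}, handling the extra servant $\ser{3}$ on its own. Let $\epsilon$ denote the diameter through $O$ and $\ki{-\frac{\alpha+\beta}2}$, and abbreviate $A_K:=AK(\frac{\alpha+\beta}2,\rho)$. First I would record the symmetry that drives everything: writing $s=t-1$, servant $\ser{1}$ occupies angle $\pi-\alpha-\beta-s$ and $\ser{2}$ occupies angle $\pi+s$, and the reflection across $\epsilon$ (sending an angle $\phi$ to $-(\alpha+\beta)-\phi$) interchanges these two for every $s$. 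Hence $\epsilon$ is the perpendicular bisector of the segment $\ser{1}(t)\ser{2}(t)$ at all times; since moreover $\ser{2}(t)$ always lies on the side of $\epsilon$ containing $\ki{\pi}$ (it traverses the arc from $\ki{\pi}$ to $\ki{-\frac{\alpha+\beta}2}$), every point of that closed half-plane is at least as far from $\ser{1}(t)$ as from $\ser{2}(t)$. In particular $\norm{\queen(t)-\ser{1}(t)}\geq\norm{\queen(t)-\ser{2}(t)}$ whenever $\queen(t)$ lies in that half-plane or on $\epsilon$.

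Second, I would verify that the queen stays in that half-plane during her phases $0$--$2$ and on $\epsilon$ during phase $3$. The phase-2 chord ends at $K(\frac{\alpha+\beta}2,\rho)\in\epsilon$ and starts at $\ki{\pi}$, so by convexity it stays on the $\ki{\pi}$-side; phase 3 runs along $\epsilon$ by construction; and the phase-1 arc $\ki{\pi-\alpha}\to\ki{\pi}$ lies on the $\ki{\pi}$-side precisely because $\pi-\alpha\geq\pi-\frac{\alpha+\beta}2$, which is the hypothesis $\alpha\leq\beta$. Consequently $\norm{\queen-\ser{1}}\geq\norm{\queen-\ser{2}}$ throughout the whole execution, so $\ser{2}$ contributes nothing beyond $\ser{1}$ to the outer maximum of \eqref{equa: total cost} and may be dropped entirely.

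Third, I would localize the two surviving servants to single smooth pieces of the queen's path and dispatch the boundary windows by monotonicity. Using $\pi-\frac{\alpha+\beta}2\geq\alpha+A_K+2-2\rho$ to order the phases, $\interval{\ser{1}}$ splits into the phase-1 window, the phase-2 window $I_2$, the phase-3 window $I_3$, and a final idle window $[1+\alpha+A_K+2-2\rho,\,1+\pi-\frac{\alpha+\beta}2]$. On $I_2$ (resp.\ $I_3$) the queen travels a single segment and $\ser{1}$ a single arc, and their distance stays positive, so $\norm{\queen-\ser{1}}$ is continuous and differentiable there; the same argument on the phase-2 segment gives differentiability of $\norm{\queen-\ser{3}}$ on $I_1=[1+\alpha,1+\beta]$, where $\alpha+A_K\geq\beta$ guarantees $I_1$ sits inside the queen's phase 2. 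On the phase-1 window $\norm{\queen(t)-\ser{1}(t)}=2\sinn{\beta/2+s}$, so $t+\norm{\queen-\ser{1}}$ has derivative $1+2\coss{\beta/2+s}>0$ in the relevant range and its supremum is attained at $s=\alpha$, the left endpoint of $I_2$, and is thereby absorbed into $\sup_{I_2}$. On the final idle window the queen rests at $\ki{-\frac{\alpha+\beta}2}$ while $\ser{1}$ closes a remaining arc of length $u$, giving cost $(1+\pi-\frac{\alpha+\beta}2-u)+2\sinn{u/2}$, which is nonincreasing in $u$ and hence maximized at $u=0$ with value $1+\pi-\frac{\alpha+\beta}2$, the fourth term. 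Finally, for $\ser{3}$ on the leftover $[1,1+\alpha]$ the queen--$\ser{3}$ distance is the constant $2\sinn{\beta/2}$, so $t+\norm{\queen-\ser{3}}$ is strictly increasing and its supremum occurs at $t=1+\alpha\in I_1$, absorbing that piece into $\sup_{I_1}$.

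Assembling these facts, the outer maximum over $\ser{1},\ser{2},\ser{3}$ collapses to $\ser{1}$ on $I_2\cup I_3$ together with the final-window value $1+\pi-\frac{\alpha+\beta}2$, plus $\ser{3}$ on $I_1$, which is exactly the claimed four-term expression. The genuinely delicate step is the second one: confirming that the queen never strays to the $\ser{1}$-side of $\epsilon$ during phases $1$--$2$, since that is what forces $\ser{1}$ (and never $\ser{2}$) to realize the worst case; the remaining ingredients are elementary monotonicity checks at the phase boundaries together with the smoothness of the Euclidean norm away from the origin.
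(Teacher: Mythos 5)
Your proof is correct and follows essentially the same route as the paper's: the line $\epsilon$ through $O$ and $\ki{-\frac{\alpha+\beta}2}$ acts as the perpendicular bisector of $\ser{1}(t)\ser{2}(t)$ so that $\ser{2}$ can be dropped, monotonicity absorbs the boundary windows (the $[1,1+\alpha]$ window for $\ser{3}$, $\ser{1}$'s early window, and the final idle window yielding the term $1+\pi-\frac{\alpha+\beta}2$) into the four displayed suprema, and smoothness on $I_1,I_2,I_3$ holds because each pairs a single segment of $\queen$'s path with a single arc of a servant. You are in places more explicit than the paper --- notably the reflection argument for $\epsilon$ and the treatment of $\ser{1}$'s cost during $\queen$'s phase 1, which the paper glosses over (though your assertion that $1+2\coss{\beta/2+s}>0$ on the relevant range deserves a one-line justification from the hypotheses) --- but the underlying decomposition and key ideas coincide.
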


\begin{proof}%[Proof of Lemma~\ref{lem: cost analysis for Search3}]
Conditions 
$
\alpha\leq \beta
$ and
$\alpha+AK(\frac{\alpha+\beta}2,\rho)\geq \beta$
mean that
$\queen$ stops searching no later than $\ser{3}$, 
and that when $\ser{3}$ stops searching $\queen$ is still in her phase 2, 
respectively. 

The line passing through $O$ and $\ki{-(\alpha+\beta)/2}$, call it $\epsilon$, has the property that each point of it, including $K(\frac{\alpha+\beta}2, \rho)$ is equidistant from $\ser{1}, \ser{2}$.
Moreover, while $\ser{1},\ser{2}$ are searching, $\queen$ never goes above line $\epsilon$.
At time $1+\alpha+AK(\frac{\alpha+\beta}2, \rho)$, $\queen$ is, by construction, equidistant from $\ser{1}, \ser{2}$, a property that is preserved for the remaining of the execution of the algorithm. 
As a result, $\ser{2}$ can be ignored in the performance analysis, and when it comes to the case that $\ser{1}$ finds the exit, the evacuation cost is given by the supremum of  $t+\norm{\queen(t)-\ser{1}(t)}$ in the time interval $I_2$ or in the interval $I_3$. Note that in both intervals, the evacuation cost is continuous and differentiable, by construction. 

If the exit is reported by $\ser{3}$ then the evacuation cost is $t+\norm{\queen(t)-\ser{3}(t)}$ for $t\in [1,1+\beta]$. However, it is easy to see that the cost is strictly increasing for all $t\in [1,1+\alpha]$ (in fact it is linear). Since the evacuation cost is also continuous, we may restrict the analysis in interval $I_1$. 

Lastly, observe that $\pi-\frac{\alpha+\beta}2 \geq \alpha+AK(\frac{\alpha+\beta}2,\rho)+2-2\rho$
implies that $\ser{1},\ser{2}$ reach point $\ki{-(\alpha+\beta)/2}$ no earlier than $\queen$. Hence $\queen$ waits at $\ki{-(\alpha+\beta)/2}$ till the search of the circle is over, which can be easily seen to induce the worse evacuation time after $\queen$ reaches $\ki{-(\alpha+\beta)/2}$.
\end{proof}

Next, we prove Theorem~\ref{thm: 3 servants} by fixing parameters $\alpha,\beta,\rho$ for $\textsc{Search}_3(\alpha,\beta,\rho)$.
%, in particular $\alpha=0.26738, \beta=1.2949, \rho=0.70685$.

\begin{proof}[Proof of Theorem~\ref{thm: 3 servants}]
We choose 
$\alpha=0.26738, 
\beta=1.2949, 
\rho=0.70685$.
The trajectories of Figure~\ref{fig:Search1NEW-S3} correspond exactly to those values. 
The time that 
$\queen$ needs to reach $\ki{-\frac{\alpha+\beta}2}$ 
equals $1+\alpha+AK(\frac{\alpha+\beta}2,\rho)+2-2\rho=3.17984$, 
%1 + aaa + AK[aaa, bbb, rrr] + 2 - rrr
while the time that $\ser{1}, \ser{2}$ reach the same point is 
$1+\pi-\frac{\alpha+\beta}2 =3.36045$. 
%1 + Pi - (aaa + bbb)/2
Hence, Lemma~\ref{lem: cost analysis for Search3} applies.

From the above, it is immediate that the worst evacuation time after $\queen$ reaches $\ki{-(\alpha+\beta)/2}$ equals $1+\pi-\frac{\alpha+\beta}2 =3.36045$.
Hence, it remains to compute the maxima of $t+\norm{\queen(t)-\ser{3}(t)}$ in interval $I_1$, and of $t+\norm{\queen(t)-\ser{1}(t)}$ in intervals $I_2$, $I_3$. 

To that end, when $t \in I_1=[1.26738,2.2949]$ we have that 
%1 + aaa
%1 + bbb
\begin{align*}
\queen(t) &=\left(-2.23643 + 0.97558 t, 0.278372 - 0.219643 t \right) \\
%Expand[linePoints[pointA, pointK[aaa, bbb, rrr], t - aaa - 1]]
\ser{3}(t) &=\left(\cos (t+0.579313),\sin (t+0.579313)\right), 
%cycle[-aaa - bbb + t + Pi - 1]
\end{align*}
so that $t+\norm{\queen(t)-\ser{3}(t)}$ becomes
$$
t+
\sqrt{(-0.219643 t-\sin (t+0.579313)+0.278372)^2+(0.97558 t-\cos (t+0.579313)-2.23643)^2}
%Expand[newNorm[  Expand[linePoints[pointA, pointK[aaa, bbb, rrr], t - aaa - 1]] -    cycle[-aaa - bbb + t + Pi - 1]]]
$$
in which case
$$
\sup_{t \in I_1 }
\left\{ t+\norm{\queen(t)-\ser{3}(t)} \right\}
=
1+\beta+
\norm{\queen(1+\beta)-\ser{3}(1+\beta)}
\approx 3.37882
%costcase1[1 + bbb]
$$

When $t \in I_2=[1.26738,2.59354]$,
%1 + aaa
%1 + aaa + AK[aaa, bbb, rrr]
 $\queen$'s trajectory is the same as in $I_1$ and 
\begin{align*}
\ser{1}(t) &=\left(\cos (2.57931\, -t),\sin (2.57931\, -t)\right), 
%cycle[Pi - aaa - bbb - t + 1]
\end{align*}
so that $t+\norm{\queen(t)-\ser{1}(t)}$ becomes
$$
t+
\sqrt{(-\sin (2.57931\, -t)-0.219643 t+0.278372)^2+(-\cos (2.57931\, -t)+0.97558 t-2.23643)^2}.
%costcase2[t]
$$

When $t \in I_3=[2.59354,3.17984]$, $\ser{1}$'s trajectory is the same as in $I_2$ and
%1 + aaa + AK[aaa, bbb, rrr]
%1 + aaa + AK[aaa, bbb, rrr] + 2 - rrr
\begin{align*}
\queen(t) &=\left(-1.54793 + 0.710111 t, 1.5348 - 0.704089 t \right), 
%FullSimplify[ linePoints[pointK[aaa, bbb, rrr],   pointC[aaa, bbb], -1 + t - (aaa + AK[aaa, bbb, rrr])]]
\end{align*}
so that $t+\norm{\queen(t)-\ser{1}(t)}$ becomes
$$
t+
\sqrt{(\sin (2.57931\, -t)+0.704089 t-1.5348)^2+(\cos (2.57931\, -t)-0.710111 t+1.54793)^2}.
%t + newNorm[  cycle[Pi - aaa - bbb - t + 1] -    FullSimplify[    linePoints[pointK[aaa, bbb, rrr],      pointC[aaa, bbb], -1 + t - (aaa + AK[aaa, bbb, rrr])]]]
$$

Numerically 
\begin{align*}
\sup_{t \in I_2 }
\left\{ t+\norm{\queen(t)-\ser{1}(t)} \right\}
&=
 \tau_2+\norm{\queen(\tau_2)-\ser{1}(\tau_2)} 
\approx
3.37882 \\
%NMaximize[{costcase2[t], 1 + aaa <= t <= 1 + aaa + AK[aaa, bbb, rrr] }, {t}]
% {3.3789, {t -> 2.3404}}
\sup_{t \in I_3 }
\left\{ t+\norm{\queen(t)-\ser{1}(t)} \right\}
&=
 \tau_3+\norm{\queen(\tau_3)-\ser{1}(\tau_3)} 
\approx
3.37882
% NMaximize[{costcase3[t],  1 + aaa + AK[aaa, bbb, rrr] <= t <=    1 + aaa + AK[aaa, bbb, rrr] + 2 - rrr }, {t}]
%{3.37881, {t -> 2.84767}}
\end{align*}
where $\tau_2\approx2.34029$ and $\tau_3\approx2.84758$. 
% NMaximize[{costcase2[t], 1.26738 <= t <= 2.59354}, {t}]
% NMaximize[{costcase3[t], 2.59354 <= t <= 3.17984}, {t}]
%Figure~\ref{fig:Search3} depicts the locations of the optimizers, i.e the worst case locations on the circle for the exit to be found by any of the robots. Figure also shows corresponding locations of $\queen$ on her trajectory, along with the corresponding evacuation trajectory in dashed green colour.
\end{proof}

\subsubsection{Improved Search Algorithm}
\label{sec: 3 servant cost pres}
In this section we improve the upper bound of Theorem~\ref{thm: 3 servants} by 0.00495 additive term.

\begin{theorem}
\label{thm: 3 servants better}
\pe{3} can be solved in time 3.37387. 
\end{theorem}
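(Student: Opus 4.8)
The plan is to keep the servants' trajectories of $\textsc{Search}_3$ structurally intact --- three arcs searched by $\ser{1},\ser{2},\ser{3}$, with $\ser{1},\ser{2}$ symmetric about the line $\epsilon$ through $O$ and $\ki{-\frac{\alpha+\beta}2}$ --- but to replace $\queen$'s piecewise-linear repositioning (her phases $2$ and $3$, the straight segments from $\ki{\pi}$ to $K(\frac{\alpha+\beta}2,\rho)$ and on to $\ki{-\frac{\alpha+\beta}2}$) by a single \emph{cost-preserving} curve. The motivation is that in the simple algorithm the evacuation cost against $\ser{1}$ rises to twin peaks, at the maximizers $\tau_2\in I_2$ and $\tau_3\in I_3$, both equal to $3.37882$; along a cost-preserving curve this cost is instead \emph{constant}, so the two peaks collapse into a single level that we can push strictly below $3.37882$.

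The curve itself comes directly from Lemma~\ref{thm: cost preservation}. Let $\phi$ be the $(\ser{1},\queen(t))$-critical angle and $\theta$ the $(\queen,\ser{1}(t))$-critical angle. By Lemma~\ref{thm: cost preservation} the cost $t+\norm{\queen(t)-\ser{1}(t)}$ is constant exactly when $\coss{\phi}+\coss{\theta}=1$. Since $\ser{1}(t)=\ci{\pi-\alpha-\beta,-t+1}$ has explicit position and velocity, $\coss{\phi}$ is a known function of $\queen(t)$ and $t$; imposing $\coss{\theta}=1-\coss{\phi}$ fixes the direction of $\queen'(t)$ relative to $\overrightarrow{\queen(t)\ser{1}(t)}$, and together with the unit-speed constraint of Lemma~\ref{lem: unit speed} this determines $\queen'(t)$ up to a choice of branch. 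Selecting the branch that keeps $\queen$ on the $\ser{2}$-side of $\epsilon$ yields a first-order ODE whose solution, started at $\ki{\pi}$ at time $1+\alpha$, is $\queen$'s new trajectory, along which the $\ser{1}$-cost is frozen at its value when $\queen$ leaves $\ki{\pi}$, namely $1+\alpha+2\sinn{\alpha+\beta/2}$.

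I would then redo the cost accounting as in Lemma~\ref{lem: cost analysis for Search3}. Keeping $\queen$ on the $\ser{2}$-side of $\epsilon$ preserves the reduction that makes $\ser{1}$, rather than $\ser{2}$, the binding servant, so the worst case is the maximum of three quantities: the constant $\ser{1}$-cost along the curve, the quantity $\sup_{t\in I_1}\{t+\norm{\queen(t)-\ser{3}(t)}\}$ arising when $\ser{3}$ reports, and the final search time $1+\pi-\frac{\alpha+\beta}2$ incurred once $\queen$ has reached $\ki{-\frac{\alpha+\beta}2}$ and waits. Re-optimizing the free parameters of the modified family so as to equalize these three quantities, I expect to reach the common value $3.37387$, establishing the theorem.

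The main obstacle is that this cost-preservation ODE does not in general admit a closed form, so $\queen$'s trajectory --- and with it the feasibility conditions: that the curve stays inside the disk, that it reaches $\ki{-\frac{\alpha+\beta}2}$ no later than $\ser{1},\ser{2}$, and that $\queen$ really remains on the $\ser{2}$-side of $\epsilon$ so $\ser{2}$ never becomes binding --- must be verified by numerical integration. A secondary subtlety is to confirm that freezing the $\ser{1}$-cost does not let either the $\ser{3}$-cost or the final search time exceed the target; this is exactly what the three-way equalization is designed to prevent.
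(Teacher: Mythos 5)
Your high-level idea --- use the critical-angle condition $\coss{\phi}+\coss{\theta}=1$ of Lemma~\ref{thm: cost preservation} to make $t+\norm{\queen(t)-\ser{1}(t)}$ constant along a curve and then re-balance the parameters --- is precisely the paper's cost-preservation technique. But your specific construction has a genuine gap: the cost-preserving ODE \emph{cannot be started} at $\ki{\pi}$ at time $1+\alpha$, so the $\ser{1}$-cost cannot be frozen at the level $1+\alpha+2\sinn{\alpha+\beta/2}$. At that moment $\ser{1}$ is at $\ki{\pi-2\alpha-\beta}$ and is receding from $\queen$ along the circle; by the tangent--chord angle, the $(\ser{1},\queen(1+\alpha))$-critical angle $\phi$ satisfies $\coss{\phi}=-\coss{\alpha+\beta/2}$, which is strictly negative for every parameter choice in the relevant regime $2\alpha+\beta<\pi$. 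Hence your prescription $\coss{\theta}=1-\coss{\phi}>1$ has no solution for the queen's heading --- this is exactly the hypothesis $\coss{\phi}\geq 0$ of Theorem~\ref{thm: cost preservation stationary point} failing --- and, by Theorem~\ref{thm: critical angles and cost}, $t+\norm{\queen(t)-\ser{1}(t)}$ is strictly increasing immediately after time $1+\alpha$ \emph{no matter how $\queen$ moves}. The early rise of the $\ser{1}$-cost is geometrically unavoidable; the only thing that can be flattened is the part of the cost curve after its peak.

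That is how the paper repairs the idea, and it also explains why the improvement is so small. In $\textsc{N-Search}_3(\alpha,\beta,\rho)$ the queen keeps the straight-line phase toward $K(\frac{\alpha+\beta}2,\rho)$, lets the $\ser{1}$-cost climb to a stationary point at time $\tau_0$ (Condition i), and only there switches to the cost-preserving curve: at a stationary point $\coss{\phi}+\coss{\theta}=1$ holds automatically, so $\coss{\phi}\geq 0$ is guaranteed, Theorem~\ref{thm: cost preservation stationary point} applies, and the trajectory is even differentiable at $\tau_0$. The frozen level is the \emph{peak} value $\tau_0+\norm{\queen(\tau_0)-\ser{1}(\tau_0)}$, not the much smaller value at $1+\alpha$; the curve is followed until the time $\tau_1$ at which $\queen$ becomes equidistant from $\ser{1}$ and $\ser{2}$ (Conditions iii--iv), after which she heads straight to $\ki{-\frac{\alpha+\beta}2}$. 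The entire gain over Theorem~\ref{thm: 3 servants} is the collapse of the two equal peaks at $\tau_2,\tau_3$ into one flat plateau, which after re-optimizing $(\alpha,\beta,\rho)$ yields $3.37882\to 3.37387$. Note that your intended frozen level is about $2.85$ at the relevant parameters, far below the twin-peak value $3.37882$: were such a freeze available, the resulting bound would improve dramatically rather than by $0.00495$, and the impossibility argument above is what rules this out.
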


The main idea can be described, at a high level, as a cost preservation technique. 
By the analysis of Algorithm $\textsc{Search}_3(\alpha,\beta, \rho)$ for the value of parameters of $\alpha,\beta,\rho$ as in the proof of Theorem~\ref{thm: 3 servants}, we know that there are is a critical time window $[\tau_2, \tau_3]$ so that the total evacuation time is the same if the exit is found by $\ser{1}$ either at time $\tau_2$ or $\tau_3$, and strictly less for time moments strictly in-between. In fact, during time $[\tau_2, 1+\alpha+AK(\frac{\alpha+\beta}{2},\rho)]$ $\queen$ is executing phase 2, and in the time window $[1+\alpha+AK(\frac{\alpha+\beta}{2},\rho), \tau_3]$ $\queen$ is executing phase 3 of $\textsc{Search}_3(\alpha,\beta, \rho)$. 

From the above, it is immediate that we can lower $\queen$'s speed in the time window $[\tau_2, \tau_3]$ so that the evacuation time remains \textit{unchanged} no matter when $\ser{1}$ finds the exit in the same time interval (notably, $\ser{3}$ has finished searching prior to $\tau_2$ and 
$\norm{\queen(t) - \ser{1}} \geq \norm{\queen(t) - \ser{2}}$). But this also implies that we must be able to maintain the evacuation time even if we preserve speed 1 for $\queen$, that will in turn allow us to twist parameters $\alpha,\beta, \rho$, hopefully improving the worst case evacuation time. 
We show this improvement is possible by using the
following technical observation

%\paragraph{A Technical Observation}
\begin{theorem}
\label{thm: cost preservation stationary point}
Consider point $Q=(q_1,q_2)\in \reals^2$. Let $\ser{}(t)$ be the trajectory of an object $\ser{}$ moving at speed 1, where $t\geq 0$, and denote by $\phi$ the $(\ser{},Q)$-critical angle at time $t=0$. 
Assuming that $\coss{\phi}\geq 0$, then there is some $\tau>0$, and a trajectory $\queen(t)=(f(t),g(t))$ of a speed-1 object, where $t\geq 0$, so that 
$
t+\norm{\queen(t) - \ser{}(t)}
$
remains constant, for all $t \in [0,\tau]$. Moreover, $\queen(t)$ can be determined by solving the system of differential equations 
\begin{align}
&\left(f'(t)\right)^2 + \left(g'(t)\right)^2 =1 \label{equa: speed 1}\\
&t+\norm{\queen(t)-\ser{}(t)} = \norm{\ser{}(0)-Q} \label{equa: distance pres}\\
&(f(0),g(0))=(q_1,q_2). \label{equa: initial condition}
\end{align}
\end{theorem}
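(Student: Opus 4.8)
The plan is to interpret the claim as an existence result for an ordinary differential equation (ODE) and solve it via a standard local-existence argument. The quantity we wish to keep constant is $\Phi(t):=t+\norm{\queen(t)-\ser{}(t)}$. First I would observe that equation~\eqref{equa: distance pres} together with the initial condition~\eqref{equa: initial condition} already pins down $\Phi(0)=\norm{\ser{}(0)-Q}$, since $\queen(0)=Q$ forces $\norm{\queen(0)-\ser{}(0)}=\norm{Q-\ser{}(0)}$; thus the target constant is consistent at $t=0$. The content of the theorem is that we may \emph{propagate} this equality forward in time while simultaneously respecting the unit-speed constraint~\eqref{equa: speed 1}. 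So the real task is to show that the two constraints~\eqref{equa: speed 1} and~\eqref{equa: distance pres} can be met simultaneously by a genuine speed-1 trajectory on a nonempty interval $[0,\tau]$.

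The key step is a differentiation-and-solvability argument. Differentiating~\eqref{equa: distance pres} gives $1+\frac{d}{dt}\norm{\queen(t)-\ser{}(t)}=0$, i.e. $\frac{d}{dt}\norm{\queen(t)-\ser{}(t)}=-1$. Invoking Lemma~\ref{thm: cost preservation}, this rate equals $\coss{\phi}+\coss{\theta}$, where $\phi,\theta$ are the instantaneous critical angles of $\ser{}$ and $\queen$. Hence the requirement is exactly $\coss{\theta}=-1-\coss{\phi}$ at each time. This is where the hypothesis $\coss{\phi}\geq 0$ at $t=0$ is essential (and where the main obstacle lies): the right-hand side $-1-\coss{\phi}$ must lie in $[-1,1]$ for a valid angle $\theta$ to exist, which forces $\coss{\phi}\in[-2,0]$. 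At $t=0$ the hypothesis gives $\coss{\phi}\geq 0$, and combined with $\coss{\phi}\le 1$ the only consistent value is $\coss{\phi(0)}=0$, i.e. $\phi(0)=\pi/2$, yielding $\coss{\theta(0)}=-1$, so $\queen$ must initially move directly away from $\ser{}(0)$ along the ray $\overrightarrow{\ser{}(0)Q}$. I would therefore reconcile this with the statement by noting that, since $\ser{}$ is continuous and differentiable, $\phi(t)$ varies continuously, so $\coss{\phi(t)}$ stays close to $0$ for small $t$, and more generally remains in a range where a solution angle $\theta(t)$ can be chosen; the determination of $\queen'(t)$ then amounts to selecting the unit vector $v$ making the prescribed critical angle $\theta(t)$ with $\overrightarrow{\ser{}(t)\queen(t)}$ (choosing the branch continuously).

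Concretely, I would recast the system as an explicit first-order ODE $\queen'(t)=V(t,\queen(t))$, where $V$ is the unit vector determined by the angle condition derived above: given the current positions $\ser{}(t)$ and $\queen(t)$, the direction of $\queen'(t)$ is the unique (locally chosen) unit vector whose projection onto $\overrightarrow{\queen(t)\ser{}(t)}$ has the value $-\coss{\theta(t)}=1+\coss{\phi(t)}$ dictated by the distance-preservation rate. Because $\ser{}(t)$ and its velocity are continuous and the geometric construction of $V$ is smooth away from the degenerate configuration $\queen(t)=\ser{}(t)$ (which does not occur for small $t$ since the distance stays positive), the field $V$ is continuous, and I would invoke the Peano (or Picard--Lindel\"of, after checking a local Lipschitz bound) existence theorem to obtain a solution on some interval $[0,\tau]$ with $\tau>0$. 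Finally I would verify that the constructed solution indeed satisfies~\eqref{equa: speed 1} and~\eqref{equa: distance pres}: unit speed holds by construction of $V$, and $\Phi(t)$ is constant because $\Phi'(t)=1+(\coss{\phi}+\coss{\theta})=0$ by the angle condition, while $\Phi(0)=\norm{\ser{}(0)-Q}$ from the initial condition. The main obstacle, as flagged, is handling the admissibility of the angle equation and the continuous branch selection near $t=0$; everything else is a routine invocation of ODE existence theory.
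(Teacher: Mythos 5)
There is a genuine gap, and it sits exactly at the step you flagged as the ``main obstacle.'' You take the formula of Lemma~\ref{thm: cost preservation} literally, $\frac{d}{dt}\norm{\queen(t)-\ser{}(t)}=\coss{\phi}+\coss{\theta}$, and deduce that constancy of $t+\norm{\queen(t)-\ser{}(t)}$ requires $\coss{\theta}=-1-\coss{\phi}$. But that formula, as printed, has a sign error: if both objects move directly toward each other then $\phi=\theta=0$ and the distance decreases at rate $2$, so the correct identity is $\frac{d}{dt}\norm{\queen(t)-\ser{}(t)}=-\left(\coss{\phi}+\coss{\theta}\right)$. This is the only version consistent with Theorem~\ref{thm: critical angles and cost} (of which the lemma is stated to be the proof), and it is what the paper's own argument uses: the constancy condition is $\coss{\phi}+\coss{\theta}=1$, i.e.\ $\coss{\theta}=1-\coss{\phi}$. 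Under the hypothesis $\coss{\phi}\geq 0$ this is always solvable, since $1-\coss{\phi}\in[0,1]$; that solvability is precisely the role of the hypothesis.

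Your condition instead forces the degenerate conclusion $\coss{\phi(0)}=0$, $\coss{\theta(0)}=-1$, which cannot be a correct reading of the theorem: its hypothesis allows any $\coss{\phi}\geq 0$, e.g.\ $\ser{}$ heading straight toward $Q$ with $\phi=0$. Worse, the patch you propose does not survive $t=0$: for any $t>0$ with $\coss{\phi(t)}>0$, your equation $\coss{\theta}=-1-\coss{\phi(t)}<-1$ has no solution at all --- equivalently, you ask for a unit vector whose projection on a line has magnitude $1+\coss{\phi(t)}>1$ --- so the vector field $V(t,\queen)$ you feed into Peano/Picard--Lindel\"of is undefined off the initial instant, and the local-existence argument collapses. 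Once the sign is corrected, your scaffolding is sound and essentially coincides with the paper's proof (solve $\coss{\theta}=1-\coss{\phi}$ at $t=0$, then extend to a short interval $[0,\tau]$ by continuity of the velocities); in fact your explicit ODE-existence step is a more careful justification of the continuation than the paper's sketch, which asserts the existence of $\tau$ by continuity alone.
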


\begin{proof}%[Proof of Theorem~\ref{thm: cost preservation stationary point}]
An object with trajectory $(f(t), g(t))$ satisfying~\eqref{equa: speed 1} and~\eqref{equa: initial condition} has speed 1 (by Lemma~\ref{lem: unit speed}), and starts from point $Q=(q_1,q_2)$. We need to examine whether we can choose $f,g$ so as to satisfy \eqref{equa: distance pres}.

By Lemma~\ref{thm: cost preservation}, such a trajectory $\queen(t)$ exists exactly when we can guarantee that $\coss{\phi}+\coss{\theta}=1$ over time $t$. When $t=0$ we are given that $\coss{\phi}>0$, hence there exists $\theta$ satisfying $\coss{\phi}+\coss{\theta}=1$. This uniquely determines the velocity of $\queen$ at $t=0$.  

By continuity of the velocities, there must exist a $\tau>0$ such that $\coss{\phi}+\coss{\theta}=1$ admits a solution for $\theta$ also as $\phi$ changes over time $t\in [0,\tau]$, in which time window the cosine of the $(\ser{}, \queen(t))$-critical angle at time $t$ remains non-negative. 
\end{proof}

%We can now prove Theorem~\ref{thm: cost preservation stationary point}

%As it will be clear later, 
Note that condition $\coss{\phi}\geq 0$ of Theorem~\ref{thm: cost preservation stationary point} translates to that $\norm{\ser{}(t)-Q}$  is not increasing at $t=\tau$, i.e. that $\ser{}$ does not move away from point $Q$.

%\paragraph{Improving the Algorithm for $3$ Servants}

Now fix parameters $\alpha,\beta, \rho$ together with the trajectories of $\ser{1}, \ser{2}, \ser{3}$ as in the description of Algorithm $\textsc{Search}_3(\alpha,\beta, \rho)$. The description of our \textit{new algorithm} $\textsc{N-Search}_3(\alpha,\beta, \rho)$ will be complete once we fix a new trajectory for $\queen$. Naming specific values for parameters $\alpha,\beta, \rho$ will eventually prove Theorem~\ref{thm: 3 servants better}.
In order to do so, we introduce some \textit{further notation and conditions}, denoted below by \textit{(Conditions i-iv)},  that we later make sure are satisfied. 

Consider $\queen$'s trajectory as in $\textsc{Search}_3(\alpha,\beta, \rho)$.
Let $\tau_0$ denote a local maximum of 
$$
t+\norm{\queen(t) - \ser{1}(t)}
$$
as it reads for $t\geq 0$ with $|t-1-\alpha| \leq AK(\frac{\alpha+\beta}2,\rho)$ (recall that in this time window, expression is differentiable by Lemma~\ref{lem: cost analysis for Search3}), i.e. 
\begin{equation}
|\tau_0-1-\alpha| \leq AK(\frac{\alpha+\beta}2,\rho) \tag{Condition i}
\end{equation}
%\textit{(Condition i)} $|\tau_0-1-\alpha| \leq AK(\frac{\alpha+\beta}2,\rho)$.
Set $Q=\queen(\tau_0)$, and assume that 
\begin{equation}
\textrm{``The cosine of the $(\ser{},Q)$-critical angle at time $\tau_0$ is non-negative.''} \tag{Condition ii}
\end{equation}
%\textit{(Condition ii)} the cosine of the $(\ser{},Q)$-critical angle at time $\tau_0$ is non-negative.
 Then obtain from Theorem~\ref{thm: cost preservation stationary point} trajectory $(f(t), g(t))$ that has the property that it preserves $\tau_0+\norm{\queen(\tau_0) - \ser{1}(\tau_0)}$ in the time window $[\tau_0, \tau']$. Assume also that
\begin{equation}
\textrm{
``There is time $\tau_1\leq \tau'$ such that point $K_1:=(f(\tau_1), g(\tau_1))$ is equidistant from $\ser{1}(\tau_1), \ser{2}(\tau_1)$,''}
\tag{Condition iii}
\end{equation} 
% \textit{(Condition iii)}  there is time $\tau_1\leq \tau'$ such that point $K_1:=(f(\tau_1), g(\tau_1))$ is equidistant from $\ser{1}(\tau_1), \ser{2}(\tau_1)$ for the first time after time $\tau_0$,
for the first time after time $\tau_0$, such that 
\begin{equation}
\tau_1 \leq 1+\pi-\frac{\alpha+\beta}2. \tag{Condition iv}
\end{equation}
%\textit{(Condition iv)}
%$\tau_1 \leq 1+\pi-\frac{\alpha+\beta}2$. 
Then consider the following modification of $\textsc{Search}_3(\alpha,\beta, \rho)$, where the trajectories of $\ser{1}, \ser{2}, \ser{3}$ remain unchanged, see also Figure~\ref{fig:Search1NEW-S3better}.
\begin{figure}[h!]
\centering
  \includegraphics[width=.61\linewidth]{newSearch3-2Dbetter.pdf}
\caption{
Algorithm $\textsc{Search}_3(\alpha,\beta,\rho)$ depicted for the optimal parameters of the algorithm. 
%Robot's trajectories are also clarified by some velocity vectors (matching the aforementioned colours) at special points.
}
\label{fig:Search1NEW-S3better}
\end{figure}

%\vspace{-0.5cm}
\small{
\begin{center}
\begin{tabular}{|l|}
\hline
\textbf{Algorithm $\textsc{N-Search}_3(\alpha,\beta, \rho)$} \\
\hline \hline
\begin{tabular}{l || lllc}
\textit{Robot}		 	&	\# & \textit{Description} & \textit{Trajectory} & \textit{Duration}\\
\hline 
$\queen$	&	0		&	Move to point $\ki{\pi-\alpha}$										&$\li{O,\ki{\pi-\alpha}, t}$										&	$1$					\\
			&	1		&	Search the circle ccw till point $\ki{\pi}$			&$\ci{\pi-\alpha,t-1}$												&	$\alpha$					\\
			&	2		&	Move toward point $K(\frac{\alpha+\beta}2, \rho)$										&$\li{\ki{\pi}, K(\frac{\alpha+\beta}2,\rho),t-(1+\alpha)}$				&	$\tau_0-1-\alpha$					\\
			&	3		&	Preserve 	$\tau_0+\norm{\queen(\tau_0) - \ser{1}(\tau_0)}$									&  $(f(t),g(t))$ 							&	$\tau_1 - \tau_0$		\\						
			&	4		&	Move to point $\ki{-\frac{\alpha+\beta}2}$										&$\li{K_1,\ki{-\frac{\alpha+\beta}2}}$							&	$\norm{K_1-\ki{-\frac{\alpha+\beta}2}}$ 					\\
\hline 
\end{tabular}
\end{tabular}
\end{center}
}

Note that in phase 2, $\queen$ is not reaching (necessarily) point $K$ rather it moves toward it for a certain duration. 
The search time is still $1+\pi-\frac{\alpha+\beta}2$.
Trajectories of $\ser{1}, \ser{2}, \ser{3}$ are continuous as before, and 
$$\interval{\ser{1}}=
\interval{\ser{2}}=[1, 1+\pi-\frac{\alpha+\beta}2],~ 
\interval{\ser{3}}=[1, 1+\beta],
$$ as well as $\interval{\queen}=[1,1+\alpha]$. 

Condition i makes sure that while $\queen$ is at phase 2, and before it reaches $K(\frac{\alpha+\beta}2,\rho)$, there is a time moment $\tau_0$ when the rate of change of $t+\norm{\queen(t)-\ser{1}(t)}$ is 0. Together with condition ii, this implies that Theorem~\ref{thm: cost preservation stationary point} applies. In fact, for the corresponding critical angles $\phi, \theta$ between $\ser{1}, \queen$ at time $\tau_0$, we have that $\coss{\phi}+\coss{\theta}=1$ by construction. Hence trajectory $(f(t), g(t))$ of phase 3 is well defined, and indeed, $\queen$ jumps from phase 2 to phase 3 while $\queen$ is still moving toward point $K$. Notably, $\queen$'s trajectory is even differentiable at $t=\tau_0$ (but not necessarily at $t=\tau_1$). 
Then, Condition iii says that $\queen$ eventually will enter phase 4, and that this will happen before $\ser{1}, \ser{2}$ finish the exploration of the circle.
Overall, we conclude that in $\textsc{N-Search}_3(\alpha,\rho)$, robots' trajectories are continuous and feasible.
An illustration of the above trajectories for certain values of $\alpha,\beta,\rho$ can be seen in 
Figure~\ref{fig:Search1NEW-S3better}.

%Figure~\ref{fig:Search3better}.
%\begin{figure}[h!]
%  \centering
%  \includegraphics[width=.3\linewidth]{Search3-2Dbetter.pdf}
%\caption{The trajectories of $\queen$ in black, $\ser{1}$ in blue, $\ser{2}$
% in magenta and $\ser{3}$ in brown. }
%\label{fig:Search3better}
%\end{figure}

Now we make some observations, in order to calculate the worst case evacuation time. %as described in Formula~\eqref{equa: total cost}.

\begin{lemma}
\label{lem: cost analysis for Search3better}
Suppose that 
$
\alpha\leq \beta
$,
$1+\beta \leq \tau_0$,
and
$1+\pi-\frac{\alpha+\beta}2 \geq \tau_1+\norm{K_1-\ki{-\frac{\alpha+\beta}2}}$
as well as Conditions i-iv are satisfied. 
Then the following functions are continuous and differentiable in each associated time intervals:
$\norm{\queen(x)-\ser{3}(t)}$ 
in $I_1=\{t\geq 0:~\alpha\leq t-1\leq \beta \}$,
$\norm{\queen(x)-\ser{1}(t)}$ 
in $I_2=\{t \geq 0:~ 1+\alpha\leq t \leq \tau_0$
and in $I_3=\left\{t \geq 0:~ | t - \tau_1|\leq \norm{K_1-\ki{-\frac{\alpha+\beta}2}}\right\}$.
Moreover, the worst case evacuation time of $\textsc{N-Search}_3(\alpha,\beta,\rho)$ can be computed as 
$$
\max
\left\{
\begin{array}{l}
\sup_{t \in I_1 }
\left\{ t+\norm{\queen(t)-\ser{3}(t)} \right\} \\
\sup_{t \in I_2 }
\left\{ t+\norm{\queen(t)-\ser{1}(t)} \right\} \\
\sup_{t \in I_3 }
\left\{ t+\norm{\queen(t)-\ser{1}(t)} \right\} \\
1+\pi-\frac{\alpha+\beta}2
\end{array}
\right\}
$$
\end{lemma}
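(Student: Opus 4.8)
The plan is to follow the template of the proof of Lemma~\ref{lem: cost analysis for Search3}, adapting the case analysis to accommodate $\queen$'s new cost-preserving phase~3. First I would translate the three hypotheses into an ordering of events. The condition $\alpha\le\beta$ says $\queen$ stops searching (at time $1+\alpha$) no later than $\ser{3}$ (at time $1+\beta$). The condition $1+\beta\le\tau_0$ says $\ser{3}$ finishes searching before $\queen$ enters phase~3, so whenever $\ser{3}$ can report the exit, i.e. for $t\in[1,1+\beta]$, the queen is on her phase-1 arc or her phase-2 segment, where her trajectory is smooth. Finally, $1+\pi-\frac{\alpha+\beta}2 \ge \tau_1+\norm{K_1-\ki{-\frac{\alpha+\beta}2}}$ says $\queen$ reaches the terminal point $\ki{-\frac{\alpha+\beta}2}$ (end of phase~4) no later than $\ser{1},\ser{2}$ finish the circle, so $\queen$ waits there and the last-searched-point cost is exactly $1+\pi-\frac{\alpha+\beta}2$.

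Next I would argue that $\ser{2}$ is irrelevant to the evacuation cost. Since the trajectories of $\ser{1}$ and $\ser{2}$ are mirror images across the line $\epsilon$ through $O$ and $\ki{-\frac{\alpha+\beta}2}$, every point of $\epsilon$ is equidistant from $\ser{1}(t)$ and $\ser{2}(t)$ at every time $t$. Conditions i--iii guarantee that $\queen$'s phase-3 curve terminates at $K_1\in\epsilon$ at time $\tau_1$, and that $\tau_1$ is the \emph{first} instant after $\tau_0$ at which $\queen$ is equidistant from the two servants; together with continuity this keeps $\queen$ on the side of $\epsilon$ farther from $\ser{1}$ throughout phases~2 and~3, while phase~4 runs along $\epsilon$ itself. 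Hence $\norm{\queen(t)-\ser{1}(t)}\ge\norm{\queen(t)-\ser{2}(t)}$ for all relevant $t$, so $\ser{2}$ never determines the cost and may be dropped from~\eqref{equa: total cost}.

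It then remains to assemble the worst case over $\ser{1}$ and $\ser{3}$. For $\ser{3}$, the cost $t+\norm{\queen(t)-\ser{3}(t)}$ is continuous on $[1,1+\beta]$ and, exactly as in Lemma~\ref{lem: cost analysis for Search3}, is increasing (in fact linear) on $[1,1+\alpha]$ while both robots are on the circle, so its maximum is attained on $I_1=[1+\alpha,1+\beta]$. For $\ser{1}$, the cost is continuous on $[1,1+\pi-\frac{\alpha+\beta}2]$ and differentiable within each of $\queen$'s phases (phases~2 and~4 are straight segments by Lemma~\ref{lem: move line}, phase~3 is the smooth solution of the system of Theorem~\ref{thm: cost preservation stationary point}, and the servants move on smooth arcs by Lemma~\ref{lem: move circle}). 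By the same monotonicity argument as for $\ser{3}$, on phase~1 this cost is increasing with maximum at $t=1+\alpha$, the left endpoint of $I_2$; on phase~2, $I_2=[1+\alpha,\tau_0]$, and on phase~4, $I_3$, I would compute the suprema using the explicit trajectory expressions. Crucially, on phase~3 the function $t+\norm{\queen(t)-\ser{1}(t)}$ is \emph{constant} by the construction of $(f,g)$ in Theorem~\ref{thm: cost preservation stationary point}, which is valid on $[\tau_0,\tau']\supseteq[\tau_0,\tau_1]$; it equals its value $\tau_0+\norm{\queen(\tau_0)-\ser{1}(\tau_0)}$ at the right end of $I_2$, so phase~3 contributes nothing beyond the $I_2$ supremum and needs no separate term. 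Taking the maximum of the $\ser{3}$ contribution, the two $\ser{1}$ suprema over $I_2,I_3$, and the waiting cost $1+\pi-\frac{\alpha+\beta}2$ yields the stated formula.

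The main obstacle I anticipate is the step that discards $\ser{2}$. In $\textsc{Search}_3$ the queen moved along explicit straight segments to a point of $\epsilon$, so staying on one side of $\epsilon$ was geometrically transparent; here $\queen$'s phase-3 arc is only defined implicitly as the solution of~\eqref{equa: speed 1}--\eqref{equa: initial condition}, so one must invoke Condition~iii (that $\tau_1$ is the first post-$\tau_0$ equidistant instant) together with a continuity argument to rule out an earlier crossing of $\epsilon$, rather than reading it off the geometry. A secondary technical point is the behavior at the phase junctions $t=1+\alpha$ and $t=\tau_1$, where $\queen$'s velocity is discontinuous and the distance functions are only one-sidedly differentiable; there ``differentiable in $I_j$'' should be read as differentiability on the interior of each phase interval, which is all the subsequent numerical optimization requires.
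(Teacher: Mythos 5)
Your proposal is correct and follows essentially the same route as the paper's proof: the same translation of the three hypotheses into an ordering of events, the same mirror-symmetry argument across the line $\epsilon$ to discard $\ser{2}$, the same monotonicity argument restricting $\ser{3}$ (and $\queen$'s phase 1) to $I_1$ and $I_2$, and the same treatment of the terminal waiting cost $1+\pi-\frac{\alpha+\beta}{2}$. In fact you make explicit two points the paper leaves implicit --- that the cost is constant on the cost-preserving phase 3 (so it is dominated by the $I_2$ supremum) and that Condition iii is what rules out an earlier crossing of $\epsilon$ --- so the argument is, if anything, slightly more careful than the original.
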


\begin{proof}%[Proof of Lemma~\ref{lem: cost analysis for Search3better}]
Conditions 
$
\alpha\leq \beta
$ and
$1+\beta \leq \tau_0$
mean that
$\queen$ stops searching no later than $\ser{3}$, 
and that when $\queen$ enters phase 3 after $\ser{3}$ is done searching, 
respectively. 

The line passing through $O$ and $\ki{-(\alpha+\beta)/2}$, call it $\epsilon$, has the property that each point of it, including $K(\frac{\alpha+\beta}2, \rho)$ is equidistant from $\ser{1}, \ser{2}$.
Moreover, while $\ser{1},\ser{2}$ are searching, $\queen$ never goes above line $\epsilon$. Also, while $\queen$ is executing phase 3, $\queen$ remains equidistant from $\ser{1},\ser{2}$ and this is preserved for the remainder of the execution of the algorithm. 
As a result, $\ser{2}$ can be ignored in the performance analysis, and when it comes to the case that $\ser{1}$ finds the exit, the evacuation cost is given by the supremum of  $t+\norm{\queen(t)-\ser{1}(t)}$ in the time interval $I_2$ or in the interval $I_3$. Note that in both intervals, the evacuation cost is continuous and differentiable, by construction. 

If the exit is reported by $\ser{3}$ then the evacuation cost is $t+\norm{\queen(t)-\ser{3}(t)}$ for $t\in [1,1+\beta]$. However, it is easy to see that the cost is strictly increasing for all $t\in [1,1+\alpha]$ (in fact it is linear). Since the evacuation cost is also continuous, we may restrict the analysis in interval $I_1$. 

Lastly, observe that $1+\pi-\frac{\alpha+\beta}2 \geq \tau_1+\norm{K_1-\ki{-\frac{\alpha+\beta}2}}$
implies that $\ser{1},\ser{2}$ reach point $\ki{-(\alpha+\beta)/2}$ no earlier than $\queen$. Hence $\queen$ waits at $\ki{-(\alpha+\beta)/2}$ till the search of the circle is over, which can be easily seen to induce the worse evacuation time after $\queen$ reaches $\ki{-(\alpha+\beta)/2}$.
\end{proof}

Next we prove Theorem~\ref{thm: 3 servants better} by fixing parameters $\alpha,\beta,\rho$ for $\textsc{N-Search}_3(\alpha,\beta,\rho)$.
%, in particular $\alpha=0.27764, \beta=1.29839,\rho=0.68648$.

\begin{proof}[Proof of Theorem~\ref{thm: 3 servants better}]
We choose 
$\alpha=0.27764, 
\beta=1.29839,
\rho=0.68648$.
The trajectories of Figure~\ref{fig:Search1NEW-S3}
%~\ref{fig:Search3better} 
correspond exactly to those values. 
For these values we see that 
$AK(\frac{\alpha+\beta}2,\rho) = 1.29041$,
%AK[aaa, bbb, rrr]
 while $\tau_0-\alpha-1=1.04877$. 
 %ooptimizer
 Hence the transition between phase 1 and phase 2 of $\queen$ is well defined.

The time that 
$\queen$ needs to reach $\ki{-\frac{\alpha+\beta}2}$ 
equals 
$1+\tau_1+\norm{K_1-\ki{-\frac{\alpha+\beta}2}}=3.18073$, 
% 1 + TimeToResumeToLineOfSymmetry +  Norm[ NEWqueen[1 + TimeToResumeToLineOfSymmetry] -  cycle[-(aaa + bbb)/2]]
while the time that $\ser{1}, \ser{2}$ reach the same point is 
$1+\pi-\frac{\alpha+\beta}2 =3.35358$. 
%1 + Pi - (aaa + bbb)/2
Therefore we may attempt to solve numerically the differential equation of Theorem~\ref{thm: cost preservation stationary point}. It turns out that for the resulting trajectory $(f(t),g(t)$, and for $\tau_1=2.89288$, 
%1 + TimeToResumeToLineOfSymmetry
point $(f(\tau_1),g(\tau_1)$ is equidistant from $\ser{1}, \ser{2}$. 
Moreover, $\queen$ enters phase 4 at time $\tau_1=2.89288$, prior to $1+\pi-\frac{\alpha+\beta}2$.
Hence, Conditions i-iv are all met, as well as Lemma~\ref{lem: cost analysis for Search3better} applies.

From the above, it is immediate that the worst evacuation time after $\queen$ reaches $\ki{-(\alpha+\beta)/2}$ equals $1+\pi-\frac{\alpha+\beta}2 =3.35358$.
%1 + Pi - (aaa + bbb)/2
Hence, it remains to compute the maxima of $t+\norm{\queen(t)-\ser{3}(t)}$ in interval $I_1$, and of $t+\norm{\queen(t)-\ser{1}(t)}$ in intervals $I_2$, $I_3$. 

To that end, when $t \in I_1=[1.27764,2.29839]$ we have that 
%1+aaa
%1+bbb
\begin{align*}
\queen(t) &=\left(0.978782 t-2.25053,0.261795\, -0.204905 t \right) \\
\ser{3}(t) &=\left(\cos (t+0.565563),\sin (t+0.565563)\right), 
%Expand[linePoints[pointA, pointK[aaa, bbb, rrr], t - aaa - 1]]
%cycle[Pi - aaa - bbb + t - 1]
\end{align*}
so that $t+\norm{\queen(t)-\ser{3}(t)}$ becomes
$$
t+
\sqrt{(-0.204905 t-\sin (t+0.565563)+0.261795)^2+(0.978782 t-\cos (t+0.565563)-2.25053)^2}$$
%t + newnorm[  Expand[linePoints[pointA, pointK[aaa, bbb, rrr], t - aaa - 1]] -    cycle[Pi - aaa - bbb + t - 1]]
in which case
$$
\sup_{t \in I_1 }
\left\{ t+\norm{\queen(t)-\ser{3}(t)} \right\}
=
1+\beta+
\norm{\queen(1+\beta)-\ser{3}(1+\beta)}
\approx 3.37387
%dq3[aaa, bbb, rrr, 1 + bbb]
$$

When $t \in I_2=[1.27764,2.32641]$, 
%1 + aaa
%1 + aaa + ooptimizer
$\queen$'s trajectory is the same as in $I_1$ and 
\begin{align*}
\ser{1}(t) =\left(\cos (2.56556\, -t),\sin (2.56556\, -t)\right), 
%cycle[Pi - aaa - bbb - t + 1]
\end{align*}
so that $t+\norm{\queen(t)-\ser{1}(t)}$ becomes
$$
t+
\sqrt{(-\sin (2.56556\, -t)-0.204905 t+0.261795)^2+(-\cos (2.56556\, -t)+0.978782 t-2.25053)^2}.
%d1better[t_] :=  t + newnorm[   Expand[linePoints[pointA, pointK[aaa, bbb, rrr], t - aaa - 1]] -     cycle[Pi - aaa - bbb - t + 1]]
%    d1better[t]
$$

When $t \in I_3=[2.89288,3.18073]$, $\ser{1}$'s trajectory is the same as in $I_2$ and 
%1 + TimeToResumeToLineOfSymmetry
%1 + TimeToResumeToLineOfSymmetry +  Norm[ NEWqueen[1 + TimeToResumeToLineOfSymmetry] -  cycle[-(aaa + bbb)/2]]
\begin{align*}
\queen(t) &=\left(0.705254 t-1.53797,1.54604\, -0.708955 t
0.706399 t-1.53762,1.5407\, -0.707814 t\right), 
%Expand[linePoints[NewPointOnLineOfSymmetry, pointC[aaa, bbb],   t - 1 - TimeToResumeToLineOfSymmetry]]
\end{align*}
so that $t+\norm{\queen(t)-\ser{1}(t)}$ becomes
$$
t+
\sqrt{(-\sin (2.56556\, -t)-0.708955 t+1.54604)^2+(-\cos (2.56556\, -t)+0.705254 t-1.53797)^2}.
%d1better22[t_] :=  t + newnorm[   Expand[linePoints[NewPointOnLineOfSymmetry, pointC[aaa, bbb],       t - 1 - TimeToResumeToLineOfSymmetry]] -     cycle[Pi - aaa - bbb - t + 1]]
%d1better22[t]
$$

Numerically 
\begin{align*}
\sup_{t \in I_2 }
\left\{ t+\norm{\queen(t)-\ser{1}(t)} \right\}
&=
 \tau_0+\norm{\queen(\tau_0)-\ser{1}(\tau_0)} 
 = \tau_1+\norm{\queen(\tau_1)-\ser{1}(\tau_1)} \\
&=\sup_{t \in I_3 }
\left\{ t+\norm{\queen(t)-\ser{1}(t)} \right\} 
\approx
3.37387. 
\end{align*}

The reader may also consult Figure~\ref{fig:Search1NEW-S3better}.
%\ref{fig:Search3better} 
%that depicts the locations of the optimizers, i.e the worst case locations on the circle for the exit to be found by any of the robots. This Figure also shows corresponding locations of $\queen$ on her trajectory, along with the corresponding evacuation trajectory in dashed green colour.
\end{proof}

\section{Lower Bounds}
\label{seclbounds}

In this section we derive lower bounds for evacuation. In Section~\ref{sec:lb1} we treat the case of $n=1$ (see Theorem~\ref{thm:lb2}) and in Section~\ref{sec:lb23} we treat the case of $n=2$ and $3$ (see Theorem~\ref{thm:lb23}).
% Our results are summarized as follows. 
%We summarize the main results of the section in Theorem~\ref{thm:lb_123}.
\ignore{
\begin{theorem}\label{thm:lb_123}
For $n=1,2,3$, any evacuation algorithm for \pe{n} requires time at least $\IT_n$, where $\IT_1 \approx 4.3896$, $\IT_2 \approx 3.6307$, and $\IT_3 \approx 3.2017$.
\end{theorem}
}
\subsection{Lower Bound for \pe{1}}\label{sec:lb1}
    We will derive the lower bound using an adversarial argument placing the exit at an unknown vertex of a regular hexagon.

    \begin{theorem}
    \label{thm:lb2}
    The worst-case evacuation time for \pe{1} is at least $3 + \pi/6 + \sqrt{3}/2 \approx 4.3896$
    \end{theorem}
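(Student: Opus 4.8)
The plan is to run an adversarial argument in which the hidden exit is placed at one of the six vertices $V_k:=\ki{k\pi/3}$, $k=0,\dots,5$, of a regular hexagon inscribed in the unit circle. First I would record the structure of the cost. Since any feasible algorithm must eventually locate the exit wherever it is, some robot must visit each $V_k$; let $f_k$ be the first time any robot (queen or servant) is at $V_k$. If the exit sits at $V_k$ it is reported at time $f_k$, after which $\queen$ rushes to it along a straight segment, so exactly as in the derivation of~\eqref{equa: total cost} the evacuation time is at least $f_k+\norm{\queen(f_k)-V_k}$; this also holds trivially when $\queen$ is the finder, since then $\queen(f_k)=V_k$. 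Hence the worst-case cost is $\max_k\bigl(f_k+\norm{\queen(f_k)-V_k}\bigr)$, and the whole task is to lower-bound this quantity over all feasible trajectories.

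The key building block I would isolate is a \emph{two-point inequality}. Because $\queen$ moves at unit speed (Lemma~\ref{lem: unit speed}) and starts at $O$, her trajectory is $1$-Lipschitz: $\norm{\queen(s)-\queen(t)}\le |s-t|$. For any two vertices $u,w$ I combine the two cost bounds $\mathrm{cost}\ge f_u+\norm{\queen(f_u)-u}$ and $\mathrm{cost}\ge f_w+\norm{\queen(f_w)-w}$ with the triangle inequality
\[
\norm{u-w}\le \norm{u-\queen(f_u)}+|f_u-f_w|+\norm{\queen(f_w)-w},
\]
adding the two cost bounds to obtain $2\,\mathrm{cost}\ge 2\min(f_u,f_w)+\norm{u-w}$, i.e.
\[
\mathrm{cost}\ \ge\ \min(f_u,f_w)+\tfrac12\norm{u-w}.
\]
This inequality exposes the central tension: to keep the cost small the robots must discover far-apart vertices \emph{early}, but doing so drags $\queen$ across the disk, away from whichever vertex is revealed last.

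The heart of the proof is then a coupled exploration/proximity trade-off. I would focus on the last vertex $V^\ast$ to be discovered, at time $L=\max_k f_k$, and split into two cases according to its discoverer. If $\queen$ discovers $V^\ast$ herself, I lower-bound $L$ directly from the requirement that two unit-speed robots visit all six vertices (being careful that a robot may \emph{shortcut through the interior}, so the naive ``arc length equals time'' covering bound is invalid and must be replaced by one counting chords between consecutive vertices). If instead a servant discovers $V^\ast$, I use $\mathrm{cost}\ge L+\norm{\queen(L)-V^\ast}$ and bound $\norm{\queen(L)-V^\ast}$ from below: the queen cannot have been idling next to $V^\ast$, because her help was needed to finish searching the opposite side, so the servant is forced to traverse an extra arc; quantifying this, together with the two-point inequality applied to $V^\ast$ and a suitably chosen already-discovered far vertex, yields the bound. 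Optimizing the split between time spent searching and time spent repositioning produces the three summands — deployment plus coverage giving the $3$, the extra swept arc giving the $\pi/6$, and the residual chord geometry giving the $\sqrt3/2$ — and I conclude $\mathrm{cost}\ge 3+\pi/6+\sqrt3/2$.

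\textbf{Main obstacle.} The difficulty is precisely that the lower bound on $L$ and the lower bound on $\norm{\queen(L)-V^\ast}$ \emph{cannot be taken independently}: each can be driven small in isolation (the robots can grab any single far pair by time $1$ via chords, and a lazy queen can sit near one vertex), and it is only their coupling that is expensive. One must therefore quantify exactly how constraining $\queen$ to remain within a given distance $d$ of $V^\ast$ forces the servant to cover an additional arc, and then minimize the resulting cost jointly over $d$ and over the adversary's choice of $V^\ast$. Handling the interior shortcuts correctly, and accounting for the queen's dual role as both searcher and evacuee, is what makes this optimization delicate; the constant $3+\pi/6+\sqrt3/2$ should emerge as its saddle-point value.
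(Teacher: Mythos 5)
There is a genuine gap, and it is fatal to the approach as stated: you fix the hexagon in advance, with vertices $V_k=\ki{k\pi/3}$ known to the algorithm. A lower bound obtained this way can never reach $3+\pi/6+\sqrt{3}/2$, because there is a feasible full-circle algorithm whose worst case over those six fixed points is strictly smaller. Indeed, the unit-side hexagon evacuation problem (from favorable starting positions) can be solved in time $2+\sqrt{3}/2$ --- the paper even exhibits such a strategy, and its bound of $2+\sqrt{3}/2$ for the hexagon is tight. Since every point of the inscribed hexagon lies within distance $1$ of the center, the two robots can spend the first unit of time moving to that strategy's starting configuration, run it, and only afterwards sweep the rest of the circle for feasibility; this yields $\max_k\bigl(f_k+\norm{\queen(f_k)-V_k}\bigr)\leq 3+\sqrt{3}/2\approx 3.866 < 4.3896$. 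So the statement ``for every feasible algorithm, some fixed vertex $V_k$ costs at least $3+\pi/6+\sqrt{3}/2$'' is simply false, and no amount of optimization in your case analysis can rescue it. The paper's argument avoids this precisely by making the hexagon \emph{adaptive}: the adversary waits until time $1+\pi/6$, at which point the two robots (each needing a unit of time to reach the perimeter) have explored at most $\pi/3$ of the circumference, so by a pigeonhole over rotations there exists an inscribed regular hexagon \emph{none} of whose vertices has yet been visited; the $\pi/6$ in the final constant is exactly this waiting time, and it is unobtainable once the six candidate points are announced in advance.

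A second, independent problem is that your ``heart of the proof'' is not actually a proof: you describe a coupled optimization over the last-discovered vertex $V^\ast$, the queen's distance $d$, and the adversary's choice, and assert that the constant ``should emerge as its saddle-point value.'' The paper replaces this with a self-contained, position-free hexagon theorem (Theorem~\ref{thm:lb3}): for \emph{any} starting positions of queen and servant, evacuating from a unit-side hexagon takes time at least $2+\sqrt{3}/2$. Its proof is a discrete case analysis, not a continuous optimization --- if one robot is first at four vertices the fourth alone costs $3$; hence the first-visits split $3$--$3$; placing the exit at $s_3$ or $q_3$ forces the queen into a small lens $R$ near both at time $f(s_3)$, which successively pins down $s_1,s_2,s_3$ and $q_1,q_2,q_3$ up to symmetry and ends in a clean metric contradiction. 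Composing that theorem with the perimeter-measure argument gives $1+\pi/6+2+\sqrt{3}/2$ with no joint optimization at all. If you want to salvage your plan, the two ingredients you must import are (i) the adaptive choice of the hexagon after time $1+\pi/6$, and (ii) a complete, worst-case-start analysis of the six-point subproblem in place of the asserted saddle point.
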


    \begin{proof}%[Proof of Theorem~\ref{thm:lb2}]
    At time $1 + \pi/6$, at most $\pi/3$ of the perimeter of the circle can have been explored by the queen and servant. Thus, there is a regular hexagon, none of whose vertices have been explored. If the exit is at one of these vertices, by Theorem~\ref{lower-bound-hexagon}, it takes $2 + \sqrt{3}/2$ for the queen to evacuate. The total time is $1 + \pi/6 + 2 + \sqrt{3}/2$.
    %\qed
    \end{proof}

    %\paragraph{Lower bound on a unit-side hexagon.}
    %\section{Lower bound for the queen problem on  a unit-side hexagon}

    Next we proceed to provide a lower bound on a unit-side hexagon. Label the vertices of the hexagon $V$ as $A, \ldots, F$ as shown in Figure~\ref{fig:lb-hexagon-left}.
    \begin{figure}
    \begin{center}
    \includegraphics[height=1.9in]{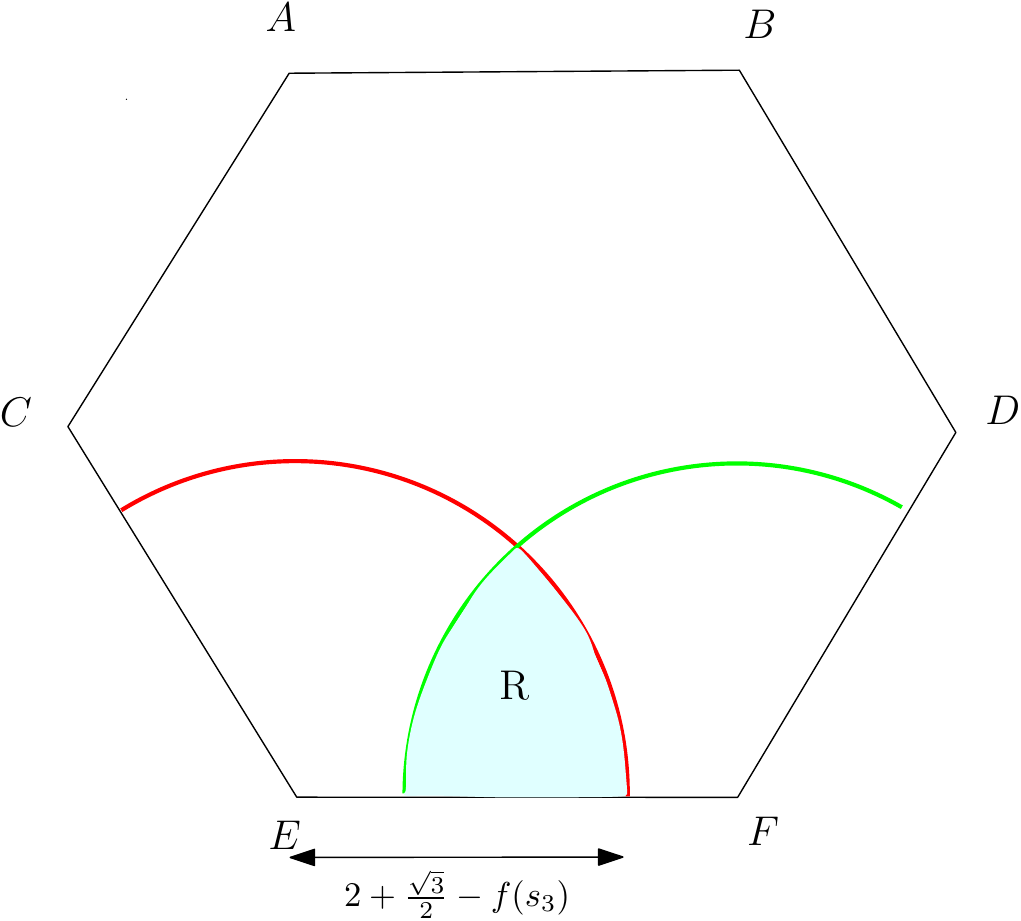}
    \end{center}
    \caption{(Left) The queen must be in region $R$ at time $f(s_3)$. Here $s_3=E$ and $q_3 = F$.}
    \label{fig:lb-hexagon-left}
    \end{figure}
    Fix an evacuation algorithm $\cal A$. For any vertex $v$ of the hexagon, we call $f(v)$ the time of {\em first visit} of the vertex $v$ by either the servant or the queen, according to algorithm $\cal A$. We call $q(v)$ the time that the queen gets to the vertex $v$. Clearly, $q(v) \geq f(v)$, and if the queen arrives at the vertex no later than the servant, $q(v) = f(v)$. 
    %\marginpar{Statement of Theorem~\ref{lower-bound-hexagon} not clear.}

    \begin{theorem}
    \label{thm:lb3}
        For any algorithm $\cal{A}$, the evacuation time for the queen when the exit is at one of the vertices of the hexagon is $max_{v \in V}\{ q(v) \} \geq 2 + \sqrt{3}/2$.
        \label{lower-bound-hexagon}
    \end{theorem}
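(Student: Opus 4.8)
The plan is to reduce the queen's evacuation time to a pure speed-limit statement and then apply it at a well-chosen instant. The key observation is that for \emph{any} time $t$ and any vertex $w$ not yet visited by either robot at time $t$, the queen cannot be at $w$ before time $t+\norm{\queen(t)-w}$: she moves at unit speed from her current position $\queen(t)$, and since $w$ is unvisited she has certainly not been there earlier. Hence, if the adversary keeps the exit at such a $w$, then $q(w)\ge t+\norm{\queen(t)-w}$, so $\max_{v\in V}q(v)\ge t+\max_{w\in U(t)}\norm{\queen(t)-w}$, where $U(t)$ is the set of vertices unvisited by either robot at time $t$. The entire proof amounts to choosing $t$ so that both $t$ and that maximum are simultaneously large.

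First I would lower bound how far exploration can have progressed. Both robots start at the centre $O$ and every vertex lies at distance $1$ from $O$, so no vertex is visited before time $1$; moreover any two distinct vertices are at distance at least $1$ (the side length), so along a single robot's trajectory the $k$-th distinct vertex it reaches is visited no earlier than time $k$. When the fourth distinct first-visit occurs the two robots have accounted for counts $(a,b)$ with $a+b=4$, and the later robot has then spent time $\max\{a,b\}\ge2$; thus the fourth distinct vertex is first visited no earlier than time $2$. Let $t^\ast$ denote this fourth first-visit time, so $t^\ast\ge2$, and note that the three latest-visited vertices $\{s_4,s_5,s_6\}$ are unvisited throughout $[0,t^\ast)$, i.e. $\{s_4,s_5,s_6\}\subseteq U(t)$ for every $t<t^\ast$.

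The geometric heart is an elementary fact about three vertices of a unit regular hexagon: their smallest enclosing circle has radius at least $\sqrt3/2$, the minimising configuration being three consecutive vertices, whose enclosing circle has the chord of length $\sqrt3$ as a diameter. Because the centre of the smallest enclosing circle is precisely the point minimising the maximum distance to the three vertices (the $1$-centre), \emph{every} point of the plane, and in particular $\queen(t)$, is at distance at least $\sqrt3/2$ from the farthest of $\{s_4,s_5,s_6\}$. Since this farthest vertex lies in $U(t)$, the speed-limit bound gives $\max_{v}q(v)\ge t+\sqrt3/2$ for every $t<t^\ast$; taking $t\uparrow t^\ast\ge2$ yields $\max_{v}q(v)\ge 2+\sqrt3/2$, as required.

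The step I expect to need the most care is the boundary/tie handling. If several vertices are first visited at the same instant (for example both robots reach a new vertex exactly at time $2$), then reading $U$ off at a single time can leave only two unvisited vertices, whose enclosing radius may dip below $\sqrt3/2$; this is exactly why I keep the \emph{three} fixed vertices $\{s_4,s_5,s_6\}$ unvisited on the whole half-open interval $[0,t^\ast)$ and take the supremum over $t<t^\ast$, rather than evaluating at one instant (which also avoids any appeal to continuity of $\queen(\cdot)$). The only other thing to check cleanly is the enclosing-radius claim, which reduces to the finitely many three-element subsets of the hexagon's vertices up to symmetry. An equivalent but heavier route, matching the region $R$ of Figure~\ref{fig:lb-hexagon-left}, is to confine $\queen(t^\ast)$ to the disk of radius $t^\ast$ about $O$ and argue by cases on which vertex plays the role of $s_3$; the $1$-centre argument above sidesteps that case analysis entirely.
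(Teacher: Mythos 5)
Your argument is internally consistent, and the smallest-enclosing-circle (1-centre) idea is elegant, but it rests on an assumption the theorem does not grant you: that both robots start at the centre $O$ of the hexagon, at distance $1$ from every vertex. That assumption is where all the strength of your proof comes from — it is what gives ``the $k$-th distinct vertex a single robot reaches is visited no earlier than time $k$,'' hence $t^*\geq 2$ — and it is incompatible with how Theorem~\ref{lower-bound-hexagon} is used. In the proof of Theorem~\ref{thm:lb2} the hexagon bound is invoked at time $1+\pi/6$ of the circle evacuation, when the queen and the servant sit at \emph{arbitrary} points of the disk, and $2+\sqrt{3}/2$ must be the \emph{additional} time measured from that moment; so the hexagon theorem has to hold for arbitrary starting positions of the two robots. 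This is also visibly the setting of the paper's own proof, which never uses a distance from a start point: it bounds a robot's fourth distinct first-visit by $3$ (three unit hops between vertices) and each robot's third first-visit by $2$ (two hops) — the arbitrary-start bounds, not the centre-start bounds $4$ and $3$ that your normalization would give. A further sanity check: under your centre-start hypothesis the true lower bound is at least $3$ (some robot is first to visit three vertices, its third such visit occurs no earlier than time $1+1+1=3$, and the queen can reach that vertex no earlier), so $2+\sqrt{3}/2$ could not be, as the paper remarks it is, optimal.

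With arbitrary starting positions your key estimate $t^*\geq 2$ fails concretely: place the queen at vertex $A$ and the servant at the opposite vertex $D$ (or, to match the application, at explored points within $\epsilon$ of them). Then $A$ and $D$ are first-visited at time (essentially) $0$ and two neighbouring vertices at time (essentially) $1$, so the fourth distinct first-visit occurs at $t^*\approx 1$, and your supremum argument yields only $1+\sqrt{3}/2\approx 1.87$, far short of $2+\sqrt{3}/2\approx 2.87$. This is not a tie-breaking or boundary issue that taking $t\uparrow t^*$ can repair; any proof valid in the generality the paper needs must exploit \emph{which} robot is first at which vertices and where the queen can be when the servant reaches its last vertex. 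That is exactly the content of the paper's case analysis — the constraints $f(s_3), f(q_3)\geq 2$, the identification $s_3=E$ and $q_3=F$, the lens-shaped region $R$ the queen must occupy at time $f(s_3)$, and the closing estimate $f(D)+\mathrm{dist}(D,E)\geq 2+\sqrt{3}/2$ — which your 1-centre shortcut was designed to avoid but, for arbitrary starts, cannot.
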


    \begin{proof}%[Proof of Theorem~\ref{thm:lb3}]
    Suppose there is an algorithm in which the queen can always evacuate in time $< 2 + \sqrt{3}/2$. Consider the trajectories of the servant and the queen. If either the queen or the servant are the first to visit 4 vertices, then for the fourth such vertex $v$, we have $f(v) \geq 3$, a contradiction. Therefore, the queen is the first to visit three vertices, and the servant is the first to visit three vertices. We denote the three vertices visited first by the servant as $s_1, s_2, s_3$ (in the order they are visited) and the three vertices visited first by the queen as $q_1, q_2, q_3$, and note that they are all distinct. 

    Notice that neither $s_3$ nor $q_3$ can be visited before time 2, that is,  $f(s_3), f(q_3)  \geq 2$. If  $f(q_3) \leq f(s_3)$, then we place the exit at $s_3$, and the queen needs time at least 1 to get to $s_3$,  which implies that $T \geq q(s_3) \geq f(q_3) + 1 \geq 3$, a contradiction. We conclude that at time $f(s_3)$, the queen is yet to visit $q_3$. Since the exit can be at either $s_3$ or $q_3$, at time $f(s_3)$, the queen must be at distance  $< 2 + \sqrt{3}/2 - f(s_3) \leq \sqrt{3}/2$ from {\em both} $s_3$ and $q_3$.

    Assume without loss of generality that $s_3 = E$ (see Figure~\ref{fig:lb-hexagon-left}). 
    Since $A, B, D$ are all at distance at least $\sqrt{3}$ from $E$, we conclude that $q_3$ is either $C$ or $F$. Assume without loss of generality that $q_3 = F$. Let $R$ denote the lens-shaped region that is at distance $< 2 + \sqrt{3}/2 - f(s_3)$ from both $E$ and $F$. Recall that at time $f(s_3)$, the queen must be inside the region $R$. Notice that if $f(s_3) \geq 1.5 + \sqrt{3}/2$, the region $R$ is empty, yielding a contradiction. So it must be that $2 \leq f(s_3) < 1.5 + \sqrt{3}/2.$

\ignore{
    \begin{figure}
    \begin{center}
    \includegraphics[height=1.7in]{new-hexagon.pdf}
    ~~\includegraphics[height=1.7in]{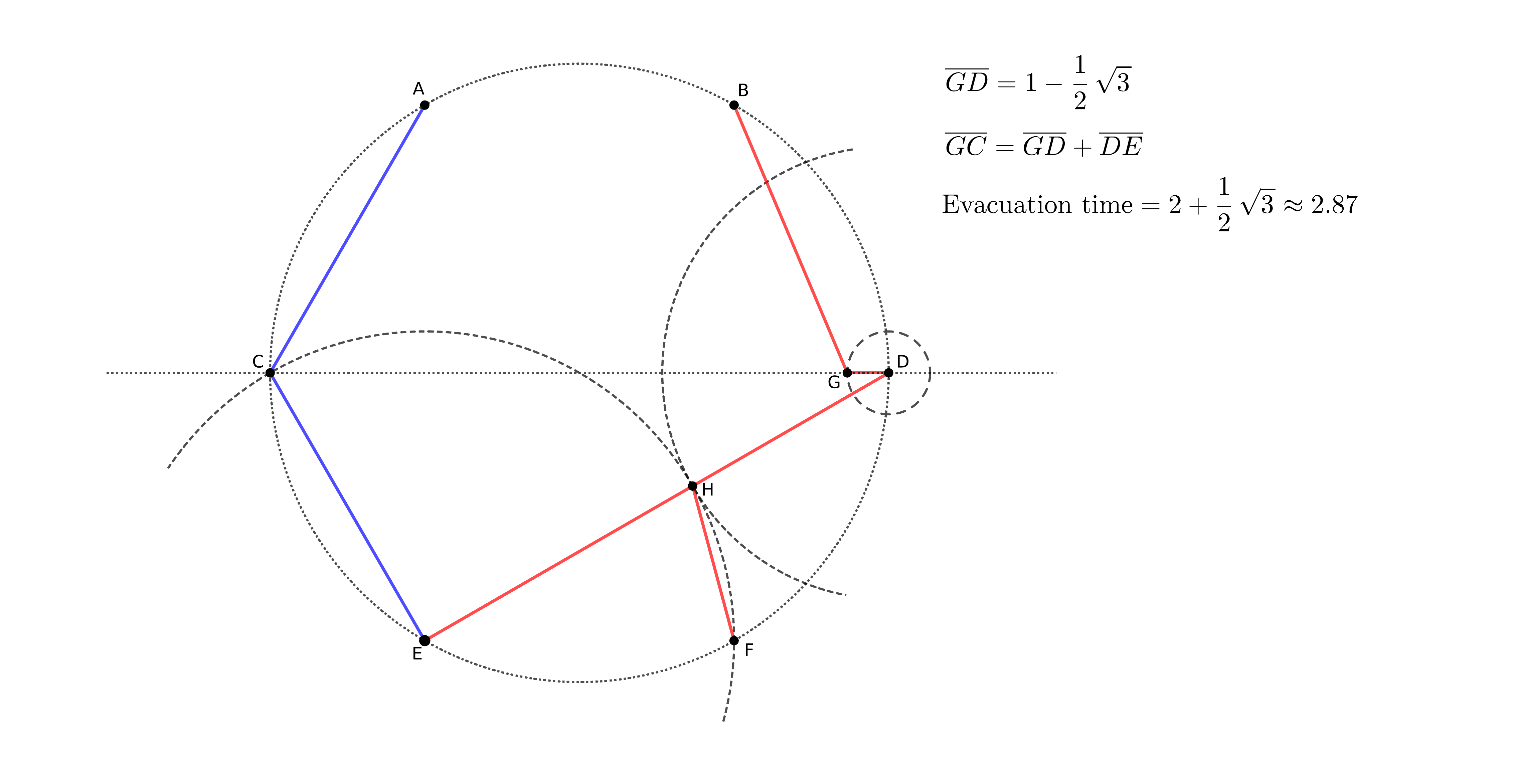}
    \end{center}
    \caption{(Left) The queen must be in region $R$ at time $f(s_3)$. Here $s_3=E$ and $q_3 = F$. (Right) Blue trajectory: servant and red trajectory: queen. At point $H$, if the queen hears of an exit at $E$, she goes there, otherwise she goes to $F$.}
    \label{fig:lb-hexagon}
    \end{figure}
}

    We now work backwards to deduce the trajectories of the servant and the queen. Clearly $s_2 \neq F$ since $q_3 = F$. If $s_2 \neq C$, then $f(s_3) \geq \sqrt{3} + 1 > 1.5+ \sqrt{3}/2$, a contradiction. Therefore, $s_2 = C$. By the same reasoning, $s_1 = A$. Therefore, the queen is the first to visit $D$ and $B$. If $q_1=D$ and $q_2=B$, we place the exit at $E$; since $f(q_2) \geq 1$ and $dist(B, E) =2$, we have $T \geq q(E)  \geq 3$, a contradiction. Thus, $q_2 = D$ and $q_1 = B$.  

    Consider the location of the queen at time 1. If she is at distance $\geq 1 + \sqrt{3}/2$ from $C$ at time $1$, then if the exit is at $C$, $q(C) \geq 2 + \sqrt{3}/2$. So at time 1, the queen must be at distance $< 1 + \sqrt{3}/2$ from $C$ and consequently she is at distance $\geq 1 - \sqrt{3}/2$ from vertex $D$. Therefore $f(q_2) = f(D) \geq 2 - \sqrt{3}/2$. Also, $f(D) < 1.5$  since if the queen reaches $D$ at or after time 1.5, she cannot reach the region $R$ before time $1.5 + \sqrt{3}/2 > f(s_3)$. So $f(D) \leq f(s_3)$. If the exit is at $E = s_3$, the queen cannot reach the exit before time $f(D) + dist(D, E) \geq 2 - \sqrt{3}/2 + \sqrt{3}  = 2 + \sqrt{3}$, concluding the proof by contradiction.
    \end{proof}

    We remark that the above bound is optimal, and is achieved by the algorithm depicted in Figure~\ref{fig:lb-hexagon-right}.
    \begin{figure}
    \begin{center}
\includegraphics[height=1.9in]{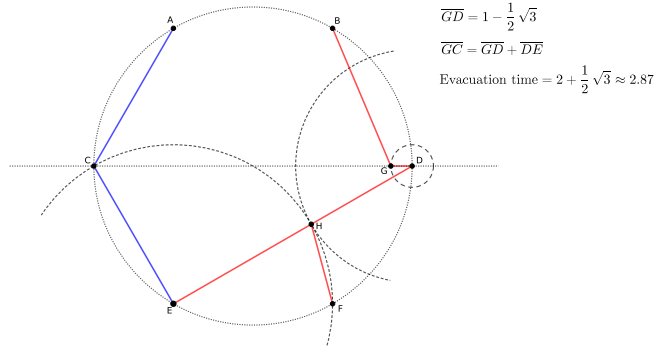}
    \end{center}
    \caption{Blue trajectory: servant and red trajectory: queen. At point $H$, if the queen hears of an exit at $E$, she goes there, otherwise she goes to $F$.}
    \label{fig:lb-hexagon-right}
    \end{figure}

\subsection{Lower Bounds for \pe{2} and \pe{3} - Proof Outline}\label{sec:lb23}
    In the case of $n=2$ and $n=3$ the proof is rather technical. Next we present a high level outline as to why the lower bounds hold. 
    %The full details of the proof are provided in the appendix. 
    
    \begin{theorem}
    \label{thm:lb23}
    The worst-case evacuation time for \pe{2} is at least $3.6307$
    and for \pe{3} at least $3.2017$.
    \end{theorem}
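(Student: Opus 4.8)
The plan is to mirror the $n=1$ argument of Theorems~\ref{thm:lb2} and~\ref{thm:lb3}, replacing the hexagon by a regular polygon tuned to the number of searchers and redoing the geometric core. Write $m=n+1$ for the number of searchers (the queen plus $n$ servants). The argument splits into two ingredients that are then combined: a \emph{covering} estimate that produces a set of candidate exits none of which has been explored by a prescribed time, and a \emph{confinement} estimate that lower-bounds the queen's time to reach the true exit once the candidate set is fixed.

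First I would establish the covering step. Since every searcher needs time $1$ to reach the circle and then sweeps arc at speed at most $1$, the total arc explored by time $1+t$ is at most $m t$. Inscribe a regular $k$-gon in the unit circle and rotate it by an angle $\theta\in[0,2\pi/k)$; as $\theta$ ranges over this interval the $k$ vertices sweep the whole circle bijectively, so the average over $\theta$ of the number of explored vertices equals $\tfrac{k}{2\pi}\cdot(\text{explored arc})\le \tfrac{k}{2\pi}\,m t$. Choosing the cut-off time $T^\ast_n=1+t$ with $mt<2\pi/k$ (for $n=1$, $k=6$ this is exactly $1+\pi/6$) forces this average below $1$, hence some rotation yields a $k$-gon all of whose vertices are unexplored at time $T^\ast_n$. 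The adversary fixes this polygon and commits to placing the exit at one of its vertices.

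Next I would prove the confinement (polygon) lower bound, the analogue of Theorem~\ref{thm:lb3} and the technical heart. Assuming for contradiction that the queen always evacuates within the target time, I would apply pigeonhole to the map $v\mapsto$ (first searcher to visit $v$): if a single searcher were first to reach too many vertices, its latest such first-visit would already be too late, so the first-visits must spread across the $m$ searchers, forcing several vertices to be first-visited only at a late time. Among those late vertices, at the instant one of them is first reached the queen cannot yet have visited the others, so — since any of them may be the true exit — she must lie simultaneously within distance $(\text{target}-\text{current time})$ of each of them. For the right polygon this confines the queen to an intersection of disks (a ``lens'' as in Theorem~\ref{thm:lb3}); a backward reconstruction of the two- or three-servant trajectories, together with the constraint that the single queen cannot be near all surviving candidates at once, shows this region is empty unless the evacuation time exceeds the target, giving the contradiction. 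Combining $T^\ast_n$ with this confinement bound and optimizing the polygon size and cut-off time (pinned down numerically as in the upper-bound proofs) yields $3.6307$ for $n=2$ and $3.2017$ for $n=3$.

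The main obstacle is this confinement step. With $m=3$ or $m=4$ searchers the pigeonhole splits into many more cases for which searcher first visits which vertex, and the queen's feasible region becomes an intersection of more than two disks whose emptiness must be checked across all admissible trajectory reconstructions; keeping the case analysis finite and verifying each geometric sub-case is exactly where the argument becomes ``rather technical.'' A secondary subtlety, inherited from the $n=1$ case, is justifying the additive combination of the covering time and the confinement bound, i.e. arguing that the searchers' non-central positions at time $T^\ast_n$ do not let them reach the committed vertices appreciably faster than the bound permits.
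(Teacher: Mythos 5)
Your covering step is sound (with the minor fix that the averaging needs strict inequality, $mt<2\pi/k$, or an $\epsilon$-perturbation, to force a rotation with all $k$ vertices unexplored), and it is indeed how the paper handles $n=1$. But for $n=2,3$ the proposal has a genuine gap exactly at the point you yourself flag: the confinement step is never carried out. No polygon size $k$ is chosen, no case analysis of first-visit assignments among $3$ or $4$ searchers is given, and no inequality is derived; instead you assert that the reconstruction "shows this region is empty" and that optimizing "yields $3.6307$ for $n=2$ and $3.2017$ for $n=3$." Those two constants are not values one can expect any optimized polygon argument to hit: in the paper they are defined as solutions of specific transcendental systems (Theorems~\ref{thm:lb_n2} and~\ref{thm:lb_n3}), e.g.\ $\tau = \IT_2 - 2\coss{\frac{\tau-1}{2}}$, $t_* = \frac 12(\IT_2+1)$, $\IT_2 = t_* + 2\coss{\frac{2t_*+\tau}{4}-\frac 34}$ for $n=2$, whose form comes from an entirely different mechanism than polygon-vertex distances. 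Asserting your optimization lands at (or above) these values is precisely the statement to be proved. There is also a structural reason to doubt the plan closes as stated: unlike the $n=1$ case, with $3$ or $4$ searchers the algorithm can arrange for servants to end the covering phase adjacent to distinct vertices of the committed polygon, and the queen can park near the center, equidistant from all surviving candidates; the intersection of disks you need to be empty is then nonempty, so the bound must instead come from timing the servants' later visits--- which requires quantifying the searchers' positions at the cutoff, i.e.\ resolving (not just noting) the additive-combination subtlety.

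For contrast, the paper abandons the polygon method entirely for $n=2,3$ and uses a perimeter-accounting argument with two opposing bounds. A counting bound (Lemma~\ref{lm:lb2_1} for $n=2$; Lemmas~\ref{lm:chord}, \ref{lm:lb3_0} and~\ref{lm:lb3_1} for $n=3$) shows that if the evacuation time is below the target, the queen must personally explore at least a certain length of the perimeter. A capping bound (Lemmas~\ref{lm:theta}, \ref{lm:dtheta}, \ref{lm:lb2_2}, \ref{lm:lb3_2}) shows she cannot: whenever the queen stands on the perimeter at time $t$, the whole arc $\theta(t,\IT)=4\acoss{\frac{\IT-t}{2}}$ of points farther than $\IT-t$ from her must already have been explored, and this arc grows faster than the servants can sweep, forcing the queen off the perimeter and limiting her contribution to $\tau-1$. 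The lower bound is where the required minimum exceeds this maximum, which is what produces the quoted constants. To salvage your route you would have to fix $k$, enumerate and close all first-visit cases, and justify the combination step---and even then you should expect constants different from, and possibly weaker than, $3.6307$ and $3.2017$.
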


    Throughout this section we will use $\IT$ to refer to the evacuation time of an arbitrary algorithm and use $\UC$ to refer to the unit circle which must be evacuated.

    The main thrust of the proof relies on a simple idea -- the queen should aid in the exploration of $\UC$. This is immediately evident for the particular case of $n=2$ since, if the queen does not explore, it will take time at least $1+\pi$ for the servants to search all of $\UC$ and we already have an upper bound smaller than this (Theorem~\ref{thm: 2 servants}). Thus, a general overview of the proof is as follows: we show that in order to evacuate in time $\IT$ the queen must explore some minimum length of the perimeter of $\UC$. We will then demonstrate that the queen is not able to explore this minimum amount in any algorithm with evacuation time smaller than what is given in Theorem~\ref{thm:lb23}.

    To be concrete, consider the case of $n=2$ and assume that we have an algorithm with evacuation time $\IT < 1+\pi$. Then, in order for the robots to have explored all of $\UC$ in time $\IT$, the queen must explore a subset of the perimeter of total length at least $2(1+\pi-\IT)$. Intuitively, this minimum length of perimeter will increase in size as $\IT$ decreases.

    Now consider that it is not possible for the queen to always remain on the perimeter (indeed, in each of the algorithms presented, the queen leaves the perimeter). To see why this is consider that, in any algorithm with evacuation time $\IT$, it must be the case that all unexplored points of $\UC$ are located a distance no more than $\IT-t$ from the queen at all times $t \leq \IT$. If the queen is on the perimeter at any time $t$ satisfying $\IT-t \leq 2$, then, there will be some arc $\theta(t,\IT) \subset \UC$ (see Lemma~\ref{lm:theta}) such that all points of $\theta(t,\IT)$ are at a distance at least $\IT-t$ from the queen. Thus, if the queen is to be on the perimeter at the time $t$ we can conclude that all of the arc $\theta(t,\IT)$ must have already been discovered. However, we will find (see Lemma~\ref{lm:dtheta}) that $\theta(t,\IT)$ will often grow at a rate much larger than the robots can collectively explore and at some point the queen will have to leave the perimeter. In fact, there will be an interval of time during which it is not possible for the queen to be exploring and this in turn implies that there is a maximum amount of perimeter that can be explored by the queen. Intuitively, the maximum length of perimeter that can be explored by the queen will decrease as $\IT$ decreases. The lower bound will result by balancing the minimum amount of perimeter the queen needs to search and the maximum amount of perimeter that the queen is able to search.

    The above argument will need a slight modification in the case of $n=3$. In this case we will show that there is some critical time $t_*$ before which the queen must have explored some minimum amount of perimeter. Again, the lower bound follows by balancing the maximum amount of perimeter the queen can explore by the time $t_*$ and the minimum amount of perimeter the queen needs to explore before the time $t_*$.

%%%%%%%%%%%%%%%%%%%end of lower bounds
%%%%%%%%%%%%%%%%%%%%%%%%%%%%%
%%%%%%%%%%%%%%%%%%%%%%%%%%%%%

\subsection{Lower Bounds for \pe{2} and \pe{3} - Proof Details}
    In this section we present the complete details of the proofs for the lower bounds in the cases $n=2$ and $n=3$. Throughout this section we will use $\IT$ to refer to the evacuation time of an arbitrary algorithm and use $\UC$ to refer to the unit circle which must be evacuated. 
    
    The idea of the proofs are to bound the amount of perimeter the queen can search for a given evacuation time $\IT$ and then show that the queen must search a minimum amount of the perimeter in order to achieve the evacuation time $\IT$. The lower bounds result by balancing the minimum amount of perimeter the queen must search with the maximum amount of perimeter the queen can search.

    We begin with two lemmas which will be used for both the $n = 2$ and $n=3$ bounds. Their necessity will become apparent shortly.
    \begin{lemma} \label{lm:theta}
        Consider any $r < 2$ and a point $P \in \UC$. Define the circle $\ID_P$ as the disk centered on $P$ with radius $r$. Then the subset of the perimeter of $\UC$ which is not contained in $\ID_P$ has length $\theta = 4\acoss{\frac{r}{2}}$.
    \end{lemma}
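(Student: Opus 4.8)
The plan is to reduce the claim to a one-dimensional computation by parametrizing the perimeter of $\UC$ and rewriting ``lies inside $\ID_P$'' as a single inequality on the angular coordinate. By the rotational symmetry of the unit circle I would place $P$ at $\ki{0}=(1,0)$ without loss of generality, and write a generic point of the perimeter as $\ki{\phi}=(\coss{\phi},\sinn{\phi})$ with $\phi\in(-\pi,\pi]$. Since $\UC$ has radius $1$, arc length coincides with the subtended central angle, so it suffices to measure the set of admissible $\phi$.

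First I would compute the distance from $\ki{\phi}$ to $P$. Expanding $\norm{\ki{\phi}-\ki{0}}^2 = 2-2\coss{\phi}$ and applying the half-angle identity gives $\norm{\ki{\phi}-P}=2\left|\sinn{\phi/2}\right|$ — this is just the chord-length formula for a unit circle (equivalently, the law of cosines in the isosceles triangle $O\,P\,\ki{\phi}$ with two unit sides). The point $\ki{\phi}$ then fails to lie in $\ID_P$, up to the measure-zero boundary, exactly when $2\left|\sinn{\phi/2}\right|\ge r$, i.e.\ when $\left|\sinn{\phi/2}\right|\ge r/2$.

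Next I would solve this inequality. Because $r<2$ we have $r/2\in[0,1)$, so $\asinn{r/2}$ is well defined in $[0,\pi/2)$; since $\phi/2$ ranges over $(-\pi/2,\pi/2]$, where $\sinn{\cdot}$ is increasing, the inequality is equivalent to $|\phi|\ge 2\asinn{r/2}$. The admissible set therefore has total measure $2\pi-4\asinn{r/2}$, which is the length $\theta$ of the excluded arc. Finally, the complementary-angle identity $\asinn{x}+\acoss{x}=\pi/2$ rewrites this as $2\pi-4\left(\tfrac{\pi}{2}-\acoss{r/2}\right)=4\acoss{r/2}$, matching the claim.

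There is no genuinely hard step here; the computation is elementary. The only points requiring care are (i) verifying that $r<2$ forces $r/2<1$, so that $\asinn{r/2}$ is defined and the remaining arc is nonempty, and (ii) handling the absolute value on the symmetric interval $(-\pi,\pi]$ so that the single threshold $2\asinn{r/2}$ correctly accounts for both sides of $P$. Endpoint sanity checks confirm the formula: as $r\to0$ one gets $\theta\to 2\pi$ (a vanishing disk excludes nothing), and as $r\to2$ one gets $\theta\to0$ (a disk of radius $2$ centered on the circle leaves only the antipodal point outside).
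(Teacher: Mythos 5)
Your proof is correct and rests on the same underlying fact as the paper's: on the unit circle the chord from $P$ to a point at angular offset $\phi$ has length $2\left|\sinn{\phi/2}\right|$, so the excluded arc is found by inverting this relation at distance $r$. The paper reads the boundary relation $r = 2\coss{\theta/4}$ off its figure at the two intersection points of the circles, whereas you obtain the same relation analytically by solving $\left|\sinn{\phi/2}\right| \geq r/2$ over the parametrized perimeter and converting $\asinn{r/2}$ to $\acoss{r/2}$ --- the same approach, just made self-contained and figure-free.
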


    \begin{proof}
        Without loss of generality assume that the point $P$ is located at $\pair{-1}{0}$. Since $r < 2$ the disks $\UC$ and $\ID_P$ will intersect at two boundary points $A$ and $B$ between which the distance along the perimeter of $\UC$ is $\theta$. This situation is depicted in Figure~\ref{fig:lb1}. Referring to this figure, one can easily observe that $r = 2\sinn{\frac \pi 2 - \frac {\theta}{4}} = 2\coss{\frac{\theta}{4}}$. Rearranging for $\theta$ we find that $\theta = 4 \acoss{\frac{r}{2}}$.
        \includeFig{width=2in,keepaspectratio}{fig:lb1}{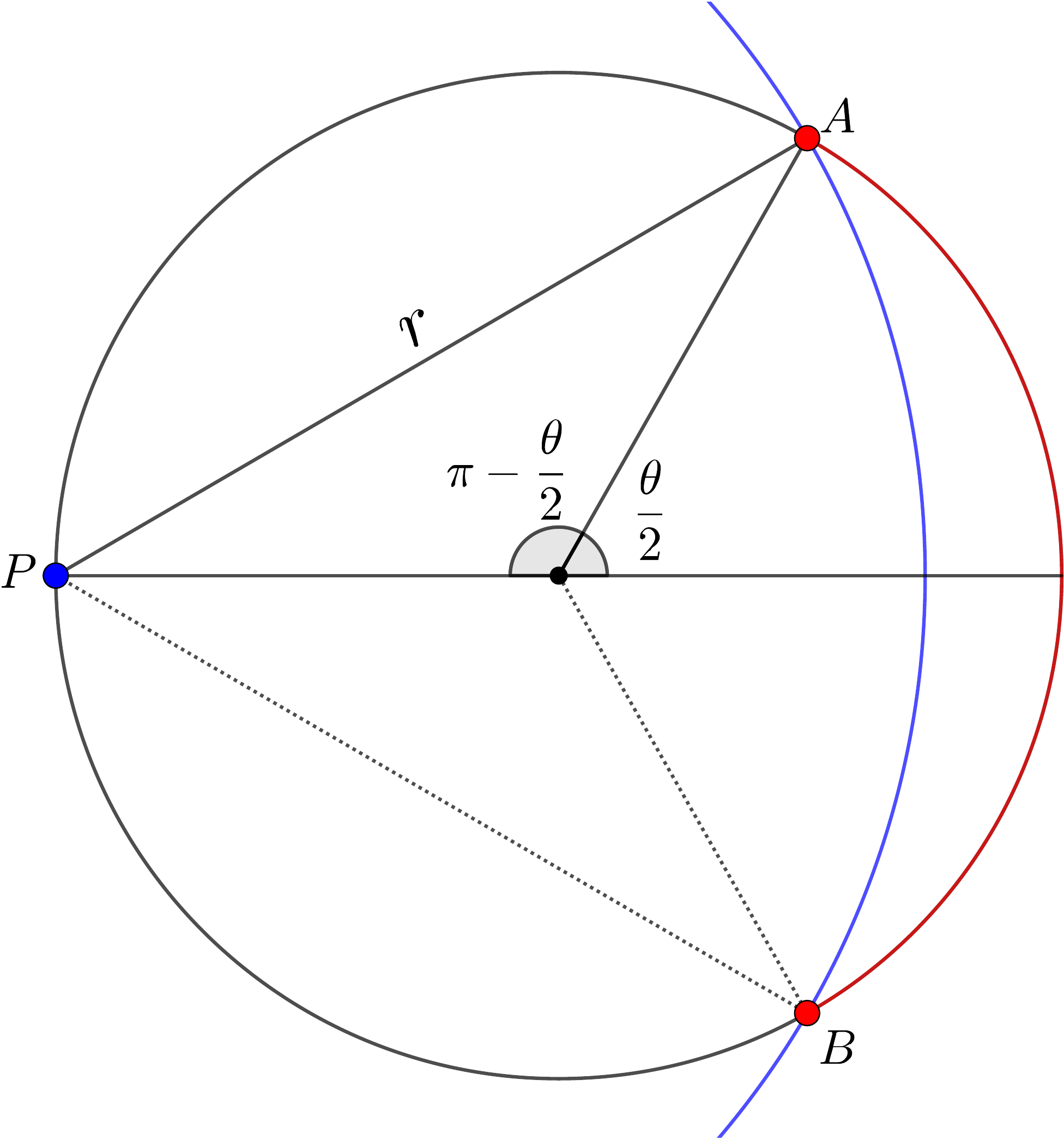}{Setup for the proof of Lemma~\ref{lm:theta}. The boundary of the disk $\ID_P$ is indicated in blue. The arc of $\UC$ which is excluded from $\ID_P$ is highlighted in red and has length $\theta$.}
    \end{proof}

    \begin{lemma}\label{lm:dtheta}
        Consider the function $\theta(t,\IT) = 4 \acoss{\frac{\IT-t}{2}}$ with $\IT > 0$. Then $\diff{\theta}{t} > 2$ for all $t$ satisfying $\IT-2 < t < \IT$ and $\diff{\theta}{t} > 3$ for $t$ satisfying $\IT-2 < t < \IT-\frac 23 \sqrt{5}$. Furthermore, $\diff{\theta}{\IT} < -2$ for all $\IT-2 < t < \IT$.
    \end{lemma}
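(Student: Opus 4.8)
The plan is to reduce everything to a single explicit derivative and then verify three elementary one-variable inequalities. First I would differentiate $\theta(t,\IT)=4\acoss{\frac{\IT-t}{2}}$ with respect to $t$ by the chain rule, writing $u=\frac{\IT-t}{2}$ so that $\frac{du}{dt}=-\tfrac12$ and recalling $\frac{d}{du}\acoss{u}=-\frac{1}{\sqrt{1-u^2}}$. The two sign factors cancel, and after substituting $1-u^2=\frac{4-(\IT-t)^2}{4}$ I obtain
\begin{equation}
\diff{\theta}{t}=\frac{4}{\sqrt{4-(\IT-t)^2}}.
\end{equation}
It is then convenient to set $s=\IT-t$, so that on the open interval $\IT-2<t<\IT$ we have $0<s<2$, the radicand $4-s^2$ is positive, and $\diff{\theta}{t}=\frac{4}{\sqrt{4-s^2}}$ is a positive, strictly increasing function of $s$.

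With this formula in hand the three assertions become algebraic inequalities in $s$. For the first bound, $\frac{4}{\sqrt{4-s^2}}>2$ is equivalent (both sides positive) to $2>\sqrt{4-s^2}$, i.e.\ to $s^2>0$, which holds for every $s\in(0,2)$, that is for all $\IT-2<t<\IT$. For the second bound, $\frac{4}{\sqrt{4-s^2}}>3$ is equivalent to $\tfrac{16}{9}>4-s^2$, i.e.\ to $s^2>\tfrac{20}{9}$, i.e.\ to $s>\frac{2\sqrt5}{3}$; translating back through $s=\IT-t$ this is exactly $t<\IT-\tfrac23\sqrt5$, matching the stated range $\IT-2<t<\IT-\tfrac23\sqrt5$.

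For the third assertion I would differentiate the same expression with respect to $\IT$, holding $t$ fixed. Now $\frac{du}{d\IT}=+\tfrac12$, the opposite sign to before, so the single sign factor from $\acoss$ survives and gives $\diff{\theta}{\IT}=-\frac{4}{\sqrt{4-(\IT-t)^2}}=-\frac{4}{\sqrt{4-s^2}}$, precisely the negative of $\diff{\theta}{t}$. Hence $\diff{\theta}{\IT}<-2$ on $\IT-2<t<\IT$ is immediate from the first bound already proved.

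I do not anticipate a genuine obstacle here; the computation is short. The only points requiring care are the sign bookkeeping in the chain rule (the derivative of $\acoss$ is negative, and $\frac{du}{dt}$ and $\frac{du}{d\IT}$ have opposite signs), keeping the radicand $4-s^2$ strictly positive so that the square roots are real --- which is why the endpoints $t=\IT$ and $t=\IT-2$ are excluded --- and recognising $\frac{\sqrt{20}}{3}=\frac{2\sqrt5}{3}$ so that the threshold in the second bound lines up with the interval endpoint stated in the lemma.
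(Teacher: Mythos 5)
Your proposal is correct and follows essentially the same route as the paper: compute $\diff{\theta}{t}=\frac{4}{\sqrt{4-(\IT-t)^2}}$ and read off the three inequalities, with the bound on $\diff{\theta}{\IT}$ following from the (anti)symmetry of $\theta$ in $\IT$ and $t$. You simply spell out the algebra (the substitution $s=\IT-t$ and the threshold $s>\tfrac{2}{3}\sqrt{5}$) that the paper leaves as "simple to confirm."
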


    \begin{proof}
        The rate of change of $\theta(t,\IT)$ with $t$ is given by
        \[\diff{\theta}{t} = \frac{4}{\sqrt{4-(\IT-t)^2}}.\]
        From this relation it is simple to confirm that $\diff{\theta}{t} > 2$ for $\IT-2 < t < \IT$ and that $\diff{\theta(r)}{r} > 3$ for $\IT-2 < t < \IT-\frac 23 \sqrt{5}$. It should also be obvious by the symmetry of $\IT$ and $t$ in the function $\theta(t,\IT)$ that $\diff{\theta}{\IT} < -2$ for all $\IT-2 < t < \IT$.
    \end{proof}

    \subsubsection{Lower bound for $n=2$}
    We begin with the main result of the section.
    \begin{theorem}\label{thm:lb_n2}
        For $n=2$ and any algorithm the queen cannot be evacuated in time less than $\IT_2$ which is the solution to the equations
        \[\tau = \IT_2 - 2\coss{\frac{\tau-1}{2}}\]
        \[t_* = \frac 12 (\IT_2+1)\]
        \[\IT_2 = t_* + 2\coss{\frac{2t_*+\tau}{4} - \frac 34}.\]
        Solving these equations numerically gives $\tau \approx 1.7815$, $t_* \approx 2.3154$, and $\IT_2 \approx 3.6307$.
    \end{theorem}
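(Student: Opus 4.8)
The plan is to follow the perimeter-balancing strategy sketched in the outline: for any algorithm with evacuation time $\IT$, show that a critical length of the perimeter of $\UC$ must have been explored by a specific time $t_*$, and that the queen's contribution to this exploration is severely limited, so that feasibility forces $\IT \ge \IT_2$. Since $\IT_2 \approx 3.6307 < 1+\pi$, I may assume $\IT < 1+\pi$ (otherwise the bound is immediate). In this regime the two servants, moving at unit speed from time $1$, cover at most $2(\IT-1) < 2\pi$ of the perimeter by time $\IT$, so the queen must explore at least $2(1+\pi-\IT)$; in particular she must genuinely participate in the search. Throughout I would use the reachability invariant: if the exit has not been located by time $t$, then every unexplored point of $\UC$ lies within distance $\IT-t$ of $\queen(t)$, for otherwise placing the exit there defeats the algorithm.

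First I would pin down the critical time. Define $t_* = \tfrac12(\IT+1)$, which is exactly the moment at which the queen's remaining reach $\IT-t_*$ equals the elapsed search time $t_*-1$ (both equal $\tfrac12(\IT-1)$); this is the second defining equation. Applying Lemma~\ref{lm:theta} with radius $r=\IT-t$, whenever the queen sits on the perimeter at a time $t$ with $\IT-t<2$ the diametrically opposite arc of length $\theta(t,\IT)=4\acoss{\frac{\IT-t}{2}}$ must already be explored, since none of its points is within reach. I would then invoke Lemma~\ref{lm:dtheta}: because $\diff{\theta}{t}>2$ on $(\IT-2,\IT)$, the quantity $\theta(t,\IT)-2(t-1)$ is strictly increasing and vanishes at the unique time $\tau$ solving $\theta(\tau,\IT)=2(\tau-1)$, i.e. $\tau=\IT-2\coss{\frac{\tau-1}{2}}$, the first defining equation. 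Since the servants can account for at most $2(t-1)$ of perimeter by time $t$, for $t>\tau$ the servants alone can no longer cover the far arc, and the queen cannot be adding new far-side perimeter after $\tau$ without stranding an unexplored point out of her reach; this caps the queen's useful exploration at $\tau-1$.

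With these two times in hand, the lower bound comes from a counting argument at $t_*$. The far arc $\theta(t_*,\IT)$ must be explored by time $t_*$; the servants contribute at most $2(t_*-1)$ and the queen at most $\tau-1$, so feasibility requires $\theta(t_*,\IT)\le 2(t_*-1)+(\tau-1)=2t_*+\tau-3$. Writing $\theta(t_*,\IT)=4\acoss{\frac{\IT-t_*}{2}}$ and substituting $\IT-t_*=t_*-1$, the tight case $4\acoss{\frac{\IT-t_*}{2}}=2t_*+\tau-3$ rearranges to $\IT=t_*+2\coss{\frac{2t_*+\tau}{4}-\frac34}$, the third equation. Any smaller $\IT$ would leave the far arc impossible to clear by $t_*$, so solving the three equations simultaneously yields the threshold $\IT_2\approx 3.6307$ (with $\tau\approx1.7815$, $t_*\approx2.3154$), and numerically checking this root completes the argument.

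I expect the main obstacle to be the rigorous justification of the cap $\tau-1$ on the queen's useful exploration. The difficulty is that the far arc at time $t$ lies near the antipode of $\queen(t)$, so one must argue carefully that perimeter the queen explores after $\tau$ cannot help cover the portions of $\UC$ that the servants cannot reach in time; this requires tracking $\queen$'s position, ruling out that she leaves and re-enters the perimeter to recycle coverage, and handling the case in which $\queen$ is strictly inside $\UC$, where Lemma~\ref{lm:theta} (stated for $P\in\UC$) does not apply directly. The strict estimate $\diff{\theta}{t}>3$ on $(\IT-2,\ \IT-\tfrac{2}{3}\sqrt{5})$ from Lemma~\ref{lm:dtheta} should be the lever here: on that subinterval not even all three robots together can keep pace with the growth of the far arc, which forces the queen out of the exploring role and makes the cap precise.
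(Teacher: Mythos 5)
Your high-level route is the same as the paper's: the minimum-exploration requirement $y \ge 2(1+\pi-\IT)$ (the paper's Lemma~\ref{lm:lb2_1}), the far-arc constraint from Lemma~\ref{lm:theta}, the time at which $\theta(t,\IT)=2(t-1)$ forces the queen off the perimeter, the threshold $t_*=\frac 12(\IT+1)$, and the same three equations. However, your pivotal counting step contains a genuine gap. You assert unconditionally that ``the far arc $\theta(t_*,\IT)$ must be explored by time $t_*$.'' That is false as stated: the constraint of Lemma~\ref{lm:theta} applies only when the queen is \emph{on} the perimeter at time $t_*$. If she is strictly inside the disk the constraint weakens or vanishes entirely --- e.g.\ at the center every point of $\UC$ is within distance $1 < \IT-t_* \approx 1.32$ of her, so nothing need be explored on her account. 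What must actually be proved is that the queen, forced off the perimeter at some time $t_0\le\tau$ with only $t_0-1$ of exploration done, can never \emph{return} to the perimeter while $\IT<\IT_2$; only then does her total exploration stay below $\tau-1 < 2(1+\pi-\IT)$, which is the contradiction. The paper's Lemma~\ref{lm:lb2_2} does exactly this by analyzing the first return time $t_1>t_0$: if $t_1<\frac 12(\IT+1)$, the queen's prior exploration cannot lie in the far arc, so $\theta(t_1,\IT)=2(t_1-1)$ would need a root $t_1>t_0$, impossible since $\theta(t,\IT)-2(t-1)$ is strictly increasing; if $t_1\ge\frac 12(\IT+1)$, the equation $t_1=\IT-2\coss{\frac 12(t_1-1)+\frac 14 y}$ with $y\le t_0-1$, combined with the implicit-differentiation bound $\diff{t_1}{\IT}\ge\frac 12$ of Lemma~\ref{lm:impl1}, forces $t_1<\frac 12(\IT+1)$, a contradiction. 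Your single check at $t=t_*$, plus the bare assertion that ``any smaller $\IT$ would leave the far arc impossible to clear,'' does not substitute for this: you would still need to rule out returns at times beyond $t_*$ (monotonicity of $\theta(t,\IT)-2(t-1)$ in $t$), and, more importantly, to prove the monotonicity in $\IT$ of both $\theta(t_*,\IT)$ and of the forced-departure time that makes the tight case at $\IT_2$ extremal --- which is precisely the role Lemma~\ref{lm:impl1} plays in the paper. You flagged this yourself as the ``main obstacle'' but left it unresolved, so the central lemma of the argument is asserted rather than proved.

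A further concrete point: the lever you propose for closing the gap --- the estimate $\diff{\theta}{t}>3$ on the interval $\left(\IT-2,\ \IT-\frac 23\sqrt{5}\right)$ from Lemma~\ref{lm:dtheta} --- is the wrong tool here. In the paper that estimate is used only for the $n=3$ bound (where the comparison is against three searching servants); for $n=2$ only $\diff{\theta}{t}>2$ is relevant. Moreover its range of validity ends at $\IT-\frac 23\sqrt{5}\approx 2.14$ when $\IT\approx 3.63$, which does not even contain the critical time $t_*\approx 2.32$, so it cannot control the behavior near $t_*$ where your argument needs it.
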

    We will see that the queen cannot be located on the perimeter of the circle during the interval of time $(\tau,\ t_*)$ and thus $\tau-1$ represents the maximum amount of perimeter that can be explored by the queen before the time $t_*$. The time $t_*$ is chosen such that for all $\IT < \IT_2$ a solution to the equations in Theorem~\ref{thm:lb_n2} do not exist, and, as such, $\tau-1$ will represent the maximum length of the perimeter that can be explored by the queen. In the following lemma we show that the queen must explore a length of the perimeter greater than $\tau-1$ in order to evacuate in time less than $\IT_2$.
    \begin{lemma}\label{lm:lb2_1}
        For $n = 2$ and any evacuation algorithm with $\IT < 1+\pi$, the queen must explore a subset of the perimeter of length $y \geq 2(1 + \pi - \IT)$. In particular, if $\IT < \IT_2$, we need $y > 2(1 + \pi - \IT_2) \approx 1.0217$.
    \end{lemma}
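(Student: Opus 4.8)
The plan is to establish the bound by a direct counting argument on how much perimeter each robot can possibly explore within the time budget $\IT$. First I would observe that every robot starts at the center $O$ of $\UC$, and the perimeter lies at distance exactly $1$ from $O$; since each robot has speed at most $1$, each of the three robots $\queen, \ser{1}, \ser{2}$ must expend at least $1$ unit of time (equivalently, path length) before it can touch the perimeter for the first time. Consequently, out of a total path length of at most $\IT$, at least $1$ unit is consumed by the initial radial excursion reaching the boundary, so the arc length that any single robot can traverse while physically on $\UC$ is at most $\IT-1$.

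Next I would note that a robot can only explore (be the first to visit) points of the perimeter while it is actually on $\UC$, so the length of perimeter it explores is bounded above by the arc length it traverses on $\UC$, hence by $\IT-1$. Since the two servants together explore at most $2(\IT-1)$, and since the three robots must jointly cover all of $\UC$, whose perimeter has length $2\pi$, the length $y$ explored by $\queen$ must satisfy
\[
y \;\geq\; 2\pi - 2(\IT-1) \;=\; 2(1+\pi-\IT).
\]
The hypothesis $\IT < 1+\pi$ is precisely what makes this lower bound positive (and hence non-trivial).

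For the final ``in particular'' assertion I would simply invoke monotonicity: the right-hand side $2(1+\pi-\IT)$ is strictly decreasing in $\IT$, so any algorithm achieving $\IT < \IT_2$ forces $y > 2(1+\pi-\IT_2) \approx 1.0217$, as claimed.

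The argument is essentially routine; the one point that warrants care — and the step I would be most careful to phrase correctly — is the claim that a robot explores at most $\IT-1$ of the perimeter \emph{even if it repeatedly leaves and re-enters} $\UC$. This is resolved by the observation above: new perimeter can be discovered only while the robot is on $\UC$, so the explored length is bounded by the total on-perimeter arc length, which is at most the full path length $\IT$ minus the at-least-$1$ radial distance needed to reach the boundary the first time. With that observation in place, the rest is immediate.
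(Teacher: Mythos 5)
Your proposal is correct and takes essentially the same approach as the paper: both arguments charge each servant one unit of deployment time and at most $\IT-1$ of perimeter coverage, so the queen must make up the remaining $2\pi - 2(\IT-1) = 2(1+\pi-\IT)$, with the final claim following by monotonicity in $\IT$. The paper merely phrases the same inequality as a lower bound $1+\frac{2\pi-y}{2} \leq \IT$ on the search time rather than as an upper bound on the servants' coverage, so the two derivations are identical after rearrangement.
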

    \begin{proof}
        If the queen explores a subset of the perimeter of length $y$ then the robots will take time $1+\frac{2\pi-y}{2}$ to explore the circle. The robots need to at least explore the entire circle in time $\IT$ and therefore $1+\frac{2\pi-y}{2} \leq \IT$, or, equivalently, $y \geq 2(1 + \pi - \IT_2)$. For $\IT < \IT_2 \approx 3.6307$ we need $y > 1.0217$.
    \end{proof}

    We will now show that the maximum length of perimeter the queen can explore is less than $\tau-1$ if $\IT < \IT_2$. This will be the goal of the next two lemmas.
    \begin{lemma}\label{lm:impl1}
        Consider the equation $\IT = t + 2\coss{\frac 12 (t-1)+\frac 12 \alpha}$ with $\IT > 0$, $\alpha$ satisfying $0 < \alpha \leq t$ and $t$ satisfying $1 < t \leq \IT$. Then $\diff{t}{\IT} \geq \frac 12$, and, if $0 < t < 1+2\pi-\frac{\alpha}{2}$ then $\diff{t}{\alpha} > 0$.
    \end{lemma}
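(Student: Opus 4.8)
The plan is to read the statement as an implicit relation and extract both derivatives from the implicit function theorem. Writing $u := \frac{t-1}{2}+\frac{\alpha}{2}$ and
$$F(t,\IT,\alpha) := t + 2\coss{u} - \IT,$$
the hypothesis is precisely $F=0$, and a one-line computation gives the partials $F_t = 1-\sinn{u}$, $F_{\IT}=-1$, and $F_\alpha = -\sinn{u}$. The first thing I would record is that the equation itself reads $2\coss{u}=\IT-t\ge 0$ (using $t\le\IT$), so $\coss{u}\ge 0$; in the interior case $t<\IT$ this is strict, giving $\sinn{u}<1$ and hence $F_t=1-\sinn{u}>0$. Thus $F_t\neq 0$, the implicit function theorem applies, and both derivatives below are well defined (the boundary case $t=\IT$, i.e.\ $u=\frac{\pi}{2}$, is handled as a one-sided limit).

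For the first claim I would differentiate with $\alpha$ fixed to get
$$\diff{t}{\IT} = -\frac{F_{\IT}}{F_t} = \frac{1}{1-\sinn{u}}.$$
Since $\sinn{u}\le 1$ holds for every real $u$, the denominator lies in $(0,2]$, and taking reciprocals gives $\diff{t}{\IT}\ge \frac12$ at once. Notably this half needs nothing beyond $\sinn{u}\le 1$ together with positivity of the denominator, so no localization of $u$ is required here.

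For the second claim, differentiating with $\IT$ fixed gives
$$\diff{t}{\alpha} = -\frac{F_\alpha}{F_t} = \frac{\sinn{u}}{1-\sinn{u}},$$
so, the denominator being positive, the sign of $\diff{t}{\alpha}$ is exactly the sign of $\sinn{u}$, and everything reduces to proving $\sinn{u}>0$. This is the one genuinely non-routine step, and I expect it to be the main obstacle: the formula for $\diff{t}{\alpha}$ is branch-insensitive, but $\sinn{u}$ is negative on other branches of the cosine, so I must confine $u$ to $(0,\frac{\pi}{2}]$ using the hypotheses. The upper bound $t<1+2\pi-\frac{\alpha}{2}$ is what makes this work: together with $\alpha\le t$ it forces $\frac32\alpha<1+2\pi$, hence $\alpha<\frac{2+4\pi}{3}<2\pi$, and substituting back bounds $u=\frac{t-1+\alpha}{2}<\pi+\frac{\alpha}{4}<\frac{3\pi}{2}$. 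Meanwhile $u>0$ since $t>1$ and $\alpha>0$.

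Finally I would combine the two localizing facts. The constraint $\coss{u}\ge 0$ confines $u$ to the union of the intervals $[2k\pi-\frac{\pi}{2},\,2k\pi+\frac{\pi}{2}]$, and the only one of these meeting $(0,\frac{3\pi}{2})$ is $(0,\frac{\pi}{2}]$. Hence $u\in(0,\frac{\pi}{2}]$, on which $\sinn{u}>0$, so $\diff{t}{\alpha}>0$, completing the argument. The only delicate points are thus the branch selection just described and the bookkeeping of the boundary value $t=\IT$; the derivative computations themselves are routine implicit differentiation.
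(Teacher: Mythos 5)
Your proof is correct, and its core is the same as the paper's: implicit differentiation of the constraint to get $\diff{t}{\IT}=\frac{1}{1-\sinn{u}}$ and $\diff{t}{\alpha}$ proportional to $\frac{\sinn{u}}{1-\sinn{u}}$ (writing $u$ for the cosine's argument), followed by a sign analysis of $\sinn{u}$. The divergence is in how $\sinn{u}>0$ is established, and it traces back to a discrepancy inside the paper itself: the lemma as printed has $\coss{\frac 12 (t-1)+\frac 12 \alpha}$, but the paper's own proof --- and the application in Lemma~\ref{lm:lb2_2}, where the relation is $t_1 = \IT - 2\coss{\frac 12(t_1-1)+\frac 14 y}$ --- uses $\frac 14 \alpha$ instead. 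Under the $\frac 14 \alpha$ reading, the hypothesis $t<1+2\pi-\frac{\alpha}{2}$ is exactly equivalent to the cosine's argument lying in $(0,\pi)$, so positivity of the sine is immediate, and that one line is the paper's entire argument. You took the printed $\frac 12 \alpha$ at face value, under which the same hypothesis only gives $u<\pi+\frac{\alpha}{4}$; you closed that gap with an observation the paper never needs, namely that the equation itself forces $2\coss{u}=\IT-t\geq 0$, which together with $0<u<\frac{3\pi}{2}$ (obtained by combining $t<1+2\pi-\frac{\alpha}{2}$ with $\alpha\leq t$) pins $u$ to $(0,\frac{\pi}{2}]$, where the sine is positive. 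That branch-localization step is correct and genuinely necessary for the statement as literally written. It also buys you a cleaner treatment of the degenerate denominator: you note $\coss{u}>0$, hence $\sinn{u}<1$, in the interior $t<\IT$, whereas the paper asserts $\diff{t}{\IT}\geq\frac 12$ from $-1\leq\sin\leq 1$ alone, glossing over the case $\sinn{u}=1$. In short: same method, but your version proves the lemma as stated and is more careful at the boundary, while the paper proves (and later invokes) the $\frac 14 \alpha$ variant.
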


    \begin{proof}
        Implicitly differentiating the equation $\IT = t + 2\coss{\frac 12 (t-1)+\frac 14 \alpha}$ with respect to $\IT$ gives us
        \[\diff{t}{\IT} = \frac{1}{1-\sinn{\frac 12 (t-1) + \frac 14 \alpha}}.\]
        Since the sine function ranges from $-1$ to $1$ we can easily see that $\diff{t}{\IT} \geq \frac 12$.

        Implicitly differentiating the equation $\IT = t + 2\coss{\frac 12 (t-1)+\frac 14 \alpha}$ with respect to $\alpha$ gives us
        \[\diff{t}{\alpha} = \frac 12 \cdot \frac{\sinn{\frac 12 (t-1) + \frac 14 \alpha}}{1-\sinn{\frac 12 (t-1) + \frac 14 \alpha}}.\]
        We can esily see that the denominator of $\diff{t}{\alpha}$ will never be negative and thus $\diff{t}{\alpha} > 0$ provided that the numerator is positive. This clearly occurs for $\frac 12 (t-1) + \frac 14 \alpha < \pi$ or $t < 1 + 2\pi - \frac \alpha 2$.
    \end{proof}

    \begin{lemma}\label{lm:lb2_2}
        Define $\tau$ as in Theorem~\ref{thm:lb_n2}. Then, for $n = 2$ and any evacuation algorithm with $\IT < \IT_2$, the queen cannot explore a subset of the perimeter with length $y > \tau-1$.
    \end{lemma}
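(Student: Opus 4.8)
The plan is to bound the total perimeter the queen can explore by splitting her activity into the window \emph{before} the forbidden interval $(\tau,t_*)$ and the window \emph{after} it. The starting point is the reachability constraint recalled in the outline: in any algorithm of cost $\IT$, every point of $\UC$ not yet explored by time $t$ lies within distance $\IT-t$ of the queen. Hence whenever the queen is on the perimeter at a time $t$ with $\IT-t<2$, Lemma~\ref{lm:theta} (applied with $r=\IT-t$) guarantees that the complementary ``far arc'' of length $\theta(t,\IT)=4\acoss{\frac{\IT-t}{2}}$ has already been fully explored by time $t$.

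The key geometric step, which pins down $t_*$, is to argue that for $t<t_*=\frac12(\IT+1)$ the queen's \emph{own} exploration cannot contribute to covering the current far arc. A point $P$ on that arc satisfies $\norm{P-\queen(t)}>\IT-t$; had the queen visited $P$ at an earlier time $t'$, travelling at unit speed from $P$ to $\queen(t)$ would force $t-t'>\IT-t$, i.e. $t'<2t-\IT<1$, which is impossible since no robot reaches the perimeter before time $1$. Thus for $t<t_*$ the far arc must be covered by the two servants alone, so that being on the perimeter at such a $t$ requires $\theta(t,\IT)\le 2(t-1)$. Defining $\tau(\IT)$ by equality and using Lemma~\ref{lm:dtheta} ($\diff{\theta}{t}>2$, while the right-hand side grows at rate exactly $2$), this inequality fails for every $t\in(\tau(\IT),t_*)$; the queen is therefore off the perimeter throughout $(\tau(\IT),t_*)$ and has explored at most $\tau(\IT)-1$ by time $t_*$.

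The main obstacle is ruling out further exploration after $t_*$. Suppose the queen re-enters the perimeter to explore a fresh point at some first time $t_1\ge t_*$. At that instant her own exploration still totals at most $\tau(\IT)-1$ and the servants have covered at most $2(t_1-1)$, so covering the far arc at $t_1$ demands $\theta(t_1,\IT)\le 2(t_1-1)+(\tau(\IT)-1)$. Letting $t_{**}(\IT)$ solve the corresponding equality --- which is exactly the equation of Lemma~\ref{lm:impl1} with $\alpha=\tau(\IT)-1$ --- monotonicity of $\theta(t,\IT)-2(t-1)$ forces $t_1\le t_{**}(\IT)$. The three defining relations of Theorem~\ref{thm:lb_n2} are precisely $\theta(\tau,\IT_2)=2(\tau-1)$, $t_*=\frac12(\IT_2+1)$, and $\theta(t_*,\IT_2)=2(t_*-1)+(\tau-1)$, so at $\IT=\IT_2$ we get $t_{**}(\IT_2)=t_*(\IT_2)$ and the admissible re-entry window $[t_*,t_{**}]$ collapses to a single point.

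I would then finish with a monotonicity comparison in $\IT$. Since $t_*(\IT)=\frac12(\IT+1)$ has derivative exactly $\frac12$, while Lemma~\ref{lm:impl1} gives $\diff{t_{**}}{\IT}\ge\frac12$ for the direct dependence plus a strictly positive contribution $\diff{t_{**}}{\alpha}\cdot\diff{\tau}{\IT}>0$ entering through $\alpha=\tau(\IT)-1$ (with $\diff{\tau}{\IT}>0$, again by Lemma~\ref{lm:impl1}), the curve $t_{**}(\IT)$ rises strictly faster than $t_*(\IT)$. As they coincide at $\IT_2$, we obtain $t_{**}(\IT)<t_*(\IT)$ for every $\IT<\IT_2$, so the re-entry window is empty and the queen explores nothing after $t_*$. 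Her total explored length is therefore at most $\tau(\IT)-1$, and since $\tau(\IT)$ is increasing in $\IT$ this gives $y\le\tau(\IT)-1<\tau-1$, as claimed. The delicate points are the kinematic argument that defines $t_*$ and the verification that the $\IT$-derivative comparison of $t_*$ and $t_{**}$ is \emph{strictly} separated once the indirect dependence through $\tau(\IT)$ is taken into account.
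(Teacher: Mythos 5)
Your proposal is correct and takes essentially the same route as the paper's proof: the far-arc constraint from Lemma~\ref{lm:theta}, the threshold $t_*=\frac 12(\IT+1)$ before which the queen cannot have touched the far arc, the forced departure at the first solution of $\theta(t,\IT)=2(t-1)$ (your $\tau(\IT)$ is the paper's $t_0$), and the elimination of any return to the perimeter by comparing $\diff{t}{\IT}>\frac 12$ (Lemma~\ref{lm:impl1}) against $t_*(\IT)$ growing at rate exactly $\frac 12$, with coincidence at $\IT_2$. Your ``collapsing re-entry window'' formulation simply merges the paper's Case 1/Case 2 split into a single statement, and your explicit tracking of the indirect dependence through $\alpha=\tau(\IT)-1$ is a slightly more careful justification of the strict inequality the paper asserts.
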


    \begin{proof}
        We start with an observation: if the queen is to evacuate in time $\IT$, then, at any time $t < \IT$, all points of $\UC$ that are a distance greater than $\IT-t$ from the queen must be explored by a robot. If the queen is located on the perimeter at the time $t > \IT-2$ then by Lemma~\ref{lm:theta} there is an arc of length
        \begin{equation*}
            \theta(t,\IT) = 4 \acoss{\frac{\IT-t}{2}}
        \end{equation*}
        all points of which lie a distance greater than $\IT-t$ from the queen (as an abuse of notation we will refer to the arc with length $\theta(t,\IT)$ as $\theta(t,\IT)$). Thus, in order for the queen to be on the perimeter at the time $t$, the arc $\theta(t,\IT)$ must be explored. As we have $3$ robots in total the maximum length of $\theta(t,\IT)$ that can be explored at any time $t$ is $3(t-1)$. However, we claim that the queen cannot have explored any of $\theta(t,\IT)$ if the time $t$ satisfies $t < \frac 12 (\IT+1)$. Indeed, observe that the endpoints of $\theta(t,\IT)$ lie a distance $\IT-t$ away from the queen (by definition) and the queen -- who took a unit of time to reach the perimeter -- could have explored a point on the perimeter at most a distance $t-1$ from her current position. Thus, if $t-1 < \IT-t$, or, alternatively, $t < \frac 12 (\IT+1)$, the queen cannot have explored any of the arc $\theta(t,\IT)$. We must therefore have $\theta(t,\IT) \leq 2(t-1)$ for times $t$ that satisfy $t < \frac 12(\IT+1)$.

        We note that there is a trivial lower bound of $1+\frac{2\pi}{3} > 3$ and thus we can assume that $\IT > 3$. We make the following claim: if $\IT < \IT_2$ then the smallest time $t_0 > 0$ solving $\theta(t_0,\IT) = 2(t_0-1)$ satisfies $\left.\diff{\theta}{t} \right|_{t=t_0} > 2$ and $t_0 < \frac 12(\IT+1)$. We note that, if this is the case, the queen will have to leave the perimeter at the time $t_0$  (since she has not explored any of the arc $\theta(t,\IT)$ and, immediately after the time $t_0$, $\theta(t,\IT)$ will be too large to have been explored by the servants alone). 
        
        We first show that $t_0 < \frac 12 (\IT+1)$. To this end we rearrange the equation $\theta(t_0) = 2(t_0-1)$ to get
        \[t_0 = \IT - 2\coss{\frac{t_0-1}{2}}\]
        which is the definition of $\tau$ in Theorem~\ref{thm:lb_n2} (in the case that $\IT = \IT_2$). One can easily confirm that in the case of $\IT = \IT_2$ we have $\left.\diff{\theta}{t} \right|_{t=\tau} \approx 5.2511 > 2$ and $\tau < \frac 12(\IT+1)$. Now observe that $\theta(t,\IT)$ is a decreasing function of $\IT$ and this implies that for $\IT<\IT_2$ we have $\theta(\tau,\IT) > \theta(\tau,\IT_2)$. We can therefore conclude that the time $t_0$ must occur earlier than the time $\tau$. We note that $\tau < 2$ and, since $\IT \geq 3$, we have $\tau < \frac 12 (\IT+1)$. Since $t_0 < \tau$ we can conclude that $t_0 < \frac 12 (\IT+1)$. 
       
        The second part of the claim follows directly from Lemma~\ref{lm:dtheta} where we show that $\diff{\theta}{t} > 2$ for all $t$ satisfying $\IT-2 < t < \IT$.

        As the queen must leave the perimeter at the time $t_0 < \tau$, by Lemma~\ref{lm:lb2_1}, we can say that the queen must be able to return to the perimeter and explore before the algorithm terminates. Thus, consider the smallest time $t_1 > t_0$ at which the queen may return to the perimeter. In order for the queen to be on the perimeter we will still need the arc $\theta(t,\IT)$ to be completely explored. However, in this case it may be possible that $t_1 \geq \frac 12 (\IT+1)$ and as such the queen could have explored at most a length $t_0-1$ of $\theta(t,\IT)$ at the time $t_1$. We can therefore conclude that $t_1$ will satisfy $\theta(t_1) = 2(t_1-1)+y$ with $y = 0$ if $t_1 < \frac 12 (\IT+1)$, and $y \leq t_0-1$ if $t_1 \geq \frac 12 (\IT+1)$. Writing the equation $\theta(t_1) = 2(t_1-1)+y$ in full and rearranging we find that
        \[t_1 = \IT - 2\coss{\frac 12(t_1-1)+\frac 14 y}.\]
        We will now consider the cases $t_1 < \frac 12(\IT+1)$ and $t_1 \geq \frac 12(\IT+1)$ separately.\\

        \noindent \textbf{Case 1: $t_1 < \frac 12 (\IT+1)$}\\
            In this case $t_1$ can be observed to satisfy the same equation as $t_0$. We claim that this is not possible if $t_1 > t_0$. Indeed, by Lemma~\ref{lm:dtheta} we have $\diff{\theta}{t} > 2$ and the arc $\theta(t,\IT)$ will always grow at a rate larger than the servants alone can explore. Thus, a solution to the equation $\theta(t_1) = 2(t_1-1)$ with $t_1 > t_0$ does not exist. This implies that the queen can explore a maximum subset of the perimeter of total length $t_0-1 < \tau-1$ if $t_1 < \frac 12 (\IT+1)$.\\

        \noindent \textbf{Case 2: $t_1 \geq \frac 12 (\IT+1)$}\\
            In this case $t_1$ satisfies
            \[t_1 = \IT - 2\coss{\frac 12 (t_1-1) + \frac 14 y}.\]
            Although it can be confirmed that $\diff{t_1}{y} > 0$ (see Lemma~\ref{lm:impl1}) we will show that, even when $t_1$ is as large as possible (i.e. $y = t_0-1$), we cannot have $t_1 \geq \frac 12(\IT+1)$. Thus we assume that $t_1$ satisfies
            \[t_1 = \IT - 2\coss{\frac 12 (t_1-1) + \frac 14 (t_0-1)}.\]
            Now write $t_1 = t_1(\IT)$ as a function of $\IT$ and note that, by Lemma~\ref{lm:impl1}, we have $\diff{t_1}{\IT} > \frac 12$. Using this we can say that $t_1(\IT_2) - t_1(\IT) > \frac 12(\IT_2-\IT)$. By definition of $\IT_2$ we have $t_1(\IT_2) = \frac 12 (\IT_2+1)$ and we can therefore write $\frac 12 (\IT_2+1) - t_1(\IT) > \frac 12 (\IT_2-\IT)$. Rearranging this inequality gives us $t_1(\IT) < \frac 12 (\IT+1)$ which contradicts with our assumption that $t_1 \geq \frac 12 (\IT+1)$ and we must conclude that $t_1 < \frac 12 (\IT+1)$. This concludes the proof.
        \end{proof}

        At this point the proof of Theorem~\ref{thm:lb_n2} is rather trivial.
        
\begin{proof}(Theorem~\ref{thm:lb_n2})\\
            Assume that we have an algorithm with evacuation time $\IT < \IT_2$. Then, by Lemma~\ref{lm:lb2_1}, the queen must explore a subset of the perimeter of length at least $y > 1.0217$. However, by Lemma~\ref{lm:lb2_2}, the queen can only explore a subset of the perimeter of length $y < \tau-1 \approx 0.7815$ if $\IT < \IT_2$. It is therefore not possible for the queen to evacuate in time less than $\IT_2$.
        \end{proof}

    \subsubsection{Lower bound for $n=3$}
    The main result of this section is given below:
    \begin{theorem}\label{thm:lb_n3}
        For $n=3$ and any algorithm the queen cannot be evacuated in time less than $\IT_3$ which is the solution to the equations
        \[\tau = \IT_3 - 2\coss{\frac 34 (\tau-1)}\]
        \[t_* = 1 + \frac{2}{3}\acoss{\frac{-2}{3}} - \frac{(\tau-1)}{3}\]
        \[\IT_3 = t_* + \sinn{\frac{3(t_*-1)+(\tau-1)}{2}}\]
        Solving these equations numerically gives $\tau \approx 1.2319$, $t_* \approx 2.4564$, and $\IT_3 \approx 3.2017$.
    \end{theorem}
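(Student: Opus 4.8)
The plan is to adapt the framework of the $n=2$ lower bound (Theorem~\ref{thm:lb_n2}), replacing its global ``minimum perimeter to be searched'' step by a localized argument anchored at the critical time $t_*$. As the outline in Section~\ref{sec:lb23} stresses, for $n=3$ the three servants alone can sweep the whole circle by time $1+\tfrac{2\pi}{3}\approx 3.094<\IT_3$, so one cannot force the queen to search merely to cover the perimeter. Instead I would fix an arbitrary algorithm, assume for contradiction that its evacuation time satisfies $\IT<\IT_3$, and balance (i) the maximum length of perimeter the queen can possibly search before $t_*$ against (ii) the length she is forced to search before $t_*$ in order to keep every still-unexplored point within her reach. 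Throughout I reuse the observation (exactly as in the proof of Lemma~\ref{lm:lb2_2}) that if the queen evacuates by time $\IT$, then at every time $t$ each point of $\UC$ at distance more than $\IT-t$ from the queen must already have been searched, together with Lemmas~\ref{lm:theta} and~\ref{lm:dtheta}.

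For the maximum-search half I would argue as for $n=2$, but now with three servants. If the queen is on the perimeter at a time $t$ with $\IT-2<t<\tfrac12(\IT+1)$, then by Lemma~\ref{lm:theta} the arc $\theta(t,\IT)=4\acoss{\frac{\IT-t}{2}}$ must already be searched, and---since the queen took a unit of time to reach the perimeter and cannot yet have contributed to that arc---only the three servants can have done so, forcing $\theta(t,\IT)\le 3(t-1)$. Rearranging the tight case $\theta(\tau,\IT)=3(\tau-1)$ gives precisely $\tau=\IT_3-2\coss{\frac34(\tau-1)}$, the first equation. Because $\tau$ lies in the window where $\diff{\theta}{t}>3$ (Lemma~\ref{lm:dtheta}; numerically $\IT-2<\tau<\IT-\tfrac23\sqrt5$), the arc grows faster than three servants can search it, so the queen must abandon the perimeter at some $t_0\le\tau$, having searched at most $t_0-1\le\tau-1$. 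A case analysis mirroring Cases~1 and~2 of Lemma~\ref{lm:lb2_2}---ruling out a profitable return to the perimeter before $t_*$, in particular in the regime $t\ge\tfrac12(\IT+1)$---then yields that the queen searches a total perimeter length at most $\tau-1$.

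For the minimum-search half I would work at the critical time $t_*$. By the previous paragraph the queen has searched at most $\tau-1$ and the three servants at most $3(t_*-1)$, so the total searched length is at most $L:=3(t_*-1)+(\tau-1)$ and the unexplored set has measure at least $2\pi-L$. Taking a single unexplored arc as the worst configuration, and using that $L>\pi$, there exist two still-unexplored points at chord distance at least $2\sinn{L/2}$; since the queen must lie within $\IT-t_*$ of both, the triangle inequality gives $\IT-t_*\ge\sinn{L/2}$, i.e. $\IT\ge t_*+\sinn{\frac{3(t_*-1)+(\tau-1)}{2}}$. I would then choose $t_*$ to maximize this bound: differentiating $t+\sinn{L(t)/2}$ with $L(t)=3(t-1)+(\tau-1)$ gives $1+\tfrac32\coss{L/2}$, which vanishes exactly when $\coss{L/2}=-\tfrac23$, that is $L=2\acoss{-\frac23}$, which is the equation $t_*=1+\frac23\acoss{-\frac23}-\frac{\tau-1}{3}$; the maximized value is then $t_*+\sinn{L/2}=\IT_3$, the third equation. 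Hence $\IT\ge\IT_3$, contradicting $\IT<\IT_3$.

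The three displayed equations of the statement are therefore exactly the three conditions produced above (the perimeter-covering tightness $\theta(\tau)=3(\tau-1)$, the stationarity $\coss{L/2}=-\tfrac23$, and the optimal value), and a short monotonicity check---that $\tau$ increases with $\IT$, so that shrinking $\IT$ only enlarges the unexplored arc---makes the contradiction go through for every $\IT<\IT_3$ rather than only at equality. I expect the main obstacle to be the ``queen cannot return'' step: as in Lemma~\ref{lm:lb2_2}, one must exclude the queen re-reaching the perimeter and searching additional arc after being forced off at $\tau$ but before $t_*$, which requires the implicit-differentiation estimates of the $n=2$ analysis (the analog of Lemma~\ref{lm:impl1}) adapted to three servants. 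By comparison, the geometric step producing two far-apart unexplored points and the consistency check $\tau<t_*<\IT_3$ are routine.
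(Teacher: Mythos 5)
Your overall plan coincides with the paper's proof: the same decomposition into (i) an upper bound of $\tau-1$ on the perimeter the queen can search before $t_*$ (the paper's Lemma~\ref{lm:lb3_2}) and (ii) the complementary claim that if she has searched at most $\tau-1$ by time $t_*$, then the chord/triangle-inequality argument over the unexplored set forces $\IT \ge t_* + \sinn{\frac{3(t_*-1)+(\tau-1)}{2}} = \IT_3$ (the paper's Lemmas~\ref{lm:lb3_0} and~\ref{lm:lb3_1}, built on Lemma~\ref{lm:chord}). Your explicit derivation of $t_*$ as the stationary point of $t+\sinn{L(t)/2}$, i.e. $\coss{L/2}=-\frac 23$, is a genuine improvement in exposition: the paper states $t_*$ without explaining where it comes from.

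The one step where your plan deviates is the one you flag as the main obstacle, and it is precisely there that your proposed machinery fails. For $n=3$ the return condition reads $t_1=\IT-2\coss{\frac 34(t_1-1)+\frac 14 y}$, so the analogue of Lemma~\ref{lm:impl1} gives $\diff{t_1}{\IT}=\left(1-\frac 32\sinn{\frac 34(t_1-1)+\frac 14 y}\right)^{-1}$; unlike the $n=2$ quantity $\left(1-\sinn{\cdot}\right)^{-1}\ge \frac 12$, this has no positive lower bound and is in fact negative throughout the relevant range (the argument of the sine is roughly $0.9$--$1.2$ rad there, so $\frac 32 \sinn{\cdot}>1$). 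Moreover, the anchor that closes Case~2 for $n=2$ --- the identity $t_1(\IT_2)=\frac 12(\IT_2+1)$, which holds by the very definition of $\IT_2$ --- has no $n=3$ counterpart, because your $t_*$ is defined by stationarity, not as a return time. The paper closes this step instead by a direct numerical comparison at $t_*$: since $\IT_3-t_*=\sqrt 5/3$, any $\IT<\IT_3$ gives $\theta(t_*,\IT)>4\acoss{\frac{\sqrt 5}{6}}\approx 4.7556$, whereas by time $t_*$ the robots can have covered at most $3(t_*-1)+(\tau-1)\approx 4.6010$ of that arc, so the queen cannot be back on the perimeter by $t_*$. (One also needs your Case~1 on $(t_0,\frac 12(\IT+1))$, where for $n=3$ the bound $\diff{\theta}{t}>3$ of Lemma~\ref{lm:dtheta} holds only up to $\IT-\frac 23\sqrt 5<\frac 12(\IT+1)$, so the growth-rate argument alone does not suffice and an endpoint evaluation is again required; past $\IT-\frac 23\sqrt 5$ the gap $\theta(t)-3(t-1)-y$ is decreasing, which is exactly why checking it at $t_*$ settles the whole interval.) Replacing your Lemma~\ref{lm:impl1} adaptation by this direct evaluation --- fully compatible with the monotonicity-in-$\IT$ observation you already make --- turns your outline into the paper's proof.
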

    As before, $\tau$ represents the beginning of an interval of time during which the queen cannot be located on the perimeter. In this case, however, $t_*$ is not the first time at which it is possible for the queen to return to the perimeter. Instead it represents a particularly critical time of any algorithm with $n=3$ at which the evacuation time is maximized (although it will happen that $t_*$ occurs before the queen can return to the perimeter). We will show that the queen must explore a subset of the perimeter with total length more than $\tau-1$ before the time $t_*$ in order to evacuate in time less than $\IT_2$.
    
    We begin with a lemma that was first introduced in \cite{CGGKMP}:
    \begin{lemma}\label{lm:chord}
        Consider a perimeter of a disk whose subset of total length $u + \epsilon > 0$ has not been explored for some $\epsilon > 0$ and $\pi \geq u > 0$. Then there exist two unexplored boundary points between which the distance along the perimeter is at least $u$.
    \end{lemma}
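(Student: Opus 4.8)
The plan is to argue by contradiction, reading ``the distance along the perimeter'' between two points as the length of the shorter of the two arcs joining them. Normalise the disk to the unit circle of circumference $2\pi$ and let $U$ be the unexplored set, so that $|U|=u+\epsilon$. Assume, toward a contradiction, that every pair of unexplored points is at arc-distance strictly less than $u$; the goal is then to deduce $|U|\le u$, which contradicts $|U|=u+\epsilon>u$.

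The crux is a complementary observation. Fix an unexplored point $\theta$. A point $\psi$ lies at arc-distance at least $u$ from $\theta$ exactly when $\psi\in\theta+I$, where $I=[u,\,2\pi-u]$ is an arc of length $2\pi-2u$ (here the hypothesis $u\le\pi$ is used, so that $I$ is a genuine, possibly degenerate, arc). By the contradiction hypothesis none of these points is unexplored, so the entire arc $\theta+I$ is explored. Letting $\theta$ range over $U$, the explored set contains the Minkowski sum $U+I$, and since $U+I$ is explored while $U$ is not, the two sets are disjoint, whence $|U|+|U+I|\le 2\pi$.

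It then remains to bound $|U+I|$ from below, for which I would prove the circle analogue of the one-dimensional Brunn--Minkowski / Cauchy--Davenport inequality in the special case of adding an arc, namely $|U+I|\ge\min(|U|+|I|,\,2\pi)$. For $U$ a finite union of arcs this is elementary: writing $G_1,\dots,G_n$ for the complementary gaps of $U$, a translate $x+I$ can miss $U$ only when it fits inside a single gap, so the complement of $U+I$ has measure $\sum_i\max(|G_i|-|I|,0)$; combined with $\sum_i|G_i|=2\pi-|U|$ this yields $|U+I|=2\pi-\sum_i\max(|G_i|-|I|,0)\ge\min(|U|+|I|,2\pi)$, and the general measurable case follows by approximation. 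Feeding $|I|=2\pi-2u$ into the disjointness bound gives $2|U|+(2\pi-2u)\le 2\pi$, hence $|U|\le u$, the required contradiction.

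I expect this measure estimate to be the only real difficulty, and specifically its behaviour for $u>2\pi/3$. When $u\le 2\pi/3$ one can finish by hand: fixing $x_0\in U$ the hypothesis first confines $U$ to the length-$2u$ arc centred at $x_0$, and a short check on its two extreme points then forces $U$ into a single arc of length at most $u$, already incompatible with $|U|=u+\epsilon$. Once $u>2\pi/3$, however, $U$ may ``wrap around'' the circle---its two extreme points can be close through the far side---so it is no longer contained in a short arc, and naively intersecting finitely many length-$2u$ arcs yields only the loose bounds $4u-2\pi$, then $6u-4\pi$, and so on, which never reach $u$. The shift/Minkowski argument is precisely what dispatches this wraparound regime uniformly and tightly, and organising it cleanly---in particular the degenerate case in which $U+I$ fills the whole circle---is the main thing to get right.
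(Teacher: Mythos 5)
Your argument is correct, but note that the paper itself never proves Lemma~\ref{lm:chord}: it is imported verbatim from \cite{CGGKMP}, so there is no in-paper proof to compare against, and your write-up is a genuinely self-contained alternative. Your route reduces the lemma to an isodiametric statement (a subset of the circle with all pairwise arc-distances below $u$ has measure at most $u$), obtained from the disjointness of $U$ and the sumset $U+I$ with $I=[u,\,2\pi-u]$, combined with the circle analogue of the Cauchy--Davenport/Raikov inequality $|U+I|\ge\min\left(|U|+|I|,\,2\pi\right)$, which you prove for an arc summand by counting gaps. This is exactly the right mechanism: as you observe, it is what disposes of the wrap-around regime $u>2\pi/3$, where the naive ``confine $U$ to a length-$2u$ arc around a fixed unexplored point'' argument genuinely fails, and your handling of the degenerate situations ($u=\pi$, so $I$ is a single point, and the case $U+I$ equal to the whole circle) is sound.

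One step needs repair, though it is routine. The closing claim that ``the general measurable case follows by approximation'' cannot be run with finite unions of arcs as \emph{inner} approximants: a measurable set of positive measure may contain no arc whatsoever (e.g.\ a fat Cantor set). Two standard fixes: (i) inner-approximate $U$ by a compact set $K$ with $|K|\ge|U|-\delta$; the complement of $K$ has countably many gaps, your gap computation applies verbatim (the sum $\sum_i\max\left(|G_i|-|I|,0\right)$ is now countable, and the estimate by $(2\pi-|K|)-|I|$ still only requires that at least one gap exceed $|I|$), and since $K+I$ is compact and contained in $U+I\subseteq U^c$, the disjointness bound $|U|+|K+I|\le 2\pi$ holds without ever needing measurability of $U+I$ itself; letting $\delta\to 0$ finishes the proof. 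Or (ii) observe that in this paper's application the unexplored set is open---the complement of the compact set traced by finitely many robots---hence a countable disjoint union of open arcs, for which inner approximation by finitely many of its component arcs is immediate. With either patch your proof is complete.
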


    This next lemma is used to determine the critical time $t_*$.
    \begin{lemma}\label{lm:lb3_0}
        Consider an evacuation algorithm with $n$ servants and assume that at the time $t$ the queen has explored a total subset of the perimeter of length $y$. Then, for $x$ and $y$ satisfying $1+\frac{\pi-y}{n} \leq t \leq 1+\frac{2\pi-y}{n}$, it takes time at least $\IT = t+\sinn{\frac{n(t-1)+y}{2}}$ to evacuate the queen.
    \end{lemma}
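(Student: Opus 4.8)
The plan is to combine a counting bound on how much perimeter can be explored by time $t$ with the chord Lemma~\ref{lm:chord} and an indistinguishability argument for the exit's placement. First I would bound the total explored length: every robot needs time at least $1$ to first reach $\UC$ (it starts at $O$, distance $1$ away) and moves at speed $1$, so by time $t$ any single robot has covered at most $t-1$ of the perimeter. Hence the $n$ servants together explore at most $n(t-1)$, and adding the queen's contribution $y$, the total explored length is at most $n(t-1)+y$. Consequently the unexplored portion of $\UC$ has length at least $u:=2\pi-n(t-1)-y$. The two hypotheses on $t$ are precisely the conditions that make this usable: $t\ge 1+\frac{\pi-y}{n}$ is equivalent to $n(t-1)+y\ge\pi$, i.e. $u\le\pi$, and $t\le 1+\frac{2\pi-y}{n}$ is equivalent to $u\ge 0$; so $0\le u\le\pi$, matching the range required by Lemma~\ref{lm:chord} (the degenerate case $u=0$ gives the trivial bound $\IT\ge t$).

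Next I would apply Lemma~\ref{lm:chord} to the unexplored arc to extract two unexplored boundary points $P_1,P_2$ whose separation along the perimeter is at least $u$. Since these lie on the unit circle and $u\le\pi$, their Euclidean (chord) distance is at least $2\sinn{u/2}$, using that $s\mapsto 2\sinn{s/2}$ is increasing on $[0,\pi]$. I would then run the standard adversarial/indistinguishability step: because communication is wireless and neither $P_1$ nor $P_2$ has been visited by time $t$, the two scenarios ``exit at $P_1$'' and ``exit at $P_2$'' are indistinguishable to every robot up to time $t$, so the queen occupies the same position $Q$ at time $t$ in both. In the scenario where the exit is at $P_i$ the queen needs at least $\norm{Q-P_i}$ further time, so the worst of the two scenarios forces
\[
\IT \;\ge\; t+\max\{\norm{Q-P_1},\,\norm{Q-P_2}\}\;\ge\; t+\tfrac12\norm{P_1-P_2}\;\ge\; t+\sinn{u/2},
\]
the middle inequality being the triangle inequality $\norm{P_1-P_2}\le\norm{Q-P_1}+\norm{Q-P_2}$. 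Finally, since $u=2\pi-n(t-1)-y$, we have $\sinn{u/2}=\sinn{\pi-\frac{n(t-1)+y}{2}}=\sinn{\frac{n(t-1)+y}{2}}$, which is exactly the claimed bound $\IT\ge t+\sinn{\frac{n(t-1)+y}{2}}$.

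The main technical obstacle is the mild mismatch with the exact statement of Lemma~\ref{lm:chord}, which guarantees two far points only when the unexplored length is $u+\epsilon$ strictly larger than $u$, whereas my bound gives unexplored length $\ge u$. I would handle this by applying the lemma with $u-\delta$ for arbitrarily small $\delta>0$ (legitimate since the unexplored length then strictly exceeds $u-\delta$ and $0<u-\delta<\pi$), obtaining points at chord distance $\ge 2\sinn{(u-\delta)/2}$, and letting $\delta\to0$; by compactness of $\UC$ the resulting limiting pair realizes separation at least $u$, giving the bound $\sinn{u/2}$ without loss. A secondary point to check cleanly is the claim that a robot covers at most $t-1$ of the perimeter even if it repeatedly leaves and re-enters $\UC$; this follows because the measure of perimeter points it visits is bounded by its total time on $\UC$, which is at most $t$ minus the (at least $1$) time used to first reach the circle.
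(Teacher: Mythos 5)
Your proposal is correct and follows essentially the same route as the paper's own proof: bound the explored perimeter by $n(t-1)+y$, invoke Lemma~\ref{lm:chord} to obtain two unexplored points at perimeter distance (essentially) $2\pi-n(t-1)-y$, let the adversary place the exit at the endpoint farther from the queen so the triangle inequality yields the half-chord bound $\sinn{\frac{n(t-1)+y}{2}}$, and dispose of the $\epsilon$ slack in Lemma~\ref{lm:chord} by a limiting argument. Your write-up is in fact slightly more explicit than the paper's (spelling out the triangle-inequality step, the indistinguishability argument, and the role of the two hypotheses on $t$ as exactly $0\le u\le\pi$), and your $\delta\to 0$ handling can be simplified to taking the limit of the numerical bound $\IT\ge t+\sinn{(u-\delta)/2}$ by continuity, with no compactness needed.
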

    \begin{proof}
        Consider an algorithm with evacuation time $\IT$ and with $n$ servants. Then, at the time $t$, the total length of perimeter that the robots have explored is at most $n(t-1) + y \geq \pi$ (since each robot may search at a maximum speed of one, the queen has explored a subset of length $y$, and the robots need at least a unit of time to reach the perimeter). Thus, by Lemma~\ref{lm:chord}, there exists two unexplored boundary points between which the distance along the perimeter is at least $2\pi-n(t-1)-y-\epsilon$ for any $\eps>0$. The chord connecting these points has length at least $2\sinn{\pi - \frac{n(t-1)+y}{2}-\frac{\eps}{2}}$ and an adversary may place the exit at either endpoint of this chord. The queen will therefore take at least $\sinn{\pi - \frac{n(t-1)+y}{2}-\frac{\eps}{2}}$ more time to evacuate and the total evacuation time will be at least $t+\sinn{\pi - \frac{n(t-1)+y}{2}-\frac{\eps}{2}}$. As this is true for any $\eps > 0$ taking the limit $\eps \rightarrow 0$ we obtain $\IT \geq t+\sinn{\pi-\frac{n(t-1)+y}{2}} = t+\sinn{\frac{n(t-1)+y}{2}}$.
    \end{proof}

    In the next two lemmas we show that in order to evacuate in time $\IT < \IT_2$ the queen must explore a length of the perimeter greater than $\tau-1$ and then demonstrate that this is not possible.
    \begin{lemma}\label{lm:lb3_1}
        Define $\tau$ and $t_*$ as in Theorem~\ref{thm:lb_n3}. Then, for $n = 3$ and any evacuation algorithm with $\IT < \IT_3$, the queen must explore a subset of $\UC$ with total length $y > \tau-1$ before the time $t_*$.
    \end{lemma}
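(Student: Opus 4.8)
The plan is to prove the lemma by a direct application of the adversarial estimate of Lemma~\ref{lm:lb3_0} at the single critical instant $t=t_*$, combined with the third defining equation of $\IT_3$. I would argue by contradiction: assume an algorithm with $\IT<\IT_3$ in which the queen explores a subset of the perimeter of total length $y\le\tau-1$ by time $t_*$. Since exploration is monotone in time, this $y$ is exactly the queen's cumulative exploration at $t_*$, and each of the three servants can have covered at most $t_*-1$, so the total explored perimeter at $t_*$ is at most $3(t_*-1)+y$; this is precisely the quantity Lemma~\ref{lm:lb3_0} is built to exploit.

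First I would verify that the hypotheses of Lemma~\ref{lm:lb3_0} hold at $t=t_*$ for every admissible $y\in[0,\tau-1]$, namely $1+\frac{\pi-y}{3}\le t_*\le 1+\frac{2\pi-y}{3}$; with $t_*\approx2.4564$ and $\tau-1\approx0.2319$ these are routine endpoint checks. Granting them, Lemma~\ref{lm:lb3_0} gives
\[\IT \ge t_* + \sinn{\tfrac{3(t_*-1)+y}{2}}.\]
Next I would establish the monotonicity that drives the estimate: as $y$ ranges over $[0,\tau-1]$ the argument $\tfrac{3(t_*-1)+y}{2}$ remains inside $(\pi/2,\pi)$, so the sine term is strictly decreasing in $y$. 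Consequently $y\le\tau-1$ yields
\[\IT \ge t_* + \sinn{\tfrac{3(t_*-1)+(\tau-1)}{2}} = \IT_3,\]
the last equality being exactly the third equation defining $\IT_3$ in Theorem~\ref{thm:lb_n3}. This contradicts $\IT<\IT_3$, so the queen must have explored strictly more than $\tau-1$ before $t_*$.

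The hard part is not the algebra but pinning down why $t_*$ is the correct instant to test. I would note that $t_*$ is exactly the stationary point of $t\mapsto t+\sinn{\frac{3(t-1)+(\tau-1)}{2}}$, since setting its derivative $1+\tfrac32\coss{\frac{3(t-1)+(\tau-1)}{2}}$ to zero recovers $t_*=1+\frac23\acoss{-\frac23}-\frac{\tau-1}{3}$. This simultaneously explains the origin of the $t_*$ equation, guarantees that the argument of the sine sits in the decreasing regime used above, and certifies that $t_*$ maximizes the evacuation lower bound at exploration level $\tau-1$, so that no earlier or later test time would give a cleaner bound. The remaining care is purely bookkeeping of the validity window across $y<\tau-1$; the queen's movements after $t_*$, and in particular her leaving the perimeter near $\tau$, are irrelevant here and are instead the business of the companion maximum-exploration lemma.
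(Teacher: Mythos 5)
Your proposal is correct and follows essentially the same route as the paper's proof: a contradiction argument that applies Lemma~\ref{lm:lb3_0} at $t=t_*$, invokes the defining equation $\IT_3 = t_* + \sinn{\frac{3(t_*-1)+(\tau-1)}{2}}$, and uses that the sine's argument lies in $[\pi/2,\pi]$ so the bound is monotone in $y$; the paper merely arranges the same inequalities in the reverse order (deriving $y>\tau-1$ from $\IT<\IT_3$ rather than $\IT\ge\IT_3$ from $y\le\tau-1$). Your extra observation that $t_*$ is the stationary point of $t\mapsto t+\sinn{\frac{3(t-1)+(\tau-1)}{2}}$ is a nice motivation the paper leaves implicit, but it is not needed for the argument.
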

    \begin{proof}
        Consider an algorithm with evacuation time $\IT < \IT_3$. We make the assumption that the queen has only explored a subset of total length $y < \tau-1$ at the time $t_*$ and show that this leads to a contradiction.

        Observe that $t_*$ satisfies $1 + \frac{\pi-y}{3} \leq t_* \leq 1 + \frac{2\pi-y}{3}$ for all $y$ satisfying $0 \leq y \leq \tau-1$ and thus, by Lemma~\ref{lm:lb3_0}, we can write
        \[\IT \geq t_* + \sinn{\frac{3(t_*-1) + y}{2}}.\]
        Since $\IT < \IT_3$ we also have
        \[\IT_3 > t_* + \sinn{\frac{3(t_*-1) + y}{2}}.\]
        Since $\IT_3 = t_* + \sinn{\frac{3(t_*-1)+(\tau-1)}{2}}$ we further have
        \[\sinn{\frac{3(t_*-1)+(\tau-1)}{2}} > \sinn{\frac{3(t_*-1)+y}{2}}.\]
        Finally, since $t_* \geq 1 + \frac{\pi-y}{3}$ we know that $\sinn{\frac{3(t_*-1)+y}{2}}$ is a decreasing function of its argument and thus we get
        \[\frac{3(t_*-1)+(\tau-1)}{2} < \frac{3(t_*-1)+y}{2}\]
        which implies that $y > \tau-1$ which contradicts with our assumption that $y < \tau-1$.
    \end{proof}

    \begin{lemma}\label{lm:lb3_2}
        Define $\tau$ and $t_*$ as in Theorem~\ref{thm:lb_n3}. Then, for $n = 3$ and any evacuation algorithm with $\IT < \IT_3$, the queen cannot explore a subset of the perimeter with length $y > \tau-1$ before the time $t_*$.
    \end{lemma}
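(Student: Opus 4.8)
The plan is to mirror the structure of the $n=2$ argument (Lemma~\ref{lm:lb2_2}), adapted to four robots and to the fact that here the relevant threshold is the critical time $t_*$ rather than a return time. Throughout I fix an algorithm with $\IT < \IT_3$ and use the standing observation that at any time $t<\IT$ every point of $\UC$ lying at distance more than $\IT-t$ from $\queen$ must already have been searched. By Lemma~\ref{lm:theta}, whenever $\queen$ sits on the perimeter at a time $t$ with $\IT-2<t$, the antipodal arc $\theta(t,\IT)=4\acoss{\frac{\IT-t}{2}}$ lies entirely at distance more than $\IT-t$ from $\queen$ and must therefore be fully explored in order for her to be there.

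First I would show that $\queen$ is forced off the perimeter at a time $t_0\le\tau$. The key point is that $\queen$ cannot herself have contributed to the far arc when $t<\frac12(\IT+1)$: a point of the far arc is at distance more than $\IT-t$ from her current position, yet since she needed a full unit of time to reach the circle she can have searched points at most a distance $t-1<\IT-t$ from where she now stands. Hence on this window $\theta(t,\IT)$ must be covered by the three servants alone, giving $\theta(t,\IT)\le 3(t-1)$. By the defining equation $\tau=\IT_3-2\coss{\frac34(\tau-1)}$ of Theorem~\ref{thm:lb_n3}, $\tau$ is exactly the first time at which $\theta(t,\IT_3)=3(t-1)$; using that $\theta(t,\IT)$ is decreasing in $\IT$ (so the crossing only moves earlier for $\IT<\IT_3$) together with $\diff{\theta}{t}>3$ near the crossing (Lemma~\ref{lm:dtheta}, applicable since $\tau<\IT_3-\frac23\sqrt5$), the far arc outgrows the servants' joint capacity immediately past $t_0$, so $\queen$ must abandon the perimeter. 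Consequently she has searched at most $t_0-1\le\tau-1$ by that moment.

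Next I would rule out any return to the perimeter before $t_*$, which is what upgrades the bound from ``before $t_0$'' to ``before $t_*$''. While off the perimeter $\queen$ searches nothing, so her cumulative perimeter contribution stays at most $\tau-1$. I would split the window $(t_0,t_*)$ at $\frac12(\IT+1)$. On $(t_0,\frac12(\IT+1)]$ the same travel-time estimate shows that none of $\queen$'s earlier exploration can lie in the current far arc, so $\theta(t,\IT)\le 3(t-1)$ would again be required, contradicting $\theta(t,\IT)>3(t-1)$ there. On $(\frac12(\IT+1),t_*)$ her prior budget may now count, so being on the perimeter would force $\theta(t,\IT)\le 3(t-1)+(\tau-1)$; I would instead verify the reverse inequality throughout. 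Since $t_*$ is the maximizer of $t+\sinn{\frac{3(t-1)+(\tau-1)}{2}}$ from Lemma~\ref{lm:lb3_0} — it solves $\coss{\frac{3(t-1)+(\tau-1)}{2}}=-\frac23$, the middle equation of Theorem~\ref{thm:lb_n3} — on this interval $\diff{\theta}{t}<3$, so $\theta(t,\IT)-3(t-1)-(\tau-1)$ is decreasing; checking that it is still positive at $t=t_*$ and invoking monotonicity in $\IT$ yields positivity on the whole sub-window. Combining the two sub-windows, $\queen$ cannot be on the perimeter at any $t\in(t_0,t_*)$, so her total exploration before $t_*$ is at most $t_0-1\le\tau-1$.

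The main obstacle is the bookkeeping in the return step: one must argue carefully that $\queen$'s already-searched arc can be ``credited'' toward the current far arc only once enough time has elapsed for her to have travelled from the antipodal region to her present position. This is exactly what produces the threshold $\frac12(\IT+1)$ — if she had searched a far-arc point at some time $t'\ge 1$ and reached her present location by time $t_1$, the single leg between them already exceeds $\IT-t_1$, forcing $t_1>\frac12(\IT+1)$ — and it prevents the naive (and false) conclusion that the $\tau-1$ budget is available immediately at $\tau$, where in fact $\theta(\tau,\IT)=3(\tau-1)<4(\tau-1)$. The remaining delicacy is purely the sign analysis of $\diff{\theta}{t}-3$ on $(\frac12(\IT+1),t_*)$, which is precisely where the choice of $t_*$ as the maximizer in Lemma~\ref{lm:lb3_0} is used.
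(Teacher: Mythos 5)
Your proposal follows essentially the same route as the paper's proof: the queen is forced off the perimeter at the first crossing time $t_0 \le \tau$ of $\theta(t,\IT)=3(t-1)$ (using monotonicity of $\theta$ in $\IT$ and $\diff{\theta}{t}>3$ from Lemma~\ref{lm:dtheta}), and she cannot return before $t_*$ because the deficit $\theta(t_*,\IT)-3(t_*-1)-(\tau-1)$ is still positive, which is exactly the paper's numerical check $4\acoss{\frac{\sqrt{5}}{6}} \approx 4.7556 > 4.6010$. Your explicit split of $(t_0,t_*)$ at $\frac 12(\IT+1)$ and the monotonicity of the deficit on the second sub-window is in fact a slightly more careful rendering of what the paper asserts tersely; the only omission is the easy case $\IT<3$ (possible here since the trivial bound for $n=3$ is only $1+\pi/2$), where no crossing need exist and the paper simply takes $t_0=1$ because the queen cannot even be on the perimeter at time $1$.
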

    \begin{proof}
        As was the case for $n=2$, if the queen is to be on the perimeter at the time $t$ then all of the arc $\theta(t,\IT) = 4\acoss{\frac{\IT-t}{2}}$ must be explored. Since we have $4$ robots in total, the maximum length of arc that can be explored at any time $t$ is $4(t-1)$. However, we can again say that the queen cannot search any of the arc $\theta(t)$ if $t \leq \frac 12 (\IT+1)$. We must therefore have $\theta(t,\IT) \leq 3(t-1)$ for times $t$ that satisfy $t < \frac 12(\IT+1)$.

        Assume first that $\IT \geq 3$. We make the following claim: if $3 \leq \IT < \IT_3$ then the smallest time $t_0 > 0$ solving $\theta(t_0, \IT) = 3(t_0-1)$ satisfies $\left.\diff{\theta}{t} \right|_{t=t_0} > 3$ and $t_0 < \frac 12(\IT+1)$. If this is the case the queen will have to leave the perimeter at the time $t_0$.

        We first demonstrate that $t_0 < \frac 12 (\IT+1)$. Let us rearrange the equation $\theta(t_0, \IT) = 3(t_0-1)$ to get
        \[t_0 = \IT - 2\coss{\frac{3}{4}(t_0-1)}\]
        which is the definition of $\tau$ in Theorem~\ref{thm:lb_n3} (in the case that $\IT = \IT_3$). One can easily confirm that in the case of $\IT = \IT_3$, both $\left.\diff{\theta}{t} \right|_{t=\tau} > 3$ and $\tau < \frac 12(\IT+1)$. Now observe that $\theta(t,\IT)$ is a decreasing function of $\IT$ and this implies that for $\IT < \IT_3$ we have $\theta(\tau,\IT) > \theta(\tau,\IT_3)$. The time $t_0$ must therefore occur earlier than the time $\tau$. We note that $\tau < 2$ and, since we are assuming that $\IT \geq 3$, we have $\tau < \frac 12 (\IT+1)$. Since $t_0 < \tau$ we can finally conclude that $t_0 < \frac 12 (\IT+1)$. 
       
        The second part of the claim follows from Lemma~\ref{lm:dtheta} if we can show that $t_0 < \IT-\frac 23 \sqrt{5}$. We note that $\IT \geq 3$ and thus $\IT-\frac 23 \sqrt{5} \geq 1.5093$. Since $\tau \approx 1.2319$ and $t_0 < \tau$ we can clearly see that $t_0 < \IT-\frac 23 \sqrt{5}$. 
        
        If $\IT < 3$ then it should be obvious that the queen cannot even be at the perimeter at the time $t=1$. Thus, in this case, we take $t_0 = 1$.

        Since the queen must leave the perimeter at the time $t_0 < \tau$, by Lemma~\ref{lm:lb3_1}, we know that the queen must be able to return to the perimeter and explore before the time $t_*$. We claim that this is not possible. Indeed, observe that the queen cannot return to the perimeter until the earliest time $t > t_0$ at which $\theta(t) = 3(t-1)+y$ (where we have set $y < \tau-1$ as the length of the arc $\theta(t)$ explored by the queen). Thus, in order for the queen to have returned to the perimeter before the time $t_*$ we must have $\theta(t_*) \leq 3(t-1)+y$. However, since $\IT < \IT_3$ we have
        $\theta(t_*) = 4\acoss{\frac{\IT-t_*}{2}} > 4 \acoss{\frac{\IT_3 - t_*}{2}}$.
        We note that $\IT_3 - t_* = \sinn{\frac{3(t_*-1)+(\tau-1)}{2}} = \sinn{\acoss{\frac {-2}{3}}} = \sqrt{\frac 59}$ and thus
        $\theta(t_*) > 4 \acoss{\frac{\sqrt{5}}{6}} \approx 4.7556$.
        Since $\tau \approx 1.2319$, and $t_* \approx 2.4564$ we have $3(t_*-1)+y \leq 3(t_*-1) + (\tau-1) \approx 4.6010$. We can therefore see that it is not the case that $\theta(t_*) \leq 3(t-1)+y$ and thus the queen cannot have returned to the perimeter before the time $t_*$. We can finally conclude that the queen can only explore a subset of the perimeter of length $t_0-1 < \tau-1$ before the time $t_*$.
    \end{proof}

        At this point the proof of Theorem~\ref{thm:lb_n3} is trivial.

    \begin{proof}(Theorem~\ref{thm:lb_n3})\\
        Assume we have an algorithm with evacuation time $\IT < \IT_3$. Then, by Lemma~\ref{lm:lb3_1}, the queen must explore a subset of the perimeter of length at least $\tau-1$ by the time $t_*$. However, by Lemma~\ref{lm:lb3_2}, the queen can only explore a subset of the perimeter of length $y < \tau-1$ if $\IT < \IT_3$. We must therefore conclude that it is not possible for the queen to evacuate in time less than $\IT_3$.
    \end{proof}

\section{Conclusion}
\label{secconclusion}

We considered an evacuation problem concerning priority searching on the perimeter of a unit disk where only one robot (the queen) needs to find the exit. In addition to the queen, there are $n \leq 3$ other robots (servants) aiding the queen by contributing to the exploration of the disk but which do not need to evacuate. We proposed evacuation algorithms and studied non-trivial tradeoffs on the queen evacuation time depending on the number $n$ of servants. In addition to analyzing tradeoffs and improving the bounds obtained for the wireless communication model, an interesting open problem would be to investigate other models with limited communication range, e.g., face-to-face.

\bibliographystyle{plain}
\bibliography{refs}

%%
%% Bibliography
%%

%% Please use bibtex, 

%\bibliography{lipics-v2018-sample-article}

\end{document}